\let\csname ver@amsthm.sty\endcsname\relax
\let\theoremstyle\relax
\let\qedhere\relax
\newcommand{\arxiv}[1]{{\tt \href{http://arxiv.org/abs/#1}{arXiv:#1}}}
\newcommand{\floor}[1]{\left\lfloor {#1} \right\rfloor}
\newcommand{\ceiling}[1]{\left\lceil {#1} \right\rceil}
\newcommand{\old}[1]{}
\newcommand{\moniker}[1]{{\em (#1)}}
\newcommand{\cpic}[1]{\parbox[c][0.65in]{1.4in}
{\centering #1}}
\newcommand{\ind}{\one}
\newcommand{\df}{\textbf}
\newtheorem{theorem}{Theorem}[section]
\newtheorem{prop}[theorem]{Proposition}
\newtheorem{lemma}[theorem]{Lemma}
\newtheorem{corollary}[theorem]{Corollary}
\theoremstyle{remark}
\newtheorem*{remark}{Remark}
\crefname{maintheorem}{Theorem}{Theorems}
\crefname{theorem}{Theorem}{Theorems}
\crefname{lemma}{Lemma}{Lemmas}
\crefname{prop}{Proposition}{Propositions}
\crefname{corollary}{Corollary}{Corollaries}
\crefname{section}{Section}{Sections}
\crefname{figure}{Figure}{Figures}
\crefname{table}{Table}{Tables}
\crefname{definition}{Definition}{Definitions}
\numberwithin{counter}{section}
\theoremstyle{definition}
\newtheorem{definition}[theorem]{Definition}
\def\bb{\mathbf{b}}
\def\cc{\mathbf{c}}
\def\dd{\mathbf{d}}
\def\mm{\mathbf{m}}
\def\qq{\mathbf{q}}
\def\rr{\mathbf{r}}
\def\uu{\mathbf{u}}
\def\vv{\mathbf{v}}
\def\ww{\mathbf{w}}
\def\xx{\mathbf{x}}
\def\yy{\mathbf{y}}
\def\zz{\mathbf{z}}
\def\00{\mathbf{0}}
\def\11{\mathbf{1}}
\def\mm{\mathbf{m}}
\def\til{\widetilde}
\def\un{\Box}
\def\start{0}
\def\Add{{\tt adder}}
\def\Spl{{\tt splitter}}
\def\Topp{{\tt toppler}}
\def\Del{{\tt delayer}}
\def\Pre{{\tt presink}}
\def\Adds{{\tt adders}}
\def\Spls{{\tt splitters}}
\def\Topps{{\tt topplers}}
\def\Dels{{\tt delayers}}
\def\Pres{{\tt presinks}}
\def\ld{\boldsymbol{\lambda}}
\def\Proc{\CMcal{P}}  % regular mathcal
\def\ProcA{\CMcal{A}}
\def\ProcQ{\CMcal{Q}}
\def\ProcR{\CMcal{R}}
\def\ProcS{\CMcal{S}}
\def\ProcT{\CMcal{T}}
\def\ProcI{\CMcal{I}}
\def\ProcJ{\CMcal{J}}
\def\ProcG{\CMcal{G}}
\def\ProcM{\CMcal{M}}
\def\ProcH{\CMcal{H}}
\def\ProcD{\CMcal{D}}
\def\ProcU{\CMcal{U}}
\def\Net{{ \mathcal{N}}}  % "Euler" mathcal
\def\zero{\mathbf{0}}
\def\one{\mathbbm{1}}
\def\basis{\mathbf{e}}
\def\N{\mathbb{N}}
\def\Z{\mathbb{Z}}
\def\Q{\mathbb{Q}}
\newcommand{\pre}{\widehat{M}}
\renewcommand{\K}{\mathsf{L}}
\tikzstyle{abelian}=[>=angle 90,auto,node
\tikzstyle{contour}=[thick]
\tikzstyle{adder}=[plus,inner sep=0pt,minimum
\tikzstyle{symbol}=[circle,inner sep=0pt]
\tikzstyle{joint}=[circle,fill=black,minimum size=2mm,inner
 \tikzstyle{splitter}=[circle,fill=black,minimum
 \tikzstyle{toppler}=[draw,circular sector,circular sector angle=65,shape
 \tikzstyle{presink}=[spiral,minimum size=9mm,inner sep=0pt, outer sep=2pt]
 \tikzstyle{delayer}=[draw,regular polygon,regular polygon
\tikzstyle{block}=[draw,fill=black!10,thick,rectangle]
\tikzstyle{dot}=[circle,fill=black,minimum size=1mm,inner
\tikzstyle{circ}=[circle,draw,fill=white,minimum size=1mm,inner
\tikzstyle{hili}=[line width=3pt,red]
\tikzset{prime/.style={label={[label distance=-2mm]155:#1}}}
\begin{document}

\title{Abelian logic gates}

\author[Holroyd]{Alexander E. Holroyd}
\address{Alexander E. Holroyd, Microsoft Research, Redmond, WA 98052, USA.
\newline
{\tt \url{http://research.microsoft.com/~holroyd}}}
\author[Levine]{Lionel Levine}
\address{Lionel Levine, Cornell University, Ithaca, NY 14853,
USA. \newline {\tt {\url{http://www.math.cornell.edu/~levine}}}}
\author[Winkler]{Peter Winkler}
\address{Peter Winkler, Dartmouth College, Hanover, NH 03755, USA.
\newline
  {\tt \url{http://math.dartmouth.edu/~pw}}}
\thanks{The second author is supported by NSF grant
\href{http://www.nsf.gov/awardsearch/showAward?AWD_ID=1455272}{DMS-1455272} and a Sloan Fellowship. The third author is supported by NSF
grant DMS-1162172.}

\date{Revised April 9, 2018}
\keywords{abelian network, eventually periodic, finite automaton, floor
function, recurrent abelian processor} \subjclass[2010]{
68Q10, % Modes of computation (nondeterministic, parallel, interactive, probabilistic, etc.)
68Q45,  % Formal languages and automata
68Q85,  % Models and methods for concurrent and distributed computing (process algebras, bisimulation, transition nets, etc.)
90B10}   % Network models, deterministic

\begin{abstract}
An abelian processor is an automaton whose output is independent of the order
of its inputs.  Bond and Levine have proved that a network of abelian
processors performs the same computation regardless of processing order
(subject only to a halting condition).  We prove that any finite abelian
processor can be emulated by a network of certain very simple abelian
processors, which we call gates. The most fundamental gate is a {\em
toppler}, which absorbs input particles until their number exceeds some given
threshold, at which point it topples, emitting one particle and returning to
its initial state. With the exception of an {\em adder} gate, which simply
combines two streams of particles, each of our gates has only one input wire,
%%[
which sends letters (``particles") from a unary alphabet.
%%]
Our results can be reformulated in terms of the functions computed by
processors, and one consequence is that any increasing function from $\N^k$
to $\N^\ell$ that is the sum of a linear function and a periodic function can
%%[
be expressed in terms of (possibly nested) sums of floors of quotients by integers.
%%]
\end{abstract}
\maketitle

\section{Introduction}
\label{s.intro}

Consider a network of finite-state automata, each with a finite input and
output alphabet. What can such a network reliably compute if the wires
connecting its components are subject to unpredictable delays? The networks
we will consider have a finite set of $k$ input wires and $\ell$ output
wires.
%[
These wires are unary (each carries letters from a $1$-letter alphabet), and even they
%]
are subject to delays, so the network computes a function
$\N^k \to \N^\ell$: The input is a $k$-tuple of natural numbers ($\N =
\{0,1,2,\ldots\}$) indicating how many letters are fed along each input wire,
and the output is an $\ell$-tuple indicating how many letters are emitted
along each output wire.

The essential issue such a network must overcome is that the order in which
input letters arrive at a node must not affect the output. To address this
issue, Bond and Levine \cite{BL15a}, following Dhar \cite{Dha99a, Dha06},
proposed the class of \emph{abelian networks}.  These are networks each of
whose components is a special type of finite automaton called an
\emph{abelian processor}.

Certain abelian networks such as sandpile
\cite{Ost03,SD12} and rotor \cite{LP09,HP10,FL13} networks produce intricate fractal
outputs from a simple input.  Abelian networks can be used to solve certain integer programs asynchronously \cite{BL15a} and to detect graph planarity \cite{CCG14}. From the point of view of computational
complexity, predicting the final state of a sandpile on a finite simple graph can be done in polynomial time \cite{Tar88}, and in fact this problem is $\P$-complete \cite{MN99}.
%Computing the sum of two elements in the sandpile group is also $\P$-complete \cite{MM11}.
But on finite directed multigraphs, deciding whether a sandpile halts 
%can halt (equivalently, will halt) 
is already $\NP$-complete \cite{FL15}. 
%%[
(Whether a sandpile halts is independent of the order of topplings.)
%%]
For further complexity results, see \cite{MM09,MM11,CL13,HKT15,KT15,PP15}.
%The difficulty of prediction in abelian networks with underlying graph $\Z^d$ increases with the dimension $d$ \cite{MN,MM}.
%
Analogous problems on infinite graphs are undecidable:
%even in the undirected case
An abelian network whose underlying graph is $\Z^2$, or a sandpile network whose underlying graph is the product of $\Z^2$ with a finite path, can emulate a Turing machine \cite{Cai15}.

The following definition is equivalent to that in \cite{BL15a} but simpler to
check.
A \df{processor} with input alphabet $A$, %$=\{1,\ldots,k\}$,
output alphabet $B$ and state space $Q$ is a collection of \df{transition maps} and \df{output maps}
	\[ t_i : Q \to Q  \quad\text{and}\quad o_i : Q \to \N^B \]
 indexed by  $i \in A$.
The processor is \df{abelian} if
\begin{equation}\label{commute}
 t_i t_j = t_j t_i \quad\text{and}\quad
			       o_i + o_j t_i = o_j + o_i t_j
\end{equation}
for all $i,j \in A$. The interpretation is that if the processor receives
input letter $i$ while in state $q$, then it transitions to state $t_i(q)$
and outputs $o_i(q)$.  The first equation in \eqref{commute} above asserts
that the processor moves to the same state after receiving two letters,
regardless of their order. The second guarantees that it produces the same
output.  The processor is called \df{finite} if both the alphabets $A$, $B$
and the state space $Q$ are finite.  In this paper, all abelian processors
are assumed to be finite and to come with a distinguished starting state
$q^\start$ that can access all states: each $q \in Q$ can be obtained by a
composition of a finite sequence of transition maps $t_i$ applied to
$q^\start$.

We say that an abelian processor \df{computes} the function
	$ F : \N^A \to \N^B $ if inputting $\xx_a$ letters $a$ for each $a \in A$
results in the output of $(F(\xx))_b$ letters $b$ for each $b \in B$. Our
convention that the various inputs and outputs are represented by different
letters is useful for notational purposes.  An alternative viewpoint would be
to regard all inputs and outputs as consisting of indistinguishable
``particles'', whose roles are determined by which input or output wire they
pass along.

An \df{abelian network} is a directed graph with an abelian processor located
at each node, with outputs feeding into inputs according to the graph
structure, and some inputs and outputs designated as input and output wires
for the entire network.  (We give a more formal definition below in
\textsection\ref{s.network}.) An abelian network can compute a function as
follows.  We start by feeding some number of letters along each input wire.
Then, at each step, we choose any processor that has at least one letter
waiting at one of its inputs, and process that letter, resulting in a new
state of that processor, and perhaps some  letters emitted from its outputs.
If after finitely many steps all remaining letters are located on the output
wires of the network, then we say that the computation halts.

The following is a central result of \cite{BL15a},
generalizing the ``abelian property'' of Dhar~\cite{Dha90}
and Diaconis and Fulton \cite[Theorem~4.1]{DF91} (see
\cite{notes} for further background).  Provided the
computation halts, it does so
regardless of the choice of processing order.  Moreover,
the letters on the output wires and the final states of the processors are also
independent of the processing order.  Thus, a network that halts on
all inputs computes a function from $\N^k$ to $\N^\ell$
(where $k$ and $\ell$ are the numbers of input and output
wires respectively).  This function is itself of a form
that can be computed by some abelian processor, and we say
that the network \df{emulates} this processor.

The main goal of this paper is to prove a result in the opposite direction.
Just as any boolean function $\{0,1\}^A \to \{0,1\}^B$ can be
computed by a circuit of AND, OR and NOT gates, we show that any function $\N^A\to\N^B$
computed by an abelian processor can be computed by a network of simple
\emph{abelian logic gates}, specified below.  Furthermore (as in the boolean case), the
network can be made \df{directed acyclic}, which is to say that the graph has
no directed cycles. 

%[
For example, a sandpile or rotor-router process (see, e.g.~\cite{notes,HP10})
defined on a finite graph can be thought of as a computer whose input is particles
placed on vertices, and whose output constitutes particles collected at designated points.
It is thus an abelian processor, and our theorems show that it can be emulated by a directed acyclic network of simple abelian gates.
%]

\begin{theorem}
\label{t.eventually.intro} Any finite abelian processor can be emulated by a
finite directed acyclic network of \Adds, \Spls, \Topps, \Dels\,and \Pres.
\end{theorem}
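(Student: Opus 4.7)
The plan is to follow the hint given in the abstract: functions computed by finite abelian processors have a special algebraic form, namely a sum of a linear function with non-negative integer coefficients and a (bounded) correction, and such functions can be realized from integer addition and floor-of-quotient operations together with small corrections at the origin, which are precisely the operations realized by our gates.

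First, I would analyze the function $F : \N^A \to \N^B$ computed by a finite abelian processor $\mathcal{P}$ with state space $Q$. The transition maps $t_i$ generate a commutative monoid acting on $Q$; by finiteness, after some finite transient the dynamics becomes periodic in each input coordinate, with periods $m_1,\dots,m_{|A|}$. Combined with the cocycle identity $o_i + o_j t_i = o_j + o_i t_j$ (which forces the output weights to add up consistently along every path in $\N^A$), this should yield a decomposition
\[
F_b(\xx) \;=\; \sum_{a \in A} c_{a,b}\, x_a \;+\; \varphi_b(\xx) \;+\; \tau_b(\xx),
\]
where each $c_{a,b} \in \N$, $\varphi_b$ is a bounded function depending only on $\xx \bmod \mm$, and $\tau_b$ is a bounded correction supported on an initial box in $\N^A$ (the ``transient'' before the recurrent regime). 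Monotonicity of $F$ comes for free from the fact that every $o_i \geq 0$.

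Second, I would reduce to the single-output case. Splitters let me duplicate each input stream, so the $\ell$ output coordinates can be handled by $\ell$ independent sub-networks wired in parallel and concatenated.

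Third, for each single-output function I would build the network in three acyclic layers: (i) an adder tree, fed by splitters, producing $\sum_a c_{a,b} x_a$; (ii) a block computing the periodic correction $\varphi_b$ by extracting residues $x_a \bmod m_a$ (via the identity $x - m_a \lfloor x/m_a \rfloor$, where the floor is implemented by a toppler of threshold $m_a$ and the subtraction is absorbed into the subsequent adder structure), then synthesizing the finitely many values of the bounded function $\varphi_b$ from these residues using further topplers and adders as a kind of acyclic lookup table; and (iii) a delayer/presink stage that installs the finitely supported transient $\tau_b$ and any initial offset needed to make the whole decomposition agree with $F$ on the boundary of $\N^A$.

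The main obstacle will be the first step: proving that the output function of a finite abelian processor really does split as linear plus periodic plus transient. The cocycle identity forces the $o_i$ to fit together as a kind of discrete gradient, but promoting that local fit to a global decomposition, especially in the multi-input case where one must control all the periods $m_a$ simultaneously, requires careful bookkeeping. A secondary obstacle will be the acyclic implementation of the bounded periodic ``lookup table'' $\varphi_b$ of several residue variables using only floors and addition; here I expect to encode the lookup by combining residues $x_a \bmod m_a$ into a single index via a mixed-radix representation, and then reading off the appropriate value by a sum of indicator-like functions each expressible as a short acyclic composition of topplers and adders.
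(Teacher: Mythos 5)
Your Step~1 is essentially the paper's \cref{t.eventually}, and its bookkeeping is indeed manageable: the commuting condition gives idempotence of $t_i^{\lambda_i}$ for some $\lambda_i$, and the parallelogram identity (\cref{l.parallelogram}) upgrades this to the global decomposition $F = L + P$ with $L$ linear and $P$ eventually periodic. Your Step~2 (reduce to unary output via splitters) is exactly \cref{l.unaryoutput}.

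Step~3 is where the proposal breaks down, and the gap is not a detail but the central difficulty of the theorem. You want to compute the linear part $\sum_a c_{a,b}x_a$ with one sub-network and the bounded correction $\varphi_b$ (a ``lookup table'' in the residues $x_a \bmod m_a$) with another, then add. But \emph{every} function computed by a halting abelian network of these gates is increasing and maps $\zero$ to $\zero$ (by \cref{backwards} and \cref{t.eventually}), whereas the periodic correction $\varphi_b$ in $F = L + \varphi_b$ is in general neither nonnegative nor increasing. So no sub-network can output $\varphi_b$, and there is no gate that subtracts, so you also cannot realize $x_a \bmod m_a = x_a - m_a\lfloor x_a/m_a\rfloor$ or ``absorb the subtraction into the adder structure'': adders only add, topplers only take $\lfloor\cdot/\lambda\rfloor$, delayers give $(x-1)^+$, presinks give $\min(x,1)$. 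The same objection kills the indicator-based lookup: an indicator of a residue class is itself non-monotone. The linear-plus-periodic decomposition of $F$ is a good \emph{characterization} but it is the wrong \emph{decomposition for synthesis}, because neither summand is individually computable by the gates.

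The paper instead decomposes $F$ into pieces that are themselves ZILEP (hence individually computable) and combines them. Two ideas do the work: (a) the meagerization identity $F = \sum_{j<m}\lfloor (F+j)/m\rfloor$, which lowers the per-step ``roughness'' in one coordinate to $1$ while keeping each piece increasing and zero at zero; and (b) a ZILP \emph{pseudo-minimum} $M:\Z^n\to\Z$ (\cref{p.blackbox}) that agrees with $\min$ on the band $\max_j x_j - \min_j x_j \le n-1$. With roughness $1$, the slices $z\mapsto F(\yy,z)$ change by at most $1$ per step, so the interleaving identity (\cref{l.stronginterleaving}) recovers $F(\yy,z)$ as $M$ applied to $n$ layer functions, each depending on $z$ only through $\lfloor(z+n-i-1)/n\rfloor$. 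Those layers are, up to a single primed toppler on $z$ and a possible delayer (Case~1), ZILEP functions of one fewer input that further reduce (via \cref{baby-case}) to $(k-1)$-ary processors, giving an induction on the number of inputs. In short: the construction is not ``linear tree $+$ lookup $+$ transient patch'' but an induction on arity whose engine is the periodic pseudo-minimum; without an ingredient of that kind, the plan in Step~3 cannot be completed.
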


If the processor satisfies certain additional conditions, then some gates are
not needed.  An abelian processor is called \df{bounded} if the range of the
function that it computes is a finite subset of $\N^B$.
%, i.e.\ if it only emits a bounded number of letters regardless of its input.

\begin{theorem}
\label{t.bounded.intro}
Any {\em bounded} finite abelian processor can
be emulated by a finite directed acyclic network of \Adds, \Spls, \Dels\,and \Pres.
\end{theorem}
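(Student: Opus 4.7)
The plan is to derive \Cref{t.bounded.intro} from \Cref{t.eventually.intro} by replacing each \Topp\ in the general construction with a small gadget built from the remaining gate types. Applying \Cref{t.eventually.intro} to the bounded processor $\Proc$ yields a finite directed acyclic network $\Net$ of \Adds, \Spls, \Topps, \Dels, and \Pres\ that emulates $\Proc$.

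The first step is to show that, because $\Proc$ is bounded, the total flow across every wire of $\Net$ is uniformly bounded over all admissible inputs to $\Net$. I would prove this by induction along the reverse topological order of the DAG: a bounded total output on the output wires of $\Net$ forces only finitely many topplings at each \Topp\ immediately feeding them, which in turn bounds the flow on the incoming wires of that \Topp, and this bound propagates backwards. The inductive step relies on understanding each gate type: \Adds\ and \Spls\ preserve total flow (up to a factor of two), \Dels\ and \Pres\ can only alter flow by a bounded additive amount, and \Topps\ reduce it by a factor of $n$. From this I extract, for each \Topp\ with threshold $n$, a uniform bound $k$ on the number of times it ever topples.

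The second step is the local replacement. Using only \Adds, \Spls, \Dels, and \Pres, I need to build a gadget whose computed function agrees with $x \mapsto \floor{x/n}$ on the domain $\{0,1,\ldots,nk\}$. The construction I envisage uses parallel threshold detectors: for each $j \in \{1,\ldots,k\}$, a \Del\ of delay $jn-1$ followed by a \Pre\ of capacity $1$ outputs $1$ when $x \ge jn$ and $0$ otherwise. A tree of \Spls\ creates the required $k$ copies of $x$ (with any excess copies absorbed into auxiliary \Pres\ of large capacity), the indicator outputs are summed using \Adds, and the total equals $\floor{x/n}$ on the admissible domain $\{0,1,\ldots,nk\}$.

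The main obstacle I expect is the flow-bound propagation in the first step, which requires careful case analysis for each gate and a clean inductive set-up free of circular dependencies; once established, substituting the threshold-detector gadget at each \Topp\ of $\Net$ preserves both the DAG property and abelianness (since every component is abelian and the substitution is local), producing the desired network emulating $\Proc$. A secondary concern is that the precise computational semantics of \Del\ and \Pre\ defined later in the paper may differ from my guesses $x \mapsto \max(0, x-d)$ and $x \mapsto \min(x,m)$; the parallel-threshold strategy should accommodate modest variations with only re-tuning of the delay and capacity parameters.
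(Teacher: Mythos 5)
Your proposal takes a genuinely different route from the paper: you try to derive \cref{t.bounded.intro} from \cref{t.eventually.intro} by first bounding the flow on every wire and then locally replacing each toppler with a gadget that imitates it on the relevant finite domain. The paper instead proves \cref{t.bounded.intro} directly, without any recourse to \cref{t.eventually.intro} or to topplers: by Dickson's Lemma the up-set $f^{-1}(1)$ of an increasing $\{0,1\}$-valued function has finitely many minimal elements, so $f$ can be written as a finite $\vee$/$\wedge$ of threshold indicators $\ind[x_i\geq m_i]$, each of which is a chain of delayers followed by a presink; a bounded function is then a finite sum of such indicator functions (\cref{l.01,l.bounded}). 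This is much more economical than post-processing the general construction.

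The core of your argument has a gap. You assert, as the basis of the backward induction over the DAG, that ``\Dels\ and \Pres\ can only alter flow by a bounded additive amount.'' That is true for a delayer ($x\mapsto(x-1)^+$), but false for a presink ($x\mapsto\min(x,1)$): a presink can receive arbitrarily many letters while emitting at most one, so the difference between its input and output is unbounded. Consequently, bounded downstream flow gives you no information about flow upstream of a presink, and the claim that bounded network output forces bounded toppling counts at every toppler does not propagate through presinks. This is not a hypothetical worry: in the construction behind \cref{t.eventually.intro}, the topplers computing $\zeta_i(z)=\floor{(z+n-i-1)/n}$ receive the raw input $z$, which is unbounded even when the overall function $F$ is bounded; their outputs are then swallowed by presinks inside the $\ProcG_i$ sub-networks (see \cref{baby-case}). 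So exactly the configuration that breaks your induction occurs in the network you are proposing to modify. Replacing such a toppler with a delayer-presink gadget that agrees with $\floor{x/n}$ only on $\{0,\dots,nk\}$ would then produce the wrong intermediate values; you would have to additionally argue that the downstream network is insensitive to those errors, which is a separate (and delicate) structural analysis of the \cref{t.eventually.intro} construction that the proposal does not carry out. A minor additional point: presinks in the paper have fixed ``capacity'' $1$, so ``auxiliary \Pres\ of large capacity'' for absorbing excess splitter outputs do not exist; the intended mechanism is a trash edge.

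In short, the strategy is not wrong in spirit for some special networks, but the uniform flow-bound step fails as stated, and repairing it would require more work than the paper's direct, elementary proof via Dickson's Lemma.
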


An abelian processor $\Proc$ is called \df{recurrent} if for every pair of
states $q,q'$ there is a finite sequence of input letters that causes it to
transition from $q$ to $q'$.  An abelian processor that is not recurrent is
called \df{transient}.

\begin{theorem}
\label{t.recurrent.intro} Any {\em recurrent} finite abelian processor can be
emulated by a finite directed acyclic network of \Adds, \Spls\,and \Topps.
\end{theorem}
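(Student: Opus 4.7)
The plan is to exploit the algebraic structure of recurrent abelian processors to realize the function $F: \N^A \to \N^B$ computed by the processor as a composition of \Adds, \Spls\,and \Topps\,alone, eliminating \Dels\,and \Pres. The first step would be a structure theorem for $F$: since the transition monoid of a recurrent processor acts transitively on the finite state space, its \emph{stabilizer lattice} $\Lambda \subseteq \Z^A$ --- the set of integer input combinations acting trivially on $q^\start$, and hence on every state --- is a full-rank sublattice, and $F$ admits a decomposition
\begin{equation*}
F(\xx) = L\xx + g(\xx),
\end{equation*}
where $L$ is a nonnegative rational linear map and $g$ is bounded and $\Lambda$-periodic. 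Recurrence rules out the two behaviours that required \Pres\,and \Dels\,in \Cref{t.eventually.intro}: there is no initial transient (the periodicity holds from the first input), and the linear part $L$ carries all of the output growth (so no output coordinate stays bounded).

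Next, I would realize this decomposition as a directed acyclic network arranged in three forward-flowing layers. A first layer of \Spls\,and \Adds\,produces the scaled copies of the input streams needed to realize the linear map $L$. A second layer of \Topps, each fed by an integer combination of those copies through a splitter/adder subnetwork, computes floor functions of the form $\xx \mapsto \lfloor (\langle v, \xx\rangle + c)/d\rfloor$, which serve as the building blocks of $g$; the key identity here is that every $\Z$-valued $\Lambda$-periodic function on $\N^A$ is a $\Z$-linear combination of such one-dimensional floor functions. A final tree of \Adds\,sums the linear and periodic contributions into the network's output wires. Because signals flow strictly forward through the three layers, the network is acyclic, and the abelian property of \cite{BL15a} guarantees that it emulates the processor regardless of processing order.

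The main obstacle is maintaining nonnegativity throughout the network: \Topps\,produce only nonnegative counts, while a natural decomposition of $g$ as a $\Z$-linear combination of floors generally involves subtractions. I would overcome this by enlarging the linear part --- replacing $L$ by $L + L'$ for a suitable nonnegative integer matrix $L'$ and absorbing the subtractive floor terms into the periodic residual $g - L'\xx$. Because each subtractive floor term is approximately linear in $\xx$, choosing $L'$ appropriately cancels its linear growth up to a bounded residual, which can then be expressed as a nonnegative combination of further floors. A secondary technical point is the rationality of $L$, which is handled by scaling all streams through the common denominator in the first layer and recovering the correct output with a single toppler at each output wire.
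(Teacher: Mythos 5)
Your plan is to prove, directly, that a recurrent processor's function $F = L + g$ (with $g$ bounded and $\Lambda$-periodic) can be written as an integer combination of ``one-dimensional'' floor functions $\xx\mapsto\lfloor(\langle v,\xx\rangle+c)/d\rfloor$ plus a linear term, and then to realize this expression by a three-layer network. This is a genuinely different route from the paper's, which reduces to unary output and then peels off one input coordinate at a time by induction: each step uses meagerization to reduce to the case $f(\lambda_k\basis_k)=1$, and then writes $f(\xx)=\lfloor(g(x_1,\ldots,x_{k-1})+x_k+c)/\lambda_k\rfloor$ where $g$ is a ZILP function of $k-1$ variables. Unrolling that induction yields \emph{nested} floors of depth up to $k$, not a flat sum of floors of affine functions.

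The sticking point is precisely your ``key identity.'' You assert that every $\Z$-valued $\Lambda$-periodic function is a $\Z$-linear combination of such one-dimensional floor functions, but you give no argument, and the claim is far from obvious. What is true is that the \emph{$\Q$-span} of cyclic-periodic functions exhausts all periodic functions (by Fourier decomposition on $\Z^k/\Lambda$, since characters factor through cyclic quotients); getting from $\Q$-coefficients to $\Z$-coefficients, and then to $\N$-coefficients, is exactly where the difficulty lies. In fact, your claim is equivalent to asserting that every ZILP function on $\N^k$ has \emph{floor depth} $1$ in the paper's sense (\cref{s.open}). The paper's construction only achieves floor depth $k$, and the authors explicitly pose as an open problem whether that bound is sharp; if your identity were true, the answer would be a trivial ``no'' for all $k\ge 2$. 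So the identity is, at minimum, a major open question that cannot be taken for granted. Relatedly, your fix for nonnegativity is circular: after enlarging $L$ by $L'$, the residual $g - L'\xx$ is no longer periodic, and the claim that it ``can then be expressed as a nonnegative combination of further floors'' is precisely the statement being proved. The paper sidesteps both problems by never needing a flat depth-$1$ decomposition: the induction on the number of inputs and the meagerization lemma keep every floor that appears applied to a \emph{single} $\N$-valued quantity (the output of a smaller processor plus one input coordinate), so nonnegativity is automatic.
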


\subsection{The gates}
\label{s.gates}

\begin{table}
\begin{tabular}{|c|c|c|}
%\hline
%Name & Symbol & Function \\
%\hline
\hline
\multicolumn{3}{|c|}{\bf Single state} \\
  \hline
  % after \\: \hline or \cline{col1-col2} \cline{col3-col4} ...
\Add &
\cpic{
\begin{tikzpicture}[abelian]
  \node[adder] (a) {};
  \draw (a) edge[->] ++(1,0);
  \draw (a) edge[<-] ++(-1,-.6);
  \draw (a) edge[<-] ++(-1,.6);
\end{tikzpicture}
}
& $(x,y)\mapsto x+y$ \\
\hline
\Spl &
\cpic{\begin{tikzpicture}[abelian]
  \node[splitter] (a) {};
  \draw (a) edge[->] ++(1,.6);
  \draw (a) edge[->] ++(1,-.6);
  \draw (a) edge[<-] ++(-1,0);
\end{tikzpicture}}
& $x\mapsto (x,x)$  \\
\hline\hline
\multicolumn{3}{|c|}{\bf Recurrent} \\
\hline
\Topp\ $(\lambda \geq 2$) & \cpic{\begin{tikzpicture}[abelian]
  \node[toppler] (a) {$\lambda$};
  \draw[<-] (a.west) -- ++(-1,0);
  \draw[->] (a.east) -- ++(1,0);
\end{tikzpicture}}
& $\displaystyle x\mapsto \Bigl\lfloor \frac x\lambda \Bigr\rfloor$  \\
\hline
primed \Topp\ ($1 \leq q < \lambda$) &
\cpic{\begin{tikzpicture}[abelian]
  \node[toppler,prime={$q$}] (a) {$\lambda$};
  \draw[<-] (a.west) -- ++(-1,0);
  \draw[->] (a.east) -- ++(1,0);
\end{tikzpicture}}
& $\displaystyle x\mapsto \Bigl\lfloor \frac {x+q}\lambda \Bigr\rfloor$  \\
\hline \hline
\multicolumn{3}{|c|}{\bf Transient} \\
\hline
\Del &
\cpic{\begin{tikzpicture}[abelian]
  \node[delayer] (a) {};
  \draw[<-] (a.west) -- ++(-1,0);
  \draw[->] (a.east) -- ++(1,0);
\end{tikzpicture}}
& $\begin{aligned}x&\mapsto \max(x-1,0)\end{aligned}$  \\
\hline
\Pre &
\cpic{\begin{tikzpicture}[abelian]
  \node[presink] (a) {};
  \draw[<-] (a.west) -- ++(-1,0);
  \draw[->] (a.east) -- ++(1,0);
\end{tikzpicture}}
& $\begin{aligned}x&\mapsto \min(x,1)\end{aligned}$  \\
\hline
\end{tabular}
\smallskip
\caption{Abelian gates and the functions they compute.}\label{fig:gates}
\end{table}
Table~\ref{fig:gates} lists our abelian logic gates, along
with the symbols we will use when illustrating networks. A
\df{splitter} has one incoming edge, two outgoing edges,
and a single internal state.  When it receives a letter, it
sends one letter along each outgoing edge.  On the other
hand, an \df{adder} has two incoming edges, one outgoing
edge, and again a single internal state.  For each letter
received on either input, it emits one letter.  The rest of
our gates each have just one input and one output.

For integer $\lambda \geq 2$, a $\lambda$-\df{toppler} has
internal states $0,1,\ldots,\lambda{-}1$.  If it receives a
letter while in state $q<\lambda{-}1$, it transitions to
state $q{+}1$ and sends nothing.  If it receives a letter
while in state $\lambda{-}1$, it ``topples": it transitions
to state $0$ and emits one letter.  A $\lambda$-toppler
that begins in state $0$ computes the function $x \mapsto
\floor{x/\lambda}$; if begun in state $q>0$ it computes the
function $x \mapsto \floor{(x{+}q)/\lambda}$.  A toppler is
called \df{unprimed} if its initial state is $0$, and
\df{primed} otherwise.

The above gates are all recurrent. Finally, we have two transient gates whose
behaviors are complementary to one another.  A \df{delayer} has two internal
states $0,1$.  If it receives an input letter while in state $0$, it moves
permanently to state $1$, emitting nothing.  In state $1$ it sends out one
letter for every letter it receives. Thus, begun it state $0$, it computes
the function $x \mapsto \max(x{-}1,0)=(x-1)^+$. A \df{presink} has two
internal states $0,1$.  If it receives a letter while in state $0$, it
transitions permanently to state $1$ and emits one letter.  All subsequent
inputs are ignored.  From initial state $0$ it computes $x \mapsto
\min(x,1)=\ind[x>0]$.

The topplers form an infinite family indexed by the
parameter $\lambda \geq 2$.  If we allow our network to
have feedback (i.e., drop the requirement that it be
directed acyclic) then we need only the case $\lambda=2$,
and in particular our palette of gates is reduced to a
finite set.  Feedback also allows us to eliminate one
further gate, the delayer.

\begin{prop}
\label{p.feedback}
 For any $\lambda\geq 3$, a $\lambda$-\Topp\,can
 be emulated by a finite abelian network of
  \Adds, \Spls\,and $2$-\Topps.  So can a \Del.
\end{prop}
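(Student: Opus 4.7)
The plan is to construct both the delayer and each $\lambda$-toppler explicitly as finite abelian networks of adders, splitters, and $2$-topplers, using feedback (i.e., directed cycles in the network graph), and to verify correctness via the abelian property.

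I would begin with the delayer, since it is the simplest case. Take one adder $A$, one $2$-toppler $T$, and one splitter $S$, wired so that the network's single input wire and the feedback wire both enter $A$, the output of $A$ feeds $T$, the output of $T$ feeds $S$, one output of $S$ is the network output, and the other is the feedback into $A$. Write $y$ for the number of letters emitted on the network output when $x$ letters are fed on the input. Counting letters at each processor in the halted configuration yields the consistency equation $y = \lfloor (x+y)/2 \rfloor$, whose solutions are $y = x$ and $y = x-1$ for $x \geq 1$ (and $y = 0$ for $x = 0$). A short trace argument, tracking the state of $T$ and the queue of pending letters, then shows that the abelian dynamics halt at the smaller of these two fixed points, giving $y = \max(x-1,0)$, as required for a delayer.

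For the $\lambda$-toppler with $\lambda \geq 3$, I would proceed by induction on $\lambda$. The even case $\lambda = 2m$ is immediate, since $\lfloor x/(2m) \rfloor = \lfloor \lfloor x/2 \rfloor/m \rfloor$, so a $2m$-toppler is emulated by chaining a $2$-toppler in front of an $m$-toppler (a $2$-toppler if $m=2$, and by induction otherwise). It therefore suffices to handle odd $\lambda \geq 3$. For odd $\lambda$, the natural strategy is a feedback construction analogous to the delayer: chain $k$ many $2$-topplers in series to effect division by some $2^k > \lambda$, then use splitters and adders to feed back $2^k - \lambda$ copies of each output letter to the input, producing the fixed-point equation $y = \lfloor (x + (2^k - \lambda)y)/2^k \rfloor$, whose solutions include $y = \lfloor x/\lambda \rfloor$.

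The main obstacle is that this equation typically admits several fixed points differing by small amounts, and the abelian dynamics—starved of letters once the feedback decays—tend to halt at the \emph{smallest}, giving an answer that is off by one at the boundary values where $x$ is a multiple of $\lambda$. The hard part of the proof is therefore to break this ambiguity. I expect the fix to be to insert one or more delayers, built in the first part of the proposition, into the feedback path, so that the very first emitted letter triggers an extra injection of pending letters that keeps the $2$-topplers firing until the dynamics is forced up to the correct value of $y$. Correctness will then be verified by a careful trace argument, tracking the states of all $2$-topplers and the delayer subnetworks, and identifying the unique halting configuration with the correct fixed point of the combined system of integer equations.
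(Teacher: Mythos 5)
Your delayer construction is exactly the paper's (split the $2$-toppler output and feed one branch back in), and your observation that the abelian dynamics settles at the \emph{least} nonnegative fixed point $y=\max(x-1,0)$ of $y=\lfloor(x+y)/2\rfloor$ is the right correctness argument. The reduction $\lfloor x/(2m)\rfloor=\lfloor\lfloor x/2\rfloor/m\rfloor$ for even $\lambda$ is a nice simplification that the paper does not use.

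For the $\lambda$-toppler itself, however, there is a genuine gap, and you have in fact put your finger on it without repairing it. With \emph{unprimed} $2$-topplers in series, the chain's internal binary counter starts at $0$, so the least fixed point of $y=\lfloor(x+(2^k-\lambda)y)/2^k\rfloor$ falls one short of $\lfloor x/\lambda\rfloor$ whenever $\lambda\mid x$ (e.g.\ $\lambda=3$, $k=2$, $x=6$ gives $y=1\neq 2$). Inserting delayers into the feedback path cannot cure this: a delayer computes $(y-1)^+$ and so can only \emph{remove} feedback letters, pushing the least fixed point down, not up; and since splitters and adders only duplicate and merge letters, no gadget built from the allowed gates can ``inject extra pending letters'' that were never there. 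More fundamentally, a feedback correction acting on the output $y$ cannot undo an error that depends on whether $\lambda\mid x$, since $y$ alone does not determine this.

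The paper's proof avoids all of this by \emph{priming} the $2$-topplers. With $r=\lceil\log_2\lambda\rceil$, write $2^r-\lambda=\sum_{i=0}^{r-2}b_i2^i$ and prime the $i$th $2$-toppler with $b_i$, so the serial chain's counter starts at $2^r-\lambda$ rather than $0$. Each output letter is split and fed back one copy to each $H_i$ with $b_i=1$, which \emph{resets the counter state} to $2^r-\lambda$ directly; thus the counter cycles through $\lambda$ inputs per output from the very start and the first output occurs after exactly $\lambda$ inputs, giving $\lfloor x/\lambda\rfloor$ with no boundary error. The statement of the proposition permits primed $2$-topplers (a ``$2$-toppler'' in this paper may be primed), so this is a legitimate and much cleaner route than trying to compensate dynamically for an unprimed start.
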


The toppler is a very close relative of the two most extensively studied
abelian processors: the sandpile node and the rotor node.
Specifically, for a node of degree $k$, either of
these is easily emulated by $k$ suitably primed topplers in parallel, as in
\cref{f-rotor}. (Sandpiles and rotors are typically considered on undirected
graphs, in which case the $k$ inputs and $k$ outputs of a vertex are both
routed along its $k$ incident edges). Rotor aggregation \cite{LP09} can also
be emulated, by inserting a delayer into the network for the rotor.
\begin{figure}
\centering
\begin{tikzpicture}[abelian]
\node[adder](a){};
\node[splitter,right= 1cm of a](s){};
\node[toppler,right of=s,prime=1](t2){$3$};
\node[toppler,above of=t2,prime=0](t1){$3$};
\node[toppler,below of=t2,prime=2](t3){$3$};
\draw[<-] (a)--++(-1,0);
\draw[<-] (a)--++(-1,-1);
\draw[<-] (a)--++(-1,1);
\draw[->] (a)--(s);
\draw[->] (s) edge[bend left=20](t1.west);
\draw[->] (s) edge(t2);
\draw[->] (s) edge[bend right=20](t3.west);
\draw[->] (t1)--++(1,0);
\draw[->] (t2)--++(1,0);
\draw[->] (t3)--++(1,0);
\end{tikzpicture}
\caption{Emulating a rotor of degree $3$ with topplers.
To emulate a sandpile node, prime the three topplers identically
(and optionally combine them into one toppler preceding a splitter).
For a rotor aggregation node, insert a
delayer between the adder and the splitter.
}\label{f-rotor}
\end{figure}
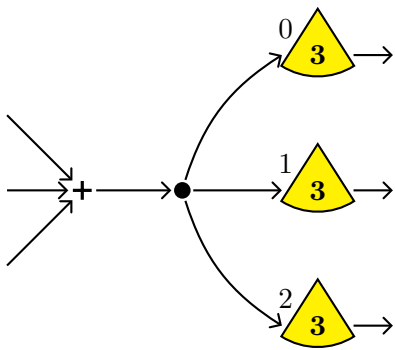

\subsection{Unary input}
\label{unary-intro}

A processor has \df{unary input} if its input alphabet $A$
has size $1$ (so that it computes a function
$\N\to\N^\ell$).  It is easy to see from the definition
that \emph{any} finite-state processor with unary input is
automatically abelian.  Indeed, the same holds for any
processor with \emph{exchangeable} inputs, i.e.\ one whose
transition maps and output maps are identical for each
input letter.  (Such a processor can be emulated by adding
all its inputs and feeding them into a unary-input
processor).  Note that all our gates have unary input
except for the adder, which has exchangeable inputs.

Theorems~\ref{t.eventually.intro}--\ref{t.recurrent.intro}
have rather straightforward proofs if we restrict to unary-input
processors.  (See Lemmas~\ref{unary-recurrent} and
\ref{unary-trans}.) Our main contribution is that
unary-input gates (and adders) suffice to emulate processors
with any number of inputs. (In contrast, elementary
considerations will show that there is no loss of
generality in restricting to processors with unary
\emph{output}; see Lemma~\ref{l.unaryoutput}.)

\subsection{Function classes}
\label{s.classes}

\begin{figure}%[h]
\includegraphics[width=.49\textwidth]{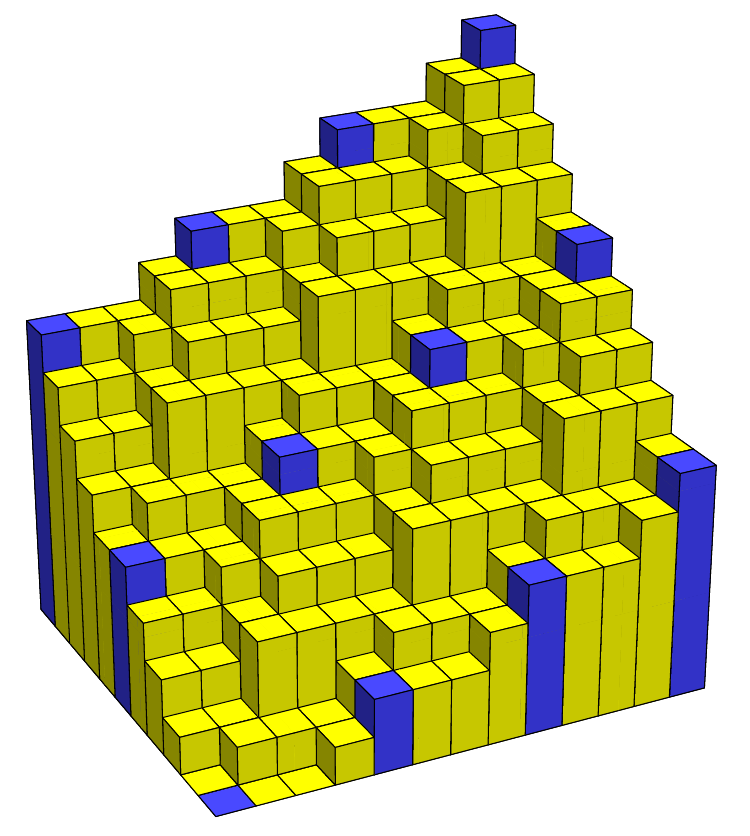}\hfill
\includegraphics[width=.49\textwidth]{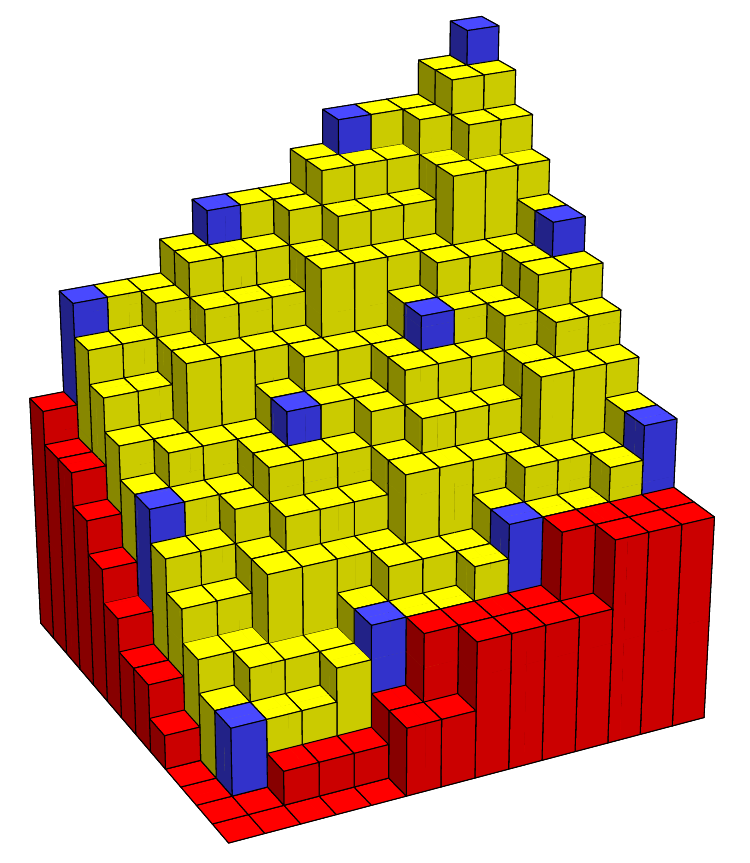}
\caption{\emph{Left:} the graph of a ZILP function $f:\N^2\to\N$.  The height of a bar
gives the value of the function, and the origin is at the front of the
picture.  The periodic component has periods $4$ and $5$ in the two
coordinates, as indicated by the highlighted bars; the linear part has slopes $2/4$
and $4/5$ respectively.  \emph{Right:} A ZILEP function comprising the same ``recurrent part"
together with added ``transient margins."}
\label{bar-graph}
\end{figure}
An important preliminary step in the proofs of
\cref{t.eventually.intro,t.bounded.intro,t.recurrent.intro} will be to
characterize the functions that can be computed by abelian processors (as
well as by the bounded and recurrent varieties). The characterizations turn
out to be quite simple.  A function $F:\N^k\to\N^\ell$ is computed by some
finite abelian processor if and only if: (i) it maps the zero vector
$\zero\in\N^k$ to $\zero\in\N^\ell$; (ii) it is (weakly) increasing; and
(iii) it can be expressed as a linear function plus an \emph{eventually
periodic} function (see \cref{ep-def} for precise meanings).  We call a
function satisfying (i)--(iii) \df{ZILEP} (zero at zero, increasing, linear
plus eventually periodic).  On the other hand, a function is computed by some
\emph{recurrent} finite abelian processor if it is \df{ZILP}:
\emph{eventually periodic} is replaced with \emph{periodic}. \cref{bar-graph}
shows examples of ZILP and ZILEP functions of two variables, illustrating
some of the difficulties to be overcome in computing them by networks.

Our main theorems may be recast in terms of functions rather than processors.
Table~\ref{table.summary} summarizes our main results from this perspective.
For instance, the following is a straightforward consequence of \cref{t.recurrent.intro}.
\begin{samepage}
\begin{corollary}\label{t.functions}
\moniker{Recurrent abelian functions} Let $\mathcal{R}$ be the smallest set
of functions $F:\N^k \to \N$ containing the constant function $1$ and the
coordinate functions $x_1, \ldots, x_k$, and closed under addition and
compositions of the form $F \mapsto \floor{F/\lambda}$ for integer $\lambda
\geq 2$.  Then $\mathcal{R}$ is the set of all increasing functions $\N^k \to
\N$ expressible as $L+P$ where $L, P : \N^k \to \Q$ with $L$ linear and $P$
periodic.
\end{corollary}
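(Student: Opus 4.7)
The plan is to prove both inclusions between $\mathcal{R}$ and the class $\mathcal{C}$ of all increasing functions $\N^k\to\N$ of the form $L+P$ with $L:\N^k\to\Q$ linear and $P:\N^k\to\Q$ periodic. The inclusion $\mathcal{R}\subseteq\mathcal{C}$ is a direct structural induction using closure properties of $\mathcal{C}$. The inclusion $\mathcal{C}\subseteq\mathcal{R}$ is where \cref{t.recurrent.intro} does the real work, combined with the characterisation stated in \S\ref{s.classes} that a function is computed by some recurrent finite abelian processor precisely when it is ZILP.

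For $\mathcal{R}\subseteq\mathcal{C}$, I would induct on the construction of a function in $\mathcal{R}$. The base cases lie in $\mathcal{C}$: the constant $1$ with $L=0$ and $P\equiv 1$, and each coordinate function $x_i$ with $L=x_i$ and $P=0$. If $F_j=L_j+P_j\in\mathcal{C}$ for $j=1,2$, then $F_1+F_2=(L_1+L_2)+(P_1+P_2)$, where $P_1+P_2$ is periodic with respect to any coordinate-wise common multiple of the periods of $P_1$ and $P_2$. For closure under $F\mapsto\lfloor F/\lambda\rfloor$ with $F=L+P$, write
\[
\lfloor F/\lambda\rfloor \;=\; \frac{L}{\lambda}+\frac{P}{\lambda}-\Bigl\{\frac{F}{\lambda}\Bigr\}.
\]
Here $L/\lambda$ is $\Q$-linear. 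In each coordinate direction $i$, choose $p^{*}_{i}$ to be a common multiple of the $i$-th period of $P$ and an integer that clears the denominator of $L(e_i)/\lambda$; then $F(x+p^{*}_{i}e_i)=F(x)+p^{*}_{i}L(e_i)$ differs from $F(x)$ by an integer multiple of $\lambda$, so $\{F/\lambda\}$ has period $p^{*}_{i}$ in coordinate $i$. Consequently $P/\lambda - \{F/\lambda\}$ is $\Q$-periodic, while monotonicity and $\N$-valuedness survive the floor, giving $\lfloor F/\lambda\rfloor\in\mathcal{C}$.

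For $\mathcal{C}\subseteq\mathcal{R}$, take $G\in\mathcal{C}$ and set $c:=G(\00)\in\N$. Because $G$ is increasing, $F:=G-c$ is non-negative and $\N$-valued, equals $L+(P-P(\00))$, and vanishes at $\00$; thus $F$ is ZILP. By the ZILP characterisation of recurrent abelian functions, $F$ is computed by some recurrent finite abelian processor with $k$ inputs and one output, and \cref{t.recurrent.intro} then produces a finite directed acyclic network $\Net$ of adders, splitters, and topplers emulating it. Evaluating $\Net$ symbolically in topological order, each input wire carries $x_i$, each adder performs addition of the two incoming symbolic functions, each splitter simply allows its input to be referenced twice, and each toppler (possibly primed with offset $q$) outputs $\lfloor(u+q)/\lambda\rfloor$, where the constant $q$ is realised as $1+1+\cdots+1$ using the generator $1$ of $\mathcal{R}$. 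The output wire therefore produces $F$ as a formula in $\mathcal{R}$, so $F\in\mathcal{R}$, and $G=F+c\cdot 1\in\mathcal{R}$.

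The principal point requiring care is the floor step in the forward direction, where one must exhibit $\{F/\lambda\}$ as periodic despite its rational values in $[0,1)$; the argument relies essentially on $L$ being $\Q$-linear, so that the increments of $L/\lambda$ across one period of $P$ can be made integral by enlarging the period. The remaining bookkeeping (handling primed topplers and the additive offset $c=G(\00)$) is cosmetic and is absorbed by the fact that every non-negative integer constant is already a sum of copies of $1$ in $\mathcal{R}$.
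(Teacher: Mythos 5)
Your proof is correct, and the two directions take different routes from the paper's argument in instructive ways. For the backward inclusion $\mathcal{C}\subseteq\mathcal{R}$, both you and the paper first pass from $G$ to the ZILP function $F=G-G(\zero)$, invoke \cref{t.recurrent} to get a recurrent processor and \cref{t.recurrent.intro} to get an emulating network of splitters, adders and topplers, and then read a formula off the network; the paper phrases this last step as an induction on the number of topplers (peeling off a toppler with only splitters and adders downstream), whereas your topological-order symbolic evaluation is the same idea expressed wire-by-wire, and is in fact slightly more careful in that you explicitly handle primed topplers by folding the offset $q$ back into $\mathcal{R}$ as $1+\cdots+1$ — a detail the paper's displayed formula $a\lfloor s/\lambda\rfloor$ elides. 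The forward inclusion $\mathcal{R}\subseteq\mathcal{C}$ is where you genuinely diverge: the paper translates a formula in $\mathcal{R}$ into a network of splitters, adders and primed topplers, then cites \cref{l.acyclic} and \cref{t.recurrent} to conclude the network computes a ZILP function; you instead verify directly that the class $\mathcal{C}$ of increasing $L+P$ functions is closed under addition and under $F\mapsto\lfloor F/\lambda\rfloor$, the nontrivial point being the periodicity of the fractional part $\{F/\lambda\}$, which you establish by enlarging the period lattice so that the linear increments become integer multiples of $\lambda$. Your route is more elementary — it avoids the translation-to-network step (which is itself a small induction requiring care with constant offsets and priming) and does not appeal to the network machinery of \cref{l.acyclic} — at the cost of the explicit arithmetic with fractional parts; the paper's route is shorter once that machinery is already in hand. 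Both are sound.
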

\end{samepage}
	\begin{table}
	\begin{tabular}{|c|ccc|c|}
\hline
	Components & L & + & P & Theorem(s) \\
	\hline
	(splitter and adder only) &  linear & + & zero & Lemma~\ref{l.linear} \\
	presink, delayer &  linear & + & eventually constant & \ref{t.bounded.intro}, \ref{t.immutable} \\
	$\lambda$-toppler & linear & + & periodic & \ref{t.recurrent.intro}, \ref{t.recurrent} \\
	$\lambda$-toppler, presink, delayer & linear & + & eventually periodic & \ref{t.eventually.intro}, \ref{t.eventually} \\
	\hline
	\end{tabular}
	\medskip
	\caption{Four different classes of abelian network.
	%The left column shows the abelian logic gates allowed in addition to splitters and adders.
	The second column indicates the class of increasing functions $L+P : \N^k \to \N^\ell$ computable by a finite, directed acyclic abelian network whose components are splitters, adders and the gates listed in the first column. In the first two lines $L$ and $P$ take values in $\N^\ell$, while in the last two lines $L$ and $P$ take values in $\Q^\ell$.}
	\label{table.summary}
	\end{table}

\subsection{Outline of article}

Section~\ref{s.function} identifies the classes of functions computable by
abelian processors, as described above, and formalizes the definitions and
claims relating to abelian networks.  Section~\ref{s.basic} contains a few
elementary reductions including the proof of Proposition~\ref{p.feedback}.
The core of the paper is Sections~\ref{s.recurrent}, \ref{s.bounded}
and~\ref{s.transient}, which are devoted respectively to the proofs of
Theorems~\ref{t.recurrent.intro}, \ref{t.bounded.intro}
and~\ref{t.eventually.intro}.  These proofs are by induction on the number of
inputs to the processor; \cref{t.eventually.intro} is by far the hardest. A
recurring theme in the proofs is \emph{meagerization}, which amounts to use
of the easily verified identity
	\begin{equation} \label{e.meagerization} x = \floor{\frac{x}{m}} +
\floor{\frac{x+1}{m}} + \cdots + \floor{\frac{x+m-1}{m}} \end{equation} for
positive integers $x$ and $m$.  A key step in the proof of the general
emulation result, Theorem~\ref{t.eventually.intro}, is the introduction of a
ZILP function that computes the minimum of its $n$ arguments provided they
are not too far apart (\cref{p.blackbox}). That this function in turn can be
emulated follows from the recurrent case, Theorem~\ref{t.recurrent.intro}.

In Section~\ref{s.lowerbounds}  we show that no gates can
be omitted from our list.  We conclude by posing some open
problems in Section~\ref{s.open}.

\section{Functions computed by abelian processors and networks}
\label{s.function}

In preparation for the proofs of the main results about emulation, we begin
by identifying the classes of functions that need to be computed.

\subsection{Abelian processors}

If $\Proc$ is an abelian processor with input alphabet $A$, and $w = i_1
\cdots i_\ell$ is a word with letters in $A$, then we define the transition
and output maps corresponding to the word:
	\[ t_w := t_{i_\ell} \cdots t_{i_1}; \]
	\[ o_w := o_{i_1} + o_{i_2} t_{i_1} + o_{i_3}t_{i_2}t_{i_1}
+ \cdots + o_{i_\ell} t_{i_{\ell-1}} \cdots t_{i_1}. \]

\begin{lemma}
For any words $w,w'$ such that $w'$ is a permutation of $w$, we have $t_w =
t_{w'}$ and
	$o_w = o_{w'}$.
\end{lemma}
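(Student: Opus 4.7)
The plan is to reduce to the case of a single adjacent transposition, since the symmetric group on $\{1,\dots,\ell\}$ is generated by adjacent transpositions. It therefore suffices to show that if $w = u \cdot ij \cdot v$ and $w' = u \cdot ji \cdot v$ for letters $i,j \in A$ and subwords $u,v$, then $t_w = t_{w'}$ and $o_w = o_{w'}$; once this is established, a straightforward induction on the number of adjacent swaps needed to convert $w$ to $w'$ yields the general statement.

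For the transition maps, I would just unfold the definition. Since $t_w$ reads the letters in reverse composition order, we have $t_w = t_v \, t_j \, t_i \, t_u$ and $t_{w'} = t_v \, t_i \, t_j \, t_u$, and these are equal by the first commutation relation in \eqref{commute}.

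For the output maps, I would break the sum $o_w = \sum_{m=1}^{\ell} o_{i_m} t_{i_{m-1}} \cdots t_{i_1}$ into three blocks: the terms strictly before the swapped pair, the two terms at the swapped positions, and the terms strictly after. If $|u|=k$, the block $m \le k$ is literally identical in $w$ and $w'$. The two middle terms of $o_w$ sum to $\bigl(o_i + o_j t_i\bigr) t_u$, whereas the corresponding middle terms of $o_{w'}$ sum to $\bigl(o_j + o_i t_j\bigr) t_u$, and these agree by the second relation in \eqref{commute}. For the tail $m > k+2$, each summand $o_{i_m} t_{i_{m-1}} \cdots t_{i_{k+3}} \cdot (t_j t_i) \cdot t_u$ in $o_w$ matches the corresponding summand in $o_{w'}$ because $t_j t_i = t_i t_j$, which already handled the transition part above.

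There is no real obstacle beyond careful bookkeeping; the only subtlety is keeping straight which index is applied to which operator, and making sure the ``tail'' terms really do match via the first commutation relation rather than needing the second. Once the adjacent-swap case is verified, the inductive extension to an arbitrary permutation is immediate.
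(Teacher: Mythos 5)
Your proof is correct and takes essentially the same approach as the paper: the paper's one-line proof ("by induction on the length of $w$") is filled in exactly by verifying the adjacent-transposition case from the two commutation relations in \eqref{commute} and then bootstrapping, which is what you do. Your three-block decomposition of $o_w$ — prefix unchanged, swapped pair handled by the second relation, tail handled by the first — is the right bookkeeping, and the observation that the tail requires only $t_it_j=t_jt_i$ is correct.
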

\begin{proof}
This follows from the definition of an abelian processor, by induction on the length of $w$.
\end{proof}

The function $f=f_{\Proc}$ computed by an abelian processor $\Proc$ with
initial state $q^\start$ is given by
	\[ f(\xx) = o_{w(\xx)}(q^\start),\qquad \xx\in\N^A, \] where $w(\xx)$ is
any word that contains $x_i$ copies of the letter $i$ for all $i \in A$.  We
denote vectors by boldface lower-case letters, and their coordinates by the
corresponding lightface letter, subscripted.

\begin{lemma}
\label{l.parallelogram} Let $f=f_{\Proc}$. If $t_\yy(q^\start) =
t_{\yy'}(q^\start)$ then for any $\xx,\yy,\yy' \in \N^k$
	\[ f(\xx+\yy) - f(\yy) = f(\xx+\yy') - f(\yy'). \]
\end{lemma}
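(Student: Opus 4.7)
The plan is to prove the identity by decomposing the word realizing $\xx+\yy$ (resp.\ $\xx+\yy'$) into a concatenation of a word realizing $\yy$ (resp.\ $\yy'$) followed by a common word realizing $\xx$, and then applying the additive formula for output maps along a concatenation of words.

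First I would recall the following immediate consequence of the definition of $o_w$: for any two words $u,v$ over $A$,
\[ o_{uv} \;=\; o_u + o_v \circ t_u, \]
which is just the telescoping definition $o_w = \sum_k o_{i_k} t_{i_{k-1}} \cdots t_{i_1}$ applied to $w=uv$. Combined with the previous lemma (which says $o_w$ and $t_w$ depend only on the letter-counts of $w$), this lets me pick any convenient order of letters without changing the output; in particular, I may take $w(\xx+\yy) = w(\yy)\,w(\xx)$ for chosen words $w(\yy)$ and $w(\xx)$ with the appropriate letter counts.

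Next, applying the concatenation formula to $q^\start$,
\[ f(\xx+\yy) \;=\; o_{w(\yy)w(\xx)}(q^\start) \;=\; o_{w(\yy)}(q^\start) + o_{w(\xx)}\bigl(t_{w(\yy)}(q^\start)\bigr) \;=\; f(\yy) + o_{w(\xx)}\bigl(t_\yy(q^\start)\bigr), \]
and analogously $f(\xx+\yy') = f(\yy') + o_{w(\xx)}\bigl(t_{\yy'}(q^\start)\bigr)$. Rearranging,
\[ f(\xx+\yy) - f(\yy) \;=\; o_{w(\xx)}\bigl(t_\yy(q^\start)\bigr), \qquad f(\xx+\yy') - f(\yy') \;=\; o_{w(\xx)}\bigl(t_{\yy'}(q^\start)\bigr). \]
By hypothesis $t_\yy(q^\start) = t_{\yy'}(q^\start)$, so the two right-hand sides are the same element of $\N^B$, which is the desired equality.

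There is not really a hard step here: everything reduces to recognizing that, thanks to the previous lemma, the contribution of the $\xx$-letters to the output is computed by a single function of the state the processor is in after processing the $\yy$- (or $\yy'$-) letters, and that state is the same in both cases by assumption. The only minor care needed is the harmless abuse of notation whereby $t_\yy$ and $o_\yy$ denote $t_{w(\yy)}$ and $o_{w(\yy)}$ for any word $w(\yy)$ with the prescribed letter counts, which is legitimate precisely by the previous lemma.
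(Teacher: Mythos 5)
Your proof is correct and is essentially the same argument as the paper's one-line proof ("Since $\yy$ and $\yy'$ leave $\Proc$ in the same state, subsequent inputs have the same effect"); you have simply spelled out the concatenation identity $o_{uv} = o_u + o_v \circ t_u$ that the paper leaves implicit.
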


\begin{proof}
%%[
Let $t_\yy(q^\start) = t_{\yy'}(q^\start) = q$; then $f(\xx+\yy) = f(\yy) + o_{w(\xx)}(q)$
and $f(\xx+\yy) = f(\yy') + o_{w(\xx)}(q)$. In other words,
since $\yy$ and $\yy'$ leave $\Proc$ in the same state $q$, the subsequent input of $\xx$ has the same effect.
%%]
\end{proof}

\begin{definition} \label{ep-def} A function
$f : \N^k \to \N^\ell$ is (weakly) \df{increasing} if $\xx \leq \yy$ implies
$f(\xx) \leq f(\yy)$ where $\leq$ denotes the coordinatewise partial
ordering. A function $P : \N^k \to \Q^\ell$ is \df{periodic} if there is a
subgroup $\Lambda \subset \Z^k$ of finite index such that $P(\xx) = P(\yy)$
whenever $\xx-\yy \in \Lambda$. A function $P$ is \df{eventually periodic} if
there exist $\lambda_1, \ldots, \lambda_k \geq 1$ and $r_1,\ldots,r_k$ such
that for each $i=1,\ldots,k$, if $\xx \geq r_i \basis_i$ then
	\begin{equation} \label{e.eventuallyperiodicdef} P(\xx) = P(\xx+\lambda_i
\basis_i). \end{equation}
Here $\basis_i$ is the $i$th standard basis vector.
\end{definition}

Note that given $\ld$ and $\rr$, an eventually periodic function is determined by its values on the box $[0,r_1+\lambda_1] \times \cdots \times [0,r_k+\lambda_k]$. This notion of eventually periodic is intermediate in strength. A stronger requirement would be that $P$ agrees with some periodic function outside a finite set. A weaker requirement would be that \eqref{e.eventuallyperiodicdef} holds only for $\xx \geq \rr$.

%[
An eventually periodic function is not generally periodic 
%outside $[0,r_1+\lambda_1] \times \cdots \times [0,r_k+\lambda_k]$, or indeed 
outside any finite box.  The reason
is that there are typically infinitely many grid points $\xx$ (that would, e.g., be red in \cref{bar-graph}) which satisfy
$\xx \geq r_i \basis_i$ for some but not all $i$.
%]

\begin{samepage}
\begin{theorem}\label{t.eventually}
Let $k, \ell \geq 1$. A function $f: \N^k \to \N^\ell$ can be computed by a
finite abelian processor if and only if $f$ satisfies all of the following.
	\begin{enumerate}[{\em (i)}]
	\item $f(\zero)=0$.
	\item $f$ is increasing.
	\item $f = L+P$ for a linear function $L$ and an eventually periodic function $P$.
	\end{enumerate}
\end{theorem}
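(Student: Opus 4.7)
I would prove the two directions separately, writing $\phi(\xx):=t_\xx(q^\start)\in Q$ for the state reached after input $\xx$.

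\textbf{Necessity.} Conditions (i) and (ii) are immediate: the empty input produces no output, while more input only accumulates more non-negative output. For (iii), finiteness of $Q$ forces the state map $\phi$ to be eventually periodic. Indeed, for each coordinate $i$, the iteration of $t_i$ on the reachable set $Q^*:=\phi(\N^k)$ has integers $r_i,\lambda_i\geq 1$ with $t_i^n=t_i^{n+\lambda_i}$ on $Q^*$ for all $n\geq r_i$, so $\phi(\xx+\lambda_i\basis_i)=\phi(\xx)$ whenever $x_i\geq r_i$. By \cref{l.parallelogram}, this periodicity passes to $\Delta_i f(\xx):=f(\xx+\basis_i)-f(\xx)=o_i(\phi(\xx))$. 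Summing the abelian identity $o_i+o_j t_i=o_j+o_i t_j$ around a full cycle of the $t_i$-action on $Q^*$ shows that, on the ``fully periodic'' region $\{\xx:x_j\geq r_j\ \forall j\}$, the difference $f(\xx+\lambda_i\basis_i)-f(\xx)$ is in fact a constant $\lambda_i a_i\in\N^\ell$ independent of $\xx$. Setting $L(\xx):=\sum_i a_i x_i$ (with $a_i\in\Q^\ell$), the telescoping identity $P(\xx+\lambda_i\basis_i)-P(\xx)=0$ on the periodic region shows that $P:=f-L$ is eventually periodic.

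\textbf{Sufficiency.} Given a ZILEP function $f=L+P$ with periods $\lambda_i$ and thresholds $r_i$ for $P$, I would construct $\Proc$ directly. Take the state space to be the finite product
\[Q:=\prod_{i=1}^k\{0,1,\ldots,r_i+\lambda_i-1\},\]
with canonical-representative map $\rho:\N^k\to Q$ defined coordinatewise by $\rho(\xx)_i:=x_i$ when $x_i<r_i$, and $\rho(\xx)_i:=r_i+(x_i-r_i)\bmod\lambda_i$ when $x_i\geq r_i$. Set $q^\start:=\rho(\zero)$, let $t_i$ act on $Q$ by incrementing coordinate $i$ (wrapping in the periodic range), and define $o_i(\rho(\xx)):=f(\xx+\basis_i)-f(\xx)$. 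The eventual periodicity of $P$ in each coordinate separately ensures that both $P(\xx)$ and $P(\xx+\basis_i)$ depend only on $\rho(\xx)$, so $o_i$ is well-defined and takes values in $\N^\ell$ since $f$ is increasing. The abelian axioms are immediate: transitions commute coordinate-wise, and both sides of $o_i+o_j t_i=o_j+o_i t_j$ telescope to $f(\xx+\basis_i+\basis_j)-f(\xx)$. An induction on the total number of processed letters then confirms that $\Proc$ computes $f$.

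\textbf{Main obstacle.} The most delicate point is the constancy claim in the necessity direction: that the growth rate $f(\xx+\lambda_i\basis_i)-f(\xx)$ is independent of the position of $\xx$ within the periodic region. This should follow from summing the abelian identity around a cycle of the $\Z^k$-action on $Q^*$, but may require first refining the $\lambda_i$ to a common period (for instance, the lcm of cycle lengths of the $t_i$'s on $Q^*$) so that the lattice of state-periods is genuinely rectangular. The rest is essentially bookkeeping, though for sufficiency one must also verify carefully that $\rho$ respects the eventually-periodic structure of $P$ in each coordinate separately.
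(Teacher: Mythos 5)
Your sufficiency argument is correct and is essentially a concretized version of the paper's construction: the paper quotients $\N^k$ by the relation $\yy\equiv\yy'$ iff $f(\yy+\zz)-f(\yy)=f(\yy'+\zz)-f(\yy')$ for all $\zz$, while you pick explicit representatives via $\rho$; both give a finite state space, and your verification that $o_i$ is well-defined (each coordinate reduced separately, using the coordinate-wise eventual periodicity of $P$) is sound.

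The necessity direction, however, has a genuine gap, and you have identified it yourself without closing it. You need the constancy of $g(\xx):=f(\xx+\lambda_i\basis_i)-f(\xx)$ over the relevant region, and your proposed route --- ``summing the abelian identity $o_i+o_jt_i=o_j+o_it_j$ around a cycle of the $t_i$-action'' --- is not an argument. Eventual periodicity of $\Delta_if(\xx)=o_i(\phi(\xx))$ (which you correctly deduce) only shows that $g$ is eventually \emph{periodic}, not constant, and there is no obvious reason why its average over a $\lambda_i$-cycle should be independent of the classes of the other coordinates mod $\lambda_j$. Moreover, you claim constancy only on the fully periodic region $\{\xx:x_j\geq r_j\ \forall j\}$; the paper's \cref{ep-def} of eventual periodicity requires $P(\xx+\lambda_i\basis_i)=P(\xx)$ whenever $x_i\geq r_i$ \emph{with no constraint on the other coordinates}, so even a completed version of your argument would deliver a weaker conclusion than what (iii) asserts.

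The paper's resolution is a second, sharper use of \cref{l.parallelogram} that you missed. Choose $\lambda_i$ so that $t_i^{\lambda_i}$ is idempotent, i.e.\ $t_i^{2\lambda_i}=t_i^{\lambda_i}$. Then $\yy=\lambda_i\basis_i$ and $\yy'=2\lambda_i\basis_i$ lead to the same state, so for \emph{every} $\xx\in\N^k$,
\[
f(\xx+\lambda_i\basis_i)-f(\lambda_i\basis_i)=f(\xx+2\lambda_i\basis_i)-f(2\lambda_i\basis_i).
\]
Substituting $\zz=\xx+\lambda_i\basis_i$ and rearranging gives $f(\zz+\lambda_i\basis_i)-f(\zz)=f(2\lambda_i\basis_i)-f(\lambda_i\basis_i)$ for all $\zz\geq\lambda_i\basis_i$ --- constant, with the only constraint being $z_i\geq\lambda_i$. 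This single stroke produces exactly the coordinate-wise eventual periodicity demanded by \cref{ep-def}, with $r_i=\lambda_i$. Your first use of \cref{l.parallelogram} (to show $\Delta_if$ depends only on $\phi(\xx)$) was the right instinct, but the lemma's real power here comes from applying it with $\yy,\yy'$ collinear along a coordinate axis to two different multiples of $\lambda_i\basis_i$, not from cycling the two-letter abelian identity.
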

\end{samepage}
As mentioned earlier, we call a function satisfying (i)--(iii) \df{ZILEP}.
 Note that $L$ and $P$ need not have integer-valued coordinates, and $P$ need
not be nonnegative. For example, the function $f(x) = \floor{x/2}$ is ZILEP
(in fact ZILP) with $L(x)=x/2$ and $P(x) = -\ind [x \text{ is odd} ]/2$.
%$= ((-1)^x-1)/4$.

\begin{proof}[Proof of \cref{t.eventually}]
Any $f = f_{\Proc}$ trivially satisfies $f(\zero)=0$. To see that $f$ is
increasing, given $\xx \leq \yy$ there are words $w(\xx)$ and $w(\yy)$ (where
the number of occurrences of letter $i$ in $w(\zz)$ is $z_i$) for which
$w(\xx)$ is a prefix of $w(\yy)$.  Then
	\[ o_{w(\xx)} + o_{u} t_{w(\xx)} = o_{w(\yy)}. \]
%[
Since the second term on the left is nonnegative, $o_{w(\xx)}(q) \leq o_{w(\yy)}(q)$.
%]

To prove (iii), note that since $Q$ is finite, some power of $t_i$ is idempotent, that is
	\[ t_i^{2\lambda_i} = t_i^{\lambda_i} \]
for some $\lambda_i \geq 1$. Let $L : \N^k \to \N^\ell$ be the linear function sending
	\[ \lambda_i \basis_i \mapsto  f(2\lambda_i \basis_i) - f(\lambda_i
\basis_i) \] for each $i=1,\ldots,k$.   Now we apply Lemma~\ref{l.parallelogram} with
$\yy=\lambda_i \basis_i$ and $\yy'=2\yy$
%and $\xx = \zz - \yy$
to get
	\begin{equation} \label{e.eventually} f(\zz+\lambda_i \basis_i) - f(\zz)
= f(2\lambda_i \basis_i) - f(\lambda_i \basis_i) \qquad \text{for all } \zz \geq \lambda_i \basis_i, \end{equation}
which shows that $f-L$ is eventually periodic. Thus $f$ satisfies (i)--(iii).

Conversely, given an increasing $f = L+P$, define an equivalence relation on
$\N^k$ by $\yy \equiv \yy'$ if $f(\yy+\zz) - f(\yy) = f(\yy'+\zz)-f(\yy')$
for all $\zz \in \N^k$. If $L$ is linear and $P$ is eventually periodic, then
there are only finitely many equivalence classes: if $y_i \geq r_i +
\lambda_i$ then $\yy \equiv \yy - \lambda_i \basis_i$, so any $\yy \in \N^k$
is equivalent to some element of the cuboid $[0,\lambda_1+r_1] \times \cdots
\times [0,\lambda_k+r_k]$.

Now consider the abelian processor on the finite state space $\N^k / \equiv$
with $t_i(\xx) = \xx+ \basis_i$ and $o_i(\xx) = f(\xx+\basis_i) - f(\xx)$.
Note that $t_i$ and $o_i$ are well-defined. With initial state $\zero$, this
processor computes $f$.
\end{proof}

\subsection{Recurrent abelian processors}

%A state $q \in Q$ is called \emph{recurrent} if for all $r \in Q$ there exists $\zz \in \N^k$ such that $q = t_\zz(r)$.
%% The above definition is correct only for an *irreducible* processor.
Recall that an abelian processor is called \df{recurrent} if for any states
$q,q'\in Q$ there exists $\xx \in \N^k$ such that $q'=t_{\xx}(q)$.  Since we
assume that every state is accessible from the initial state $q^0$, this is
equivalent to the assertion that for every $q \in Q$ and $\yy \in \N^k$ there
exists $\zz \in \N^k$ such that $q = t_{\yy+\zz}(q)$.  Our next result
differs from Theorem~\ref{t.eventually} in only two words: \emph{recurrent}
has been added and \emph{eventually} has been removed!  As mentioned earlier,
we call a function satisfying (i)--(iii) below \df{ZILP}.

\begin{samepage}
\begin{theorem}
\label{t.recurrent} Let $k, \ell \geq 1$. A function $f: \N^k \to \N^\ell$
can be computed by a recurrent finite abelian processor if and only if $f$
satisfies all of the following.
	\begin{enumerate}[{\em (i)}]
	\item $f(\zero)=0$.
	\item $f$ is increasing.
	\item $f = L+P$ for a linear function $L$ and a periodic function $P$.
	\end{enumerate}
\end{theorem}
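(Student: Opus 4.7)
The plan is to adapt the proof of Theorem~\ref{t.eventually}, with two small modifications that exploit recurrence. Necessity will come from upgrading eventual periodicity to genuine periodicity by showing that each individual transition map is a bijection; sufficiency will reuse the same quotient construction, now on all of $\N^k$ rather than a ``tail''.

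For necessity, assume $f = f_\Proc$ for a recurrent processor $\Proc$. I would first show that for each letter $i$, the transition map $t_i : Q \to Q$ is a bijection. Since $Q$ is finite, surjectivity suffices: given $q \in Q$, recurrence applied to the pair $(t_i(q),q)$ yields $\yy \in \N^k$ with $t_\yy(t_i(q)) = q$, and by commutativity of transitions this equals $t_i(t_\yy(q))$; hence $q$ lies in the image of $t_i$. Each $t_i$ therefore has finite order $\lambda_i \geq 1$, so $t_{\lambda_i \basis_i}(q^\start) = q^\start = t_\zero(q^\start)$. Applying Lemma~\ref{l.parallelogram} with $\yy = \zero$ and $\yy' = \lambda_i\basis_i$ then yields
\[ f(\xx + \lambda_i \basis_i) - f(\xx) = f(\lambda_i \basis_i) \qquad \text{for every } \xx \in \N^k, \]
not merely those with $x_i$ large. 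Letting $L : \N^k \to \Q^\ell$ be the linear map with $L(\basis_i) = f(\lambda_i \basis_i)/\lambda_i$, the function $P := f - L$ is invariant under each translation by $\lambda_i \basis_i$ on all of $\N^k$, hence periodic with lattice $\Lambda = \bigoplus_i \lambda_i \Z$. Conditions (i) and (ii) are inherited verbatim from the proof of Theorem~\ref{t.eventually}.

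For sufficiency, I would reuse the quotient construction from the proof of Theorem~\ref{t.eventually}: set $Q = \N^k/{\equiv}$, where $\yy \equiv \yy'$ iff $f(\yy+\zz)-f(\yy) = f(\yy'+\zz)-f(\yy')$ for all $\zz$, with $t_i([\xx]) = [\xx+\basis_i]$, $o_i([\xx]) = f(\xx+\basis_i)-f(\xx)$, and initial state $[\zero]$. Genuine periodicity of $P$ now gives $\yy \equiv \yy + \lambda_i \basis_i$ for \emph{every} $\yy \in \N^k$, so every class has a representative in $[0,\lambda_1)\times\cdots\times[0,\lambda_k)$ and $Q$ is finite; the abelian axioms and the fact that this processor computes $f$ follow exactly as before. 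To verify recurrence, given classes $[\xx]$ and $[\yy]$ I would choose $n_i \geq 0$ large enough that $\yy' := \yy + \sum_i n_i \lambda_i \basis_i \geq \xx$ coordinatewise; then $\zz := \yy' - \xx \in \N^k$ satisfies $t_\zz([\xx]) = [\yy'] = [\yy]$, as required.

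The main obstacle is the slight asymmetry in the definition of recurrence: getting that each individual $t_i$ is bijective (rather than merely that the whole monoid $\{t_\yy\}$ acts transitively) is what powers the passage from eventual periodicity to periodicity, and it requires the small commutation trick above. Once that step is in place, the remainder is the proof of Theorem~\ref{t.eventually} with the word \emph{eventually} deleted.
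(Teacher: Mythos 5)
Your proposal is correct, and the sufficiency direction is essentially the paper's argument (the same quotient $\N^k/{\equiv}$, the same verification of finiteness and recurrence, differing only in the direction of the coordinatewise comparison used to confirm recurrence).

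The necessity direction takes a genuinely different route from the paper. The paper deduces $f = L+P$ with $P$ \emph{eventually} periodic from \cref{t.eventually}, then uses the general formulation of recurrence to produce, for arbitrary $\yy$, a vector $\yy' \geq \lambda_i\basis_i$ with $t_{\yy'}(q^0)=t_{\yy}(q^0)$; applying \cref{l.parallelogram} with $\xx=\lambda_i\basis_i$ transfers the identity $P(\,\cdot\,+\lambda_i\basis_i)=P(\,\cdot\,)$ from the region where it is already known to all of $\N^k$. You instead first prove the structural fact that each $t_i$ is a bijection on $Q$ (surjectivity follows from recurrence together with commutativity of the transition monoid, and finiteness of $Q$ does the rest), whence $t_i$ has finite order $\lambda_i$ and $t_{\lambda_i\basis_i}(q^0)=q^0$. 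Applying \cref{l.parallelogram} with $\yy=\zero$, $\yy'=\lambda_i\basis_i$ then gives $f(\xx+\lambda_i\basis_i)-f(\xx)=f(\lambda_i\basis_i)$ for \emph{all} $\xx$ directly, from which you build $L$ and $P$ explicitly without invoking the eventually periodic decomposition at all. Your version is more self-contained and exposes why recurrence gives genuine periodicity: the transition action by each letter becomes invertible. The paper's version is shorter because it reuses \cref{t.eventually}. Both are sound; your $\lambda_i$ (the order of $t_i$) may be smaller than the paper's (the idempotent-power exponent), but that is immaterial since any choice with $t_i^{\lambda_i}=\mathrm{id}$ works.
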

\end{samepage}
% The theorem remains true if the word "irreducible" is added.

\begin{proof}
By Theorem~\ref{t.eventually}, $f$ satisfies (i) and (ii) and $f=L+P$ with
$L$ linear and $P$ \emph{eventually} periodic.  To prove (iii) we must show
that equation \eqref{e.eventually} holds for \emph{all} $\zz \in \N^k$.
By recurrence, for any $\yy \in \N^k$ and any $i \in A$
there exists $\yy' \geq \lambda_i \basis_i$ such that $t_{\yy'}(q)=t_{\yy}(q)$.
Now taking $\xx = \lambda_i \basis_i$ in Lemma~\ref{l.parallelogram}, the linear terms cancel, leaving
	\[ P(\yy + \lambda_i \basis_i) - P(\yy) = P(\yy' + \lambda_i \basis_i) - P(\yy'). \]
The right side vanishes since $P$ is eventually periodic. Since $\yy \in \N^k$ was arbitrary, $P$ is in fact periodic.

Conversely, given an increasing $f=L+P$, we define an abelian processor
$\Proc$ on state space $\N^k / \equiv$ as in the proof of
Theorem~\ref{t.eventually}. If $L$ is linear and $P$ is periodic, then for
each $i=1,\ldots,k$ we have $\yy \equiv \yy - \lambda_i \basis_i$ whenever
$y_i \geq \lambda_i$. Now given any $\xx, \yy \in \N^k$ we find $\xx' \equiv
\xx$ with $\xx' \geq \yy$, so $\Proc$ is recurrent.
% take \zz = \yy' - \xx
% note that \Proc is also irreducible
\end{proof}

\subsection{Abelian networks}
\label{s.network}

An \df{abelian network} $\Net$ is a directed multigraph
$G=(V,E)$ along with specified pairwise disjoint sets
$I,O,T\subset E$ of \df{input}, \df{output} and \df{trash}
edges respectively. These edges are dangling: the input
edges have no tail, while the output and trash edges have
no head.  The trash edges are for discarding unwanted
letters.
%(Purists who refuse to consider such an object a graph should feel free to add input, output and trash nodes.)
Each node $v \in V$ is labeled with an abelian processor $\Proc_v$ whose
input alphabet equals the set of incoming edges to $v$ and whose output
alphabet is the set of outgoing edges from $v$.  In this paper, all abelian
networks are assumed finite: $G$ is a finite graph and each $\Proc_v$ is a
finite processor.

An abelian network operates as follows.  Its total state is given by the internal
states $(q_v)_{v\in V}$ of all its processors $\Proc_v$, together with a vector
$\xx=(x_e)_{e\in E}\in \N^E$ that indicates the number of letters sitting on
each edge, waiting to be processed. Initially, $\xx$ is supported on the set
of input edges $I$.  At each step, any non-output non-trash edge $e$ with $x_e>0$ is
chosen, and a letter is fed into the processor at its endnode $v$.  Thus,
$x_e$ is decreased by $1$, the state of $\Proc_v$ is updated from $q_v$ to
$t_e(q_v)$, and $\xx$ is increased by $o_e(q_v)$ (interpreted as a vector in
$\N^E$ supported on the outgoing edges from $v$).  Here $t$ and $o$ are the
maps associated to $\Proc_v$.  The sequence of choices of the edges $e$ at
successive steps is called a \df{legal execution}.  The execution is said to
\df{halt} if, after some finite number of steps, $\xx$ is supported on the
set of output and trash edges (so that there are no letters left to process).

The following facts are proved in \cite[Theorem~4.7]{BL15a}. Fixing the
initial internal states $\qq^\start=(q_v^\start)_{v\in V}$ and an input
vector $\xx\in\N^I$, if some execution halts then all legal executions halt.
In the latter case, the final states of the processors and the final output
vector do not depend on the choice of legal execution.  

Suppose for a given $\qq^\start$ that the network halts on {\em all} input vectors.
Then, since the final output vector depends only on the input vector, the abelian
network computes a function $\N^{I} \to \N^{O}$.  If a network $\Net$ and a
processor $\Proc$ compute the same function, then we say that $\Net$
\df{emulates} $\Proc$.
\begin{prop}%[\cite{BL15a}]
\label{backwards} If a finite abelian network
%with given initial internal states
 halts on all inputs, then it emulates some finite abelian processor.
\end{prop}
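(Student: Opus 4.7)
The plan is to construct the emulating abelian processor $\Proc_\Net$ directly from the network, taking its states to be the internal configurations of the network (between halted computations) and its single-letter transitions to be ``feed one letter on the given input edge and run the network to completion.''  The abelian axioms for $\Proc_\Net$ will then follow from the global abelian property of networks established in \cite[Theorem~4.7]{BL15a}, applied once at the single-letter level (for well-definedness) and once at the two-letter level (for commutativity).

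More precisely, I would take state space $Q = \prod_{v\in V} Q_v$ with initial state $\qq^\start = (q_v^\start)_{v\in V}$, input alphabet $I$, and output alphabet $O$.  For each $e \in I$ and each $\qq \in Q$, starting from internal configuration $\qq$ with a single letter on $e$ and all other edges empty, run any legal execution to completion; let the resulting tuple of processor states define $t_e(\qq)$ and the vector of letters emitted on the edges in $O$ (discarding any trash output) define $o_e(\qq) \in \N^O$.  For this to make sense I need halting from arbitrary reachable states: if $\qq$ is obtained by feeding some $\xx_0 \in \N^I$ starting from $\qq^\start$, then the present execution is a continuation of a legal execution on input $\xx_0 + \basis_e$ from $\qq^\start$, which halts by hypothesis; independence of $t_e(\qq)$ and $o_e(\qq)$ from processing order is also immediate from the cited theorem.

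To verify the axioms in \eqref{commute}, I would apply the same result to the two-letter input that places one letter on $e$ and one on $f$, started from a reachable $\qq$.  The two legal executions ``process $e$ to completion, then $f$ to completion'' and ``process $f$ to completion, then $e$ to completion'' yield the same final internal configuration, giving $t_e t_f = t_f t_e$; and they yield the same total output on $O$, which after unpacking says exactly that $o_e + o_f\circ t_e = o_f + o_e\circ t_f$.  Restricting $Q$ to the (finite) set of states reachable from $\qq^\start$ under compositions of the $t_e$'s gives a finite abelian processor meeting the paper's accessibility convention, and by construction it computes the same function $F:\N^I\to\N^O$ as $\Net$.

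The only mild obstacle is the halting issue above, namely showing that halting from $\qq^\start$ on every input implies halting from every reachable state after any single further letter; as indicated, this is handled by reinterpreting the continuation as part of a halting execution of a larger input from $\qq^\start$ and so needs no new idea.  An alternative route would be to show directly that $F$ is ZILEP and then invoke \cref{t.eventually}, but the construction above is cleaner and ties the emulating processor naturally to the network's own state space.
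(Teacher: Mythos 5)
Your construction is exactly the paper's: regard the whole network as a processor whose state is the tuple of internal states, define single-letter transitions by running any legal execution to completion, and deduce the abelian relations \eqref{commute} from order-independence applied to two-letter inputs. Your version is more explicit (e.g.\ about why halting from reachable states follows, and about restricting to reachable states to satisfy the accessibility convention), but it is the same argument.
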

\begin{proof}
We can regard the entire network as a processor, with its state given by the
vector of internal states $\qq=(q_v)_{v\in V}$.  Its transition and output
maps are determined by feeding in a single input letter, performing any legal
execution until it halts, and observing the resulting state and output
letters.  Feeding in two input letters and using (a special case of) the
insensitivity to execution order stated above, we see that the relations
\eqref{commute} hold, so the processor is abelian.
\end{proof}

%Abelian networks also obey a ``least action principle'' \cite[Lemma 4.3]{BL15a}, generalizing results of \cite{DF91,FLP10}.

The abelian networks that halt on all inputs are characterized in \cite[Theorem 5.6]{BL15b}: they are those for which a certain matrix called the
production matrix has Perron-Frobenius eigenvalue strictly less than $1$. An
abelian network is called \df{directed acyclic} if its graph $G$ has no
directed cycles; such a network trivially halts on all inputs.  This paper is
mostly concerned with directed acyclic networks, together with some networks
with certain limited types of feedback; all of them halt on all inputs.

\subsection{Recurrent abelian networks}

The next lemma follows from  \cite[Theorem~3.9]{BL15c}, but
we include a proof for the sake of completeness. A
processor is called \df{immutable} if it has just one
state, and \df{mutable} otherwise. Among the abelian logic
gates in Table~\ref{fig:gates}, splitters and adders are
immutable; topplers, delayers and presinks are mutable.

%[
Recall (from Proposition~\ref{p.feedback} or Figure~\ref{fig:cycles}) that if feedback is permitted, then the transient $\Del$ gate can be emulated by a network of recurrent gates, namely a $2$-$\Topp$ and a $\Spl$.  The next result shows that without feedback, no transient processor can be emulated by a network of recurrent processors.
%]

\begin{prop}
\label{l.acyclic}
A directed acyclic network $\Net$ of recurrent processors emulates a recurrent processor.
%but if input is only allowed at specified nodes, possibly not irreducible!
\end{prop}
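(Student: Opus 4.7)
The plan is to deduce the result from the function characterization in \cref{t.recurrent}. Since $\Net$ is directed acyclic it halts on every input, so by \cref{backwards} it emulates some processor $\Proc$ computing a function $F : \N^I \to \N^O$. By \cref{t.recurrent} it suffices to show $F$ is ZILP, for then some recurrent processor also computes $F$, and $\Net$ emulates that processor.

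To show $F$ is ZILP, I would proceed by induction along a topological ordering of the vertices of $\Net$. For each edge $e$, let $F_e : \N^I \to \N$ denote the count of letters passing along $e$ in a legal execution, as a function of the input vector; this is well-defined by the abelian property. For a global input edge, $F_e$ is a coordinate projection, hence linear, hence ZILP. At an internal node $v$, the tuple $(F_{e'})_{e' \in A_v}$ of counts on $v$'s incoming edges is a ZILP vector-valued function of the global input by the induction hypothesis, and $f_{\Proc_v} : \N^{A_v} \to \N^{B_v}$ is ZILP by \cref{t.recurrent} applied to the recurrent processor $\Proc_v$. Hence the counts on the outgoing edges of $v$ are the composition $f_{\Proc_v} \circ (F_{e'})_{e'}$. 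The network function $F$ is the tuple of $F_g$ over global output edges $g$, so the conclusion follows once we know that ZILP is closed under pairing and composition.

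Pairing is immediate from the decomposition $f = L + P$. For composition, write $G_j = L_j + P_j$ for $j=1,2$, where $P_j$ has period lattice $\Lambda_j$ of finite index in $\Z^{k_j}$. Then
\[ (G_1 \circ G_2)(\xx) = L_1 L_2(\xx) + L_1(P_2(\xx)) + P_1(L_2(\xx) + P_2(\xx)). \]
The first summand is linear, the second is a $\Q$-linear function applied to a periodic one and hence periodic, and the third is invariant under translations by the sublattice $\Lambda_2 \cap L_2^{-1}(\Lambda_1) \subset \Z^k$. The preimage $L_2^{-1}(\Lambda_1)$ has finite index in $\Z^k$ because $L_2$ is $\Q$-linear and $\Lambda_1$ has finite index in $\Z^{k_1}$ (clearing a common denominator sends $L_2$ into $\Z^{k_1}$ on a finite-index sublattice, on which one pulls back the subgroup $\Lambda_1$), so the intersection also has finite index, and the third summand is periodic. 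Monotonicity and vanishing at $\zero$ are preserved trivially. The only step requiring care is this finite-index verification for the mixed composed term; the rest is a routine induction along the DAG.
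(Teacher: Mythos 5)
Your proposal is correct, and it proves the proposition by a route that is closely related to but organized differently from the paper's. The paper inducts on the number of \emph{mutable} processors: it picks a mutable $\Proc$ with nothing mutable upstream, writes the network's function as $f(\xx) = g(\xx, h(\xx)) + j(\xx)$ in terms of the function computed by $\Net - \Proc$ and the pieces of $\Proc$'s output, expands each as linear plus periodic, and checks the resulting expression is again linear plus periodic. You instead induct along a topological order of all nodes, tracking the edge-count functions $F_e$, and reduce everything to a single clean closure statement: ZILP is closed under pairing and composition. This is a mild but genuine repackaging --- your version isolates the reusable fact (closure under composition), and you are more explicit than the paper about the one nonobvious point in either approach, namely that the composed periodic term $P_1(L_2(\xx)+P_2(\xx))$ is invariant under a \emph{finite-index} sublattice $\Lambda_2 \cap L_2^{-1}(\Lambda_1)$. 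The paper asserts the analogous term ``is periodic'' without spelling out this index bookkeeping. Both arguments also quietly use that the relevant lattice translates keep the argument of $P_1$ inside $\N^{k_1}$, which holds because that argument equals $G_2$ of a point of $\N^{k_2}$; your write-up is consistent with this but could state it. Overall: correct, essentially the same underlying computation, with a cleaner inductive frame and a more careful justification of the lattice step.
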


\begin{proof}
We proceed by induction on the number $m$ of mutable processors in $\Net$. In
the case $m=0$, the network $\Net$ has only one state, so it is trivially
recurrent.

\old{%not needed: the inductive step covers this case too.
In the case $m=1$, the network consists of an immutable piece $\ProcI$
feeding into a recurrent processor $\Proc$ which feeds into an immutable
piece $\ProcJ$. Also $\ProcI$ can feed directly into $\ProcJ$. Writing
$(I_1,I_2)$ and $J$ for the functions computed by $\ProcI$ and $\ProcJ$
respectively, we have
	\[ f_{\Net,q}(\xx) = J( f_{\Proc,q}(I_1(\xx))+I_2(\xx)). \] Since
$\ProcI$ and $\ProcJ$ are immutable, $I_1, I_2, J$ are linear functions. By
Theorem~\ref{t.recurrent} the right side is the sum of a linear and a
periodic function since $f_{\Proc,q}$ is. }

For the inductive step, suppose $m \geq 1$. Since $\Net$ is
directed acyclic, it has a mutable processor $\Proc$ such
that no other mutable processor feeds into anything
upstream of $\Proc$. If $\Net$ has $k$ inputs, we can
regard the remainder $\Net - \Proc$ as a network with
$m{-}1$ mutable processors and $k{+}k'$ inputs, where $k'$
is the number of edges from $\Proc$ to $\Net-\Proc$. For
each state $q$ of $\Net$, the function $f = f_{\Net,q}$ has
the form
	\[ f(\xx) = g(\xx, h(\xx)) + j(\xx) \] where $g : \N^{k+k'} \to \N^\ell$
is the function computed by $\Net-\Proc$ in initial state $q$; and $h: \N^k
\to \N^{k'}$ and $j: \N^k \to \N^\ell$ are the functions sent by $\Proc$ in
initial state $q$ to $\Net-\Proc$ and the output of $\Net$, respectively.

By Theorem~\ref{t.recurrent} and the inductive hypothesis, each of $g,h,j$ is
the sum of a periodic and a linear function. Writing $g(\yy) = P(\yy) + \bb
\cdot \yy$ and $h(\xx) = Q(\xx) + \cc \cdot \xx$ we have
\[ f(\xx) = P(\xx, Q(\xx)+ \cc \cdot \xx) + \bb \cdot (\xx, Q(\xx) + \cc \cdot \xx) + j(\xx). \]
The first term is periodic and the second is a linear function plus a
periodic function. Since $q$ is arbitrary the proof is complete by
Theorem~\ref{t.recurrent}.
\end{proof}

\subsection{Varying the initial state}

We remark that the emulation claims of our main theorems can be strengthened
slightly, in the following sense.  Our definition of a processor $\Proc$
included a designated initial state $q^0$, but one may instead consider
starting $\Proc$ from any state $q\in Q$, and it may compute a different
function from each $q$.  All of these functions can be computed by the
\emph{same} network $\Net$, by varying the internal states of the gates in
$\Net$. To set up the network $\Net$ to compute the function $f_{\Proc,q}$,
we simply choose an input vector $\xx$ that causes $\Proc$ to transition from
$q^0$ to $q$, then feed $\xx$ to $\Net$ and observe the resulting gate
states.  In the recurrent case, this amounts to adjusting the priming of
topplers.  In the transient case, a ``used'' delayer can be replaced with a
wire, while a used presink becomes a trash edge.

\subsection{Splitter-adder networks}
\label{s.splitteradder}

In this section we show that splitter-adder networks
compute precisely the increasing linear functions. Using this, we
will see how Theorem~\ref{t.recurrent.intro} implies
\cref{t.functions}.

\begin{lemma}
\label{l.linear} Let $k, \ell \geq 1$. The function $f :
\N^k \to \N^\ell$ can be computed by a network of splitters
and adders if and only if $f(\xx)=L\xx$ for some
nonnegative integer $\ell \times k$ matrix $L$.
\end{lemma}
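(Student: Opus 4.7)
The plan is to prove both directions by direct construction, using that splitters and adders are immutable and hence memoryless.

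For the forward direction, I would argue that any function $f$ computed by a splitter-adder network is $\N$-linear. The key point is that splitters and adders are single-state, so each input letter is processed identically regardless of prior history. Formally, because the abelian property lets us process input letters one at a time, and the processors have a single state, we have $f(\xx) = \sum_i x_i f(\basis_i)$, where $f(\basis_i)$ is the output vector produced by feeding a single letter into input $i$. Setting $L$ to be the matrix whose $i$-th column is $f(\basis_i)$, we get $f(\xx) = L\xx$ with entries in $\N$. (Alternatively, one can induct on the number of gates, peeling off one splitter or adder at a time and composing its $\N$-linear transfer function with the $\N$-linear function computed by the remainder, which is $\N$-linear by induction.) One should also check that such a network halts on all inputs, which is automatic because a splitter-adder network with a directed cycle would blow up and fail to halt; so without loss of generality the graph is directed acyclic.

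For the converse direction, given a nonnegative integer matrix $L$, I would build a network explicitly. For each input $i$, let $s_i = \sum_j L_{ji}$ be the total number of copies needed. If $s_i \geq 1$, use a binary tree of $s_i - 1$ splitters to make $s_i$ copies of each letter arriving at input $i$; distribute these so that exactly $L_{ji}$ copies are routed toward output $j$. Then, for each output $j$ with $s'_j := \sum_i L_{ji} \geq 1$, combine all $s'_j$ arriving streams using a binary tree of $s'_j - 1$ adders, whose single output is the output edge $j$. The overall function computed sends $\xx \mapsto L\xx$ by construction.

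The only fiddly step is handling the degenerate cases: an input $i$ with $s_i = 0$ should dispose of its letters via a trash edge, while an output $j$ with $s'_j = 0$ is simply an output edge dangling off an isolated node (e.g., an unused splitter whose input wire carries no letters). Neither case is a real obstacle, but it must be verified that the formal definition of an abelian network in Section~\ref{s.network} accommodates them. I expect no serious technical difficulty; the main conceptual content is the single observation that immutability forces linearity.
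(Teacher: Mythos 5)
Your proof is correct. The converse (building the network from the matrix $L$) is essentially the paper's construction, just written out with explicit splitter and adder trees rather than appealing directly to the multi-way splitters/adders already set up in Section~\ref{s.basic}; your attention to the degenerate cases $s_i=0$ and $s'_j=0$ via trash edges is a small but legitimate refinement that the paper glosses over. The forward direction, however, takes a genuinely different and more elementary route: you argue directly that immutability forces additivity, namely $f(\xx+\yy)=f(\xx)+f(\yy)$, whence $f(\xx)=\sum_i x_i f(\basis_i) = L\xx$. The paper instead invokes the machinery already developed: Proposition~\ref{l.acyclic} (a directed acyclic network of recurrent processors emulates a recurrent processor) combined with Theorem~\ref{t.recurrent} to conclude $f$ is ZILP, and then observes that since every transition map is the identity, the period is $1$ in every coordinate, so the periodic part is a constant, which must be $0$. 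Your argument is shorter and self-contained; the paper's has the stylistic advantage of recycling its structural characterization rather than re-deriving a special case of it. Both handle the halting issue the same way, noting that a directed cycle would make the network fail to halt.
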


\begin{proof}
If a network of splitters and adders has a directed cycle, then it does not
halt on all inputs, and so does not ``compute a function'' according to our
definition.  If the network is directed acyclic then by
\cref{l.acyclic,t.recurrent} it computes a ZILP function.  Since the network
is immutable, the periodic part of any linear + periodic decomposition must
be zero. Conversely, consider a network of $k$ splitters $\ProcS_i$ and
$\ell$ adders $\ProcA_j$, with $L_{ji}$ edges from $\ProcS_i$ to $\ProcA_j$.
Feed each input $i$ into $\ProcS_i$, and feed each $\ProcA_j$ into output
$j$.
\end{proof}

\begin{proof}[Proof of \cref{t.functions}]
Given a function $F \in \mathcal{R}$, the function $\xx
\mapsto F(\xx)-F(\zero)$ can be computed by a finite,
directed acyclic network of splitters, adders and (possibly
primed) topplers.
%(We could formally prove this inductively on the number of nested floor functions. By passing integers out of floor functions, we may assume all naked constants in the numerators of floor functions $\floor{\cdot/m}$ are between $0$ and $m-1$ inclusive. At the end we're left with a function that is zero at zero, plus an integer.)
By \cref{l.acyclic} any such network emulates a recurrent
finite abelian processor, so $F$ has the desired form by
Theorem~\ref{t.recurrent}.

Conversely if $F=L+P$ with $L$ linear and $P$ periodic,
then $F(\xx)-F(\zero)$ is computable by a finite directed
acyclic network of splitters, adders and topplers by
Theorem~\ref{t.recurrent.intro}. We induct on the number of
topplers to show that $F \in \mathcal{R}$. In the base case
there are no topplers, $\Net$ is a splitter-adder network,
so by Lemma~\ref{l.linear}, $F$ is an increasing linear function
of its inputs $x_1, \ldots, x_k$.

Assume now that at least one component of $\Net$ is a toppler. Since $\Net$
is directed acyclic, there is a toppler $\ProcT$ such that no other toppler
is downstream of $\ProcT$. Write $\ProcD$ for the portion of $\Net$
downstream of $\ProcT$, and $\ProcU = \Net - \ProcT - \ProcD$ for the
remainder of the network. Suppose $\ProcU$ sends outputs $r,s,\uu$
respectively to the output of $\Net$, to $\ProcT$, and to $\ProcD$; and that
the toppler $\ProcT$ sends output $t$ to $\ProcD$.

The downstream part $\ProcD$ consists of only splitters and adders, so it
computes a linear function 	
	\[ L(t,\uu) = at+ \bb \cdot \uu\] for some $a \in \N$ and $\bb \in \N^j$,
where $j$ is the number of edges from $\ProcU$ to $\ProcD$. The total output
of $\Net$ is
	\[ F(\xx) - F(\zero) = r + L(t,\uu) = r + a\floor{\frac{s}{\lambda}} +
\bb \cdot \uu. \] Each of $r,s,\uu$ is a function of the input $\xx =
(x_1,\ldots,x_k)$. By induction, $r$ and $s$ and each $u_i$ belongs to the
class $\mathcal{R}$, so $F$ does as well.
\end{proof}
% This proof shows that the floor depth of F is at most the maximal number of topplers on a directed path in the network.
% So the construction of the emulating network in Theorem~\ref{t.recurrent.intro} shows that every k-input recurrent processor computes a function with floor depth <=k. Is this sharp?

\section{Basic reductions}
\label{s.basic}

In this section we describe some elementary network reductions.

\old{
\begin{definition}
Processor $\Proc$ \emph{emulates} processor $\Proc'$ in initial state $q'$ if
there exists an initial state $q$ such that
	\[ f_{\Proc,q} = f_{\Proc',q'}. \]
\end{definition}
}

\subsection{Multi-way splitters and adders}
An $n$-\df{splitter} computes the function $\N \to \N^n$ sending $x \mapsto
(x,\ldots,x)$. It is emulated by a directed binary tree of $n-1$ splitters
with the input node at the root and the $n$ output nodes at the leaves.
Similarly, an $n$-\df{adder} computes the function $\N^n \to \N$ given by
$(x_1,\ldots,x_n)\mapsto x_1+\cdots+x_n$.  It is emulated by a tree of $n-1$
adders.

\subsection{The power of feedback}

\begin{proof}[Proof of Proposition~\ref{p.feedback}]
To emulate an unprimed $\lambda$-toppler, let $r = \ceiling{\log_2 \lambda}$ and let
	\[ 2^r - \lambda =  \sum_{i=0}^{r-2} b_i 2^i, \qquad
b_i \in \{0,1\} \] be the binary representation of $2^r -
\lambda$. Consider $r$ $2$-topplers
$H_0,H_1,\ldots,H_{r-1}$ in series: the input node is
$H_0$, and each $H_i$ feeds into $H_{i+1}$ for $0\leq i <
r{-}1$.  For $i < r-1$ the $2$-toppler $H_i$ is primed with
$b_i$. The last $2$-toppler $H_{r-1}$ is unprimed, and
feeds into an $s$-splitter ($s = 1+\sum_{i=0}^{r-1} b_i$)
which feeds one letter each into the output node $o$ and
the nodes $H_i$ such that $b_i=1$. This network repeatedly
counts in binary from $2^r - \lambda$ to $2^r -1$, and it
sends output precisely when it transitions from $2^r - 1$
back to $2^r - \lambda$. Hence, it emulates a
$\lambda$-toppler.  See Figure~\ref{fig:cycles} for
examples.

We can emulate a $q$-primed $\lambda$-toppler using the same network, but with
different initial states for its $2$-topplers.  The required states are simply those that result from feeding $q$ input letters into the network described above.

A delayer is constructed by splitting the output of a
$2$-toppler and adding one branch back in as input to the
$2$-toppler (Figure~\ref{fig:cycles}).
\end{proof}

\subsection{Primed topplers}
The following shows that we can also do without primed topplers (at the
expense of using a transient gate: the presink).

\begin{lemma}\label{l.unpriming}
A primed $\lambda$-toppler can be emulated by a directed acyclic network
comprising an unprimed $\lambda$-toppler, adders, splitters, and a presink.
\end{lemma}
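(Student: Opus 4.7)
The plan is to construct a directed acyclic network that computes $x \mapsto \lfloor (x+q)/\lambda\rfloor$ by injecting $q$ extra letters into an unprimed $\lambda$-toppler on the very first input, and no extra letters thereafter. The presink is exactly the gadget that fires once and then stops, so it is the natural tool.

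Concretely, I would build the following network. Split the input $x$ into two copies. Feed one copy into a presink, whose output is $\min(x,1) = \mathbf{1}[x \geq 1]$. Then use a tree of splitters to duplicate the single letter coming out of the presink into $q$ copies (this is an $n$-splitter for $n = q$, which exists by the multi-way splitter construction at the start of \textsection\ref{s.basic}). Finally, feed those $q$ copies together with the other copy of $x$ into a $(q{+}1)$-adder, and pass its output into an unprimed $\lambda$-toppler. Declare the toppler's output to be the output of the network.

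The verification is a short case check. The network computes
\[
x \;\longmapsto\; \Bigl\lfloor \frac{x + q\cdot \min(x,1)}{\lambda}\Bigr\rfloor.
\]
If $x = 0$, the presink outputs $0$, so the value is $\lfloor 0/\lambda\rfloor = 0$, which agrees with the primed toppler since $\lfloor q/\lambda\rfloor = 0$ for $1\le q<\lambda$. If $x\geq 1$, the presink outputs $1$, so the value is $\lfloor (x+q)/\lambda\rfloor$, again matching the primed toppler. The network is visibly acyclic (it is a feed-forward tree followed by a single toppler), uses exactly one unprimed $\lambda$-toppler, one presink, and some number of splitters and adders.

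There is no real obstacle here; the only point to flag is that the construction relies on $0 \leq q < \lambda$ (so that $\lfloor q/\lambda\rfloor = 0$), which is exactly the hypothesis in the definition of a primed toppler. The abelian property of the network guarantees that the answer does not depend on any scheduling subtlety between the presink firing and the rest of the letters passing through the adder into the toppler.
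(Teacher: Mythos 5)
Your construction is exactly the paper's: split the input, feed one copy into a presink, duplicate its output $q$ times with a splitter tree, add those to the other copy, and send the total into an unprimed $\lambda$-toppler. Incidentally, your identity $\lfloor(x+q)/\lambda\rfloor=\lfloor(x+q\min(x,1))/\lambda\rfloor$ is the correct one to cite --- the paper's prose has a small slip and writes $(x-1)^+$ where it means $\min(x,1)$ (which would fail at $x=1$, $q=\lambda-1$), although the figure and the verbal description of the network agree with what you built.
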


\begin{proof}
See Figure~\ref{unpriming}.  For $0\leq q<\lambda$ we have
$\lfloor(x+q)/\lambda\rfloor=\lfloor(x+q(x-1)^+)/\lambda\rfloor$, so we can
emulate a $q$-primed $\lambda$-toppler by splitting the input, feeding it
into a presink, and adding $q$ copies of the result into the original input
before sending it to an unprimed $\lambda$-toppler.
\end{proof}

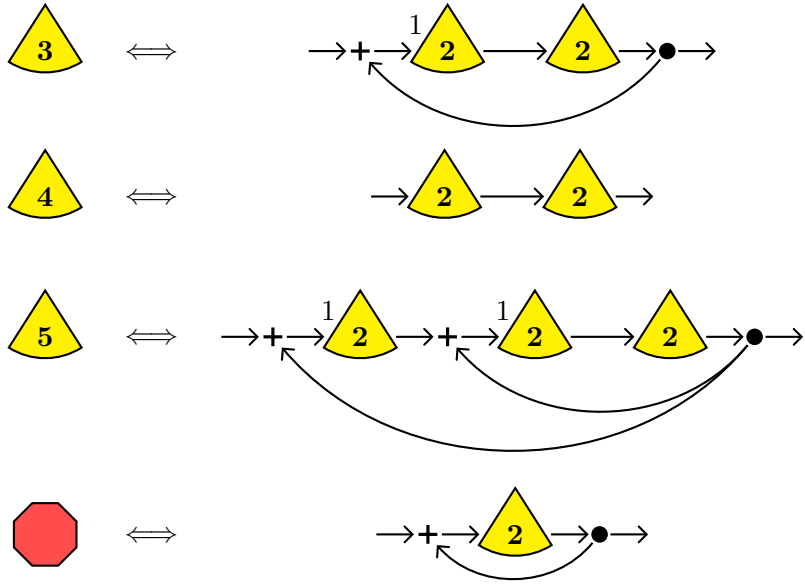
\begin{figure}
\centering
\setlength{\tabcolsep}{8pt}
\begin{tabular}{ccc}

\begin{tikzpicture}[abelian,baseline=-1mm]
  \node[toppler] (a) {$3$};
  %\draw[<-] (a.west) -- ++(-1,0);
  %\draw[->] (a.east) -- ++(1,0);
\end{tikzpicture}
&$\Longleftrightarrow$&
\begin{tikzpicture}[abelian,baseline=-1mm]
  \node[toppler,prime={$1$}] (a3) {$2$};
    \node[toppler] (b3) [right of=a3] {$2$};
%  \node[toppler] (a3) {${}^{1}\!/2$};
%  \node[toppler] (b3) [right of=a3] {$2$};
  \node[splitter] (s) [right=.5cm of b3] {};
  \node[adder] (p) [left=5mm of a3] {};
  \draw[<-] (p.west) -- ++(-.5,0);
  \draw[->] (s.east) -- ++(.5,0);
  \draw
    (p) edge[->] (a3) (a3) edge[->] (b3) (b3) edge[->] (s)
    (s) edge[->,bend left=50] (p);
\end{tikzpicture}
\\

\begin{tikzpicture}[abelian,baseline=-1mm]
  \node[toppler] (a) {$4$};
  %\draw[<-] (a.west) -- ++(-1,0);
  %\draw[->] (a.east) -- ++(1,0);
\end{tikzpicture}
&$\Longleftrightarrow$&
\begin{tikzpicture}[abelian,baseline=-1mm]
  \node[toppler] (a) {$2$};
    \node[toppler] (b) [right of=a] {$2$};
%  \node[toppler] (a) {$2$};
%  \node[toppler] (b) [right of=a] {$2$};
  \draw[<-] (a.west) -- ++(-.5,0);
  \draw[->] (b.east) -- ++(.5,0);
  \draw
    (a) edge[->] (b);
\end{tikzpicture}
\\[1cm]

\begin{tikzpicture}[abelian,baseline=-1mm]
  \node[toppler] (a) {$5$};
  %\draw[<-] (a.west) -- ++(-1,0);
  %\draw[->] (a.east) -- ++(1,0);
\end{tikzpicture}
&$\Longleftrightarrow$&
\begin{tikzpicture}[abelian,baseline=-1mm]
%  \node[toppler] (a) {${}^1\!/2$};
  \node[toppler,prime={$1$}] (a) {$2$};
  \node[adder] (p2) [right=5mm of a] {};
  \node[toppler,prime={$1$}] (b) [right=5mm of p2] {$2$};
  \node[toppler] (c) [right of=b] {$2$};
  \node[splitter] (s) [right=.5cm of c] {};
  \node[adder] (p) [left=5mm of a] {};
  \draw[<-] (p.west) -- ++(-.5,0);
  \draw[->] (s.east) -- ++(.5,0);
  \draw[->] (p) -- (a);
  \draw[->] (a) -- (p2);
  \draw[->] (p2) -- (b);
  \draw[->] (b) -- (c);
  \draw[->] (c) -- (s);
    \draw
    (s) edge[->,bend left=50] (p)
    (s) edge[->,bend left=50] (p2);

\end{tikzpicture}
\\

\begin{tikzpicture}[abelian,baseline=-1mm]
  \node[delayer] (a) {};
  %\draw[<-] (a.west) -- ++(-1,0);
  %\draw[->] (a.east) -- ++(1,0);
\end{tikzpicture}
&$\Longleftrightarrow$&
\begin{tikzpicture}[abelian,baseline=-1mm]
  \node[toppler] (a) {$2$};
  \node[splitter] (s) [right=.5cm of a] {};
  \node[adder] (p) [left=5mm of a] {};
  \draw[<-] (p.west) -- ++(-.5,0);
  \draw[->] (s.east) -- ++(.5,0);
  \draw
    (p) edge[->] (a) (a) edge[->] (s)
    (s) edge[->,bend left=50] (p);
\end{tikzpicture}
\\
\end{tabular}

\caption{Emulating a $3$-toppler, $4$-toppler, $5$-toppler
and delayer by networks of $2$-topplers.}\label{fig:cycles}
\end{figure}

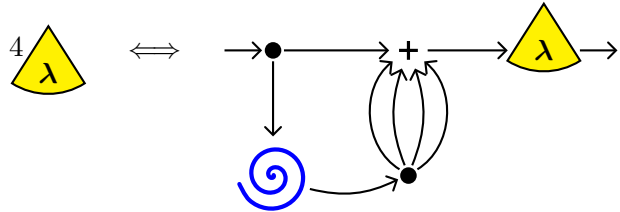
\begin{figure}
\centering
\setlength{\tabcolsep}{8pt}
\begin{tabular}{ccc}

\begin{tikzpicture}[abelian,baseline=2mm]
%  \node[toppler] (a) {${}^q\!/\lambda$};
  \node[toppler,prime={$4$}] (a) {$\lambda$};
\end{tikzpicture}
&$\Longleftrightarrow$&
\begin{tikzpicture}[abelian,baseline=-1mm]
%  \node[toppler] (a) {$\lambda$};
  \node[toppler] (a) {$\lambda$};
  \node[adder,outer sep=3pt] (p) [left of= a] {};
  \node[splitter] (s) [left of =p] {};
  \node[presink] (x) [below=10mm of s] {};
  \node[splitter] (xs) [right of=x] {};
%  \node[above=.5mm of xs] {$\ \begin{array}{c}\ldots\\ q\end{array}$};
  \draw[<-] (s.west) -- ++(-.5,0);
  \draw[->] (a.east) -- ++(.5,0);
  \draw (s) edge[->] (p) (p) edge[->] (a);
  \draw (s) edge[->] (x) (x) edge[->,bend right=20] (xs)
  (xs) edge[->,bend left=55] (p)
  (xs) edge[->,bend left=20] (p)
  (xs) edge[->,bend right=20] (p)
  (xs) edge[->,bend right=55] (p);
\end{tikzpicture}
\\
\end{tabular}
\caption{Emulating a primed toppler with an unprimed
toppler.}\label{unpriming}
\end{figure}

\subsection{Reduction to unary output}

Let $\Proc$ be an abelian processor that computes $f : \N^k \to \N^\ell$. If
$\ell=1$ then we say that $\Proc$ has \emph{unary output}. All of the logic
gates in Table~\ref{fig:gates} have unary output with the exception of the
splitter.  The next lemma shows that, for rather trivial reasons, it is
enough to emulate processors with unary output.

\begin{lemma}
\label{l.unaryoutput} Any abelian processor can be emulated by a directed
acyclic network of splitters and processors with unary output.
\end{lemma}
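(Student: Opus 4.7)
The plan is straightforward: given an abelian processor $\Proc$ on input alphabet $A$, output alphabet $B$, and state space $Q$ with transition maps $t_i$ and output maps $o_i:Q\to\N^B$, I will decompose $\Proc$ into $|B|$ processors $\Proc_b$ (one for each $b\in B$), each having the same state space and transition maps but projecting the output onto the single coordinate $b$. The network will consist of an $|B|$-splitter on each input wire $i\in A$, with one outgoing branch of each splitter feeding the corresponding input $i$ of each $\Proc_b$; the single output of $\Proc_b$ becomes the $b$-output of the overall network.

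First I would formally define $\Proc_b$: it has input alphabet $A$, output alphabet $\{b\}$, state space $Q$, the same transition maps $t_i$ as $\Proc$, and output maps $o_i^{(b)}:Q\to\N^{\{b\}}$ defined by $o_i^{(b)}(q)=(o_i(q))_b$. The abelian relations \eqref{commute} for $\Proc_b$ follow immediately from those for $\Proc$ by projecting the vector identity $o_i+o_j t_i=o_j+o_i t_j$ onto the $b$-coordinate. Since $\Proc_b$ uses the same $t_i$ and $q^\start$ as $\Proc$, the accessibility condition on the starting state is also inherited.

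Next I would verify that the constructed network emulates $\Proc$. Using the $|B|$-splitter (which is itself emulated by a tree of binary splitters, as noted just above in the multi-way splitter remark), input $x_i$ letters on input wire $i$ produce $x_i$ letters on each of the $|B|$ branches. Thus each $\Proc_b$ receives exactly the input vector $\xx$, and emits $f_b(\xx)$ letters on its unique output, where $f_b$ is the $b$-coordinate of the function $f$ computed by $\Proc$. The output vector of the network is therefore $(f_b(\xx))_{b\in B}=f(\xx)$, as required. The network is directed acyclic because splitters feed only into the (terminal) unary-output processors.

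There is essentially no obstacle here; the only subtlety is bookkeeping to ensure that the processors $\Proc_b$ are well-defined abelian processors and that the multi-way splitters (which are not gates in our official palette but are emulated by binary splitters) can be freely used. Both points have already been addressed earlier in the paper.
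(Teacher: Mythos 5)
Your proof is correct and takes essentially the same approach as the paper: decompose $\Proc$ into unary-output processors $\Proc_b$ by projecting the output, and feed each input through an $|B|$-splitter into all of the $\Proc_b$. The paper states this more tersely but the construction and justification are the same.
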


\begin{proof}
Let $\Proc$ compute $f = (f_1,\ldots,f_\ell) : \N^k \to \N^\ell$. By ignoring
all outputs of $\Proc$ except the $j$th, we obtain an abelian processor
$\Proc_j$ that computes $f_j$. Each $\Proc_j$ has unary output, and $\Proc$
is emulated by a network that sends each input into an $\ell$-splitter that
feeds into $\Proc_1, \ldots, \Proc_\ell$ (Figure~\ref{f.unaryoutput}).
\end{proof}
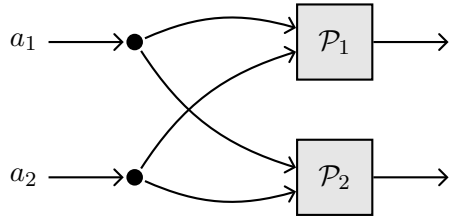
\begin{figure}
\centering
\begin{tikzpicture}[abelian]
  \node[block,minimum height=1cm,minimum width=1cm] (P1) {$\Proc_1$};
  \node[block,minimum height=1cm, minimum width=1cm,below of=P1] (P2) {$\Proc_2$};
  \node[splitter,left=2cm of P1] (a1) {};
  \node[splitter,left=2cm of P2] (a2) {};
  \node[left=1cm of a1] (aa1) {$a_1$};
  \node[left=1cm of a2] (aa2) {$a_2$};

  \draw[->]
  (aa1) edge (a1)
  (aa2) edge (a2)
  (a1) edge[bend left=20] (P1)
  (a1) edge[bend right=20] (P2)
  (a2) edge[bend left=20] (P1)
  (a2) edge[bend right=20] (P2);
  \draw
  (P1.east) edge[->] ++(1,0)
  (P2.east) edge[->] ++(1,0);
\end{tikzpicture}
\caption{Emulating a $2$-output abelian processor with two unary-output processors.}
\label{f.unaryoutput}
\end{figure}

In the subsequent proofs we can thus assume that the processors to be
emulated have unary output. By a \df{$k$-ary} processor we mean one with $k$
inputs.  A $1$-ary processor is sometimes called \df{unary}.

\section{The recurrent case}
\label{s.recurrent}

In this section we prove Theorem~\ref{t.recurrent.intro}.  By
\cref{l.unaryoutput} we may assume that the recurrent processor to be
emulated has unary output. We will proceed by induction on the number of
inputs.

\subsection{Unary case}

We start with the unary case (i.e.\ one input), which will
form the base of our induction.  An alternative would be to
start the induction with the trivial case of zero inputs,
but the simplicity of the unary case is illustrative.

By Theorem~\ref{t.recurrent}, a recurrent unary processor
computes an increasing function $f: \N \to \N$ of the form
$f(x) = cx + P(x)$, where $c \in \Q_{\geq 0}$ and $P: \N
\to \Q$ is periodic.

\begin{lemma}\label{unary-recurrent}
Let $\Proc$ be a recurrent unary processor that computes $f(x) = cx + P(x)$,
where $P$ is periodic of period $\lambda$. Then $\Proc$ can be emulated by a
network of adders, splitters and (suitably primed) $\lambda$-topplers.
\end{lemma}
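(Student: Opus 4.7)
The plan is to write $f$ as a non-negative integer linear combination of functions of the form $x\mapsto\floor{(x+q)/\lambda}$, each computed by a suitably primed $\lambda$-toppler, and then implement the combination with adders and splitters.

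\emph{Setup.} Because $f$ is increasing with $f(0)=0$ and $f(x)=cx+P(x)$ with $P$ of period $\lambda$, we have $f(x+\lambda)-f(x)=c\lambda$ for every $x\in\N$, so $m:=c\lambda$ is a non-negative integer. Define the increments $d_j:=f(j)-f(j-1)$ for $j\geq 1$. These are non-negative integers, satisfy $d_{j+\lambda}=d_j$ by periodicity of $P$, and telescope to $\sum_{j=1}^{\lambda}d_j=m$.

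\emph{Key identity.} I would establish
\[ f(x) \;=\; d_\lambda\floor{\frac{x}{\lambda}} + \sum_{j=1}^{\lambda-1} d_j\floor{\frac{x+\lambda-j}{\lambda}}, \qquad x\in\N. \]
Writing $x=q\lambda+r$ with $0\leq r<\lambda$, one checks that $\floor{(x+\lambda-j)/\lambda} = q + \one[r\geq j]$ for $1\leq j\leq \lambda-1$ and $\floor{x/\lambda}=q$, so the right-hand side collapses to $qm + \sum_{j=1}^{r} d_j$. On the other hand, iterating $f(y+\lambda)-f(y)=m$ gives $f(q\lambda+r) = qm + f(r) = qm + \sum_{j=1}^{r}d_j$, so the two sides agree.

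\emph{Network construction.} Feed the input into an $m$-splitter; for each $j\in\{1,\dots,\lambda-1\}$ route $d_j$ of the resulting wires into a $(\lambda-j)$-primed $\lambda$-toppler, and route $d_\lambda$ wires into an unprimed $\lambda$-toppler; then combine all $m$ toppler outputs with an $m$-adder whose output is the network's output. By \cref{s.basic} the $m$-splitter and $m$-adder are realized by directed acyclic binary trees of splitters and adders, so the whole construction is a finite directed acyclic network of adders, splitters, and suitably primed $\lambda$-topplers. The displayed identity shows that it computes $f$.

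There is no serious obstacle; the entire content of the lemma is the observation that an increasing recurrent unary $f$ decomposes cleanly into $\lambda$-periodic non-negative integer jumps $d_j$, each implementable by $d_j$ parallel copies of an appropriately primed $\lambda$-toppler. (The degenerate case $m=0$ forces $f\equiv 0$ and is handled trivially.)
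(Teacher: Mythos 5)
Your proposal is correct and follows essentially the same route as the paper: both build a parallel bank of $m=c\lambda$ topplers fed by an $m$-splitter and merged by an $m$-adder, with $d_i=f(i)-f(i-1)$ copies primed to state $\lambda-i$. The only difference is cosmetic---you prove the closed-form identity for $f$ explicitly, whereas the paper argues that the network is linear-plus-$\lambda$-periodic and then matches initial values $f(1),\dots,f(\lambda)$; these amount to the same verification.
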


\begin{proof}
Observe first that $c\lambda$ is an integer: since
$f(0)=0$, we have $P(\lambda) = P(0)=0$ thus $f(\lambda) =
c\lambda \in \N$. We construct a network of $c\lambda$
parallel $\lambda$-topplers as follows: the (unary) input
is split (by a $c\lambda$-splitter) into $c\lambda$
streams, each of which feeds into a separate
$\lambda$-toppler.  The outputs of the topplers are then
combined (by a $c\lambda$-adder) to a single output
(Figure~\ref{fig:topplers}).

After $m\lambda$ letters are input to this network, $m \in \N$, each toppler
will return to its original state having output $m$ letters, for a total
output of $m \times c\lambda = c(m\lambda)$; thus the network does compute
$cx+Q(x)$ where $Q$ has period $\lambda$ or some divisor of $\lambda$.  To
force $Q=P$ it suffices to choose the initial state $q$ in such a way that
the network's output for $x = 1,2,\dots,\lambda$ matches
$f(1),\dots,f(\lambda)$. This is easily done by setting $d_i =
f(x)-f(x{-}1)$, and for each $i$ with $1 \le i \le \lambda$, starting $d_i$
topplers in state $\lambda{-}i$.
\end{proof}

Figure~\ref{fig:topplers} illustrates the network constructed to compute the
function $f = \frac34 x + P(x)$ where $P$ has period 4 with $P(0)=0$,
$P(1)=\frac14$, $P(2)=\frac64$ and $P(3)=\frac34$.  The values of $f$ begin
$0,1,3,3,3,4,6,6,6,7,9,9,9,\dots$.  The ``I/O diagram" of $f$ is shown at the
top of the figure; dots represent input letters and bars are output letters;
the unfilled circle represents the initial state.
%In the network diagram, each toppler's subscript indicates the state in which it is begun.

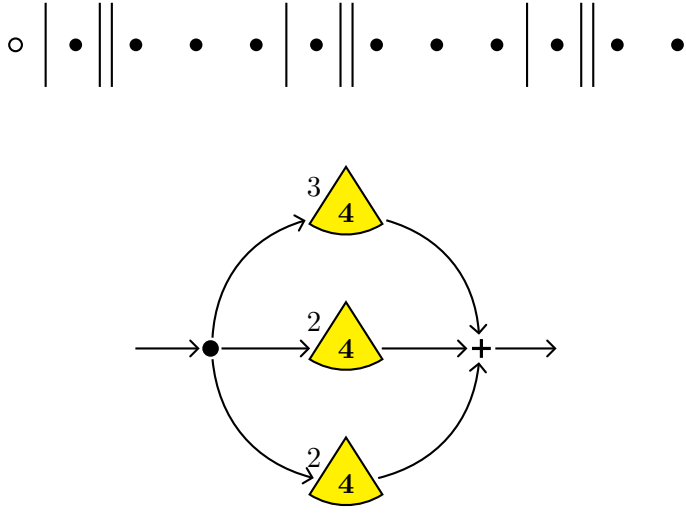
\begin{figure}
\centering
\begin{tikzpicture}[contour,scale=.8]
   \node[circ] at (0,0) {};
   \foreach \x in {1,...,11} {
   \node[dot] at (\x,0) {};  } ;
   \foreach \x in {1,1.9,2.1}
   \foreach \i in {0,4,8} { {
   \draw (\x+\i-.5,-.7) -- ++(0,1.4); } } ;
\end{tikzpicture}
\\[1cm]
\begin{tikzpicture}[abelian]
  \node[toppler,prime={$3$}] (a) {$4$};
  \node[toppler, below of=a,prime={$2$}] (b) {$4$};
  \node[toppler, below of=b,prime={$2$}] (c) {$4$};
  \node[splitter,left of=b] (i) {};
  \node[adder, right of=b] (o) {};
  \draw[->]
  (i) edge[bend left=35] (a) edge (b) edge[bend right=35] (c);
  \draw[<-]
  (o) edge[bend right=35] (a) edge (b) edge[bend left=35] (c);
  \draw (o) edge[->] ++(1,0);
  \draw (i) edge[<-] ++(-1,0);
\end{tikzpicture}
\caption{Emulating a unary processor with a network of primed topplers.}\label{fig:topplers}
\end{figure}

\subsection{Reduction to the meager case}
\label{s.meager}

A recurrent $k$-ary processor computes a function $f: \N^k
\to \N$ of the form $f(\xx)=\bb\cdot \xx + P(\xx)$ where
$P$ is periodic.

\begin{definition}
A recurrent processor is \df{nondegenerate} if $b_i \neq 0$
for all $i$.
\end{definition}

Note that if $b_i=0$ then $f(\xx)$ does not depend on the
coordinate $x_i$. In this case, by
Theorem~\ref{t.recurrent} there is a finite $(k{-}1)$-ary
recurrent processor that computes $f$.

Denote the lattice of periodicity of $P$ by $\Lambda
\subset \Z^k$.  Let $\lambda_i$ be the smallest positive
integer such that $\lambda_i \basis_i \in \Lambda$.  For
the purposes of the forthcoming induction, we focus on the
last coordinate.

\begin{definition} We say that a $k$-ary processor $\Proc$ is \df{meager} if
$f_\Proc(\lambda_k \basis_k)=1$.
%We say that $\Proc$ is \emph{fully meager} if $f_\Proc(\lambda_i \basis_i) = 1$ for all $i=1,\ldots,k$.
\end{definition}

Note that if $\Proc$ is meager then for all $\xx \in \N^k$ we have
	 \[ f_\Proc(\xx+\lambda_k \basis_k) = f_\Proc(\xx) + 1. \] Next we
emulate a nondegenerate recurrent processor by a network of meager
processors.

\begin{lemma}
\label{l.meager} Let $\Proc$ be a nondegenerate recurrent $k$-ary processor
and let $m = f_\Proc(\lambda_k \basis_k) = \lambda_k b_k$.
 Then $\Proc$
can be emulated by a network of $m-1$ splitters, $m-1$ adders, and $m$ meager
%nondegenerate
recurrent $k$-ary processors.
\end{lemma}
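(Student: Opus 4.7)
My plan is to apply the meagerization identity~\eqref{e.meagerization} pointwise to the values of $f = f_\Proc$, so as to decompose $f$ as a sum of $m$ ZILP functions, each of which is meager, and then realize each summand by a meager recurrent processor via \cref{t.recurrent}. Concretely, since $\Proc$ is nondegenerate recurrent, \cref{t.recurrent} gives $f(\xx) = \bb \cdot \xx + P(\xx)$ with $P$ periodic of lattice $\Lambda$, and by definition of $m$ we have the key identity $f(\xx + \lambda_k \basis_k) = f(\xx) + m$ for every $\xx$. I will set
\[
g_j(\xx) := \Bigl\lfloor \tfrac{f(\xx)+j}{m} \Bigr\rfloor, \qquad 0 \le j < m,
\]
so that meagerization applied to the integer $f(\xx)$ yields $f = \sum_{j=0}^{m-1} g_j$. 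From $f(\xx+\lambda_k\basis_k)=f(\xx)+m$ it follows instantly that $g_j(\xx+\lambda_k\basis_k)=g_j(\xx)+1$, so that each $g_j$ is meager in the last coordinate.

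The core verification is that each $g_j$ is ZILP, so that \cref{t.recurrent} produces a recurrent $k$-ary processor $\Proc_j$ computing $g_j$. Monotonicity of $g_j$ is immediate from that of $f$, and $g_j(\zero) = \lfloor j/m \rfloor = 0$ because $0 \le j < m$. For the linear-plus-periodic decomposition, I introduce the residue function $\rho(\xx) := f(\xx) \bmod m$, and show $\rho$ is periodic: its lattice of periods is $\Lambda' := \{\vv \in \Lambda : \bb \cdot \vv \in m\Z\}$, which is of finite index in $\Lambda$ (hence in $\Z^k$) because $\bb \cdot \Lambda$ is a finitely generated subgroup of $\Q$ and $m\Z$ has finite index in any such group containing it (indeed $\Lambda$ already forces $f$ to be integer-valued, so $\bb \cdot \Lambda \subset \Z$). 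Writing
\[
g_j(\xx) = \tfrac{f(\xx)-\rho(\xx)}{m} + \Bigl\lfloor \tfrac{\rho(\xx)+j}{m} \Bigr\rfloor = \tfrac{\bb}{m}\cdot\xx + \tfrac{P(\xx)-\rho(\xx)}{m} + \Bigl\lfloor \tfrac{\rho(\xx)+j}{m} \Bigr\rfloor,
\]
the first term is $\Q$-linear and the remaining two terms are periodic (their periods lie in $\Lambda \cap \Lambda' = \Lambda'$). Thus each $g_j$ is ZILP, so \cref{t.recurrent} produces the required meager recurrent $\Proc_j$.

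With the $\Proc_j$ in hand, the emulating network is straightforward: take an $m$-splitter on each of the $k$ input wires (so that each $\Proc_j$ receives a full copy of $\xx$), feed the $j$-th copy of the input into $\Proc_j$, and combine the $m$ unary outputs into one via an $m$-adder. By the multi-way splitter/adder reductions in \textsection\ref{s.basic}, this uses $m-1$ (binary) adders and the stated number of splitters, together with the $m$ meager processors $\Proc_0,\ldots,\Proc_{m-1}$. The main obstacle I anticipate is the periodicity of $\rho$ — everything else is either a direct calculation or an appeal to \cref{t.recurrent} — and the key is to confirm that $\bb \cdot \Lambda$ is contained in $\Z$ (using integrality of $f$ on lattice points) so that $\Lambda'$ really is of finite index.
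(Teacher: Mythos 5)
Your proof is correct and follows essentially the same route as the paper: the same meagerization decomposition $f = \sum_{j=0}^{m-1}\lfloor (f+j)/m\rfloor$, the same verification of meagerness via $f(\xx+\lambda_k\basis_k)=f(\xx)+m$, and the same splitter/adder wiring. The paper proves that each $f_j$ is computed by a recurrent processor either by a direct ZILP check (which it leaves to the reader) or by observing that $f_j$ is obtained by post-composing $\Proc$ with a primed $m$-toppler and invoking \cref{l.acyclic}; you have simply carried out the first of these two alternatives in full, via the residue function $\rho=f\bmod m$.
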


\begin{proof}
For each $j=0,\ldots,m-1$ consider the function
\[ f_j(\xx) = \floor{ \frac{f(\xx)+j}{m} }. \]
We claim that $f_j=f_{\Proc_j}$ for some recurrent processor $\Proc_j$.  One
way to prove this is to use Theorem~\ref{t.recurrent}, checking from the
above formula that since $f$ is ZILP, $f_j$ is also ZILP.  Another route is
to note that $f_j$ is computed by a network in which the output of $\Proc$ is
fed into a $j$-primed $m$-toppler.  By \cref{l.acyclic}, $f_j$ is therefore
computed by some recurrent processor.  (Note however that this network itself
will not help us to emulate $\Proc$ using gates, since it contains $\Proc$!)
\cref{f.meager} illustrates an example of the reduction.

Now we use the meagerization identity \eqref{e.meagerization}:
	\[ f= \floor{\frac{f}{m}} + \floor{\frac{f+1}{m}} + \cdots +
\floor{\frac{f+m-1}{m}} \] Thus, $\Proc$ is emulated by an $m$-splitter that
feeds into $\Proc_0,\ldots,\Proc_{m-1}$, with the results fed into an
$m$-adder. It remains to check that each $\Proc_j$ is meager.
%For that it suffices to check $\lambda_k(\Proc_i) = \lambda_k(\Proc)$.
We have
\[ f_j(\xx + \lambda_k \basis_k)
= \floor{ \frac{f(\xx+\lambda_k \basis_k)+j}{m}} =
\floor{ \frac{f(\xx) + \lambda_k b_k+j}{m}} = f_j(\xx)+1. \qedhere \]
\end{proof}

\old{%%%
The functions $f_j$ in the proof of Lemma~\ref{l.meager} satisfy for $i=2,\ldots,k$
	\[ f_j(\xx + m \lambda_i \basis_i)
	%= \floor{ \frac{f(\xx+m \lambda_i \basis_i)+j}{m}}
	= \floor{ \frac{f(\xx) + m\lambda_i b_i+j}{m}} = f_j(\xx)+\lambda_i b_i  \]
so
	\[ f_j(\xx) = \bb' \cdot \xx + P_j(\xx) \]
with $P_j$ is periodic modulo $\lambda'_1 \Z \times \cdots \times \lambda'_k \Z$, where
$\lambda'_1 = \lambda_1$ and $\lambda'_i = m\lambda_i$ for $i=2,\ldots,k$ and
$\lambda'_1 b'_1 = 1$ and $\lambda'_i b'_i = \lambda_i b_i$ for $i=2,\ldots,k$.
}

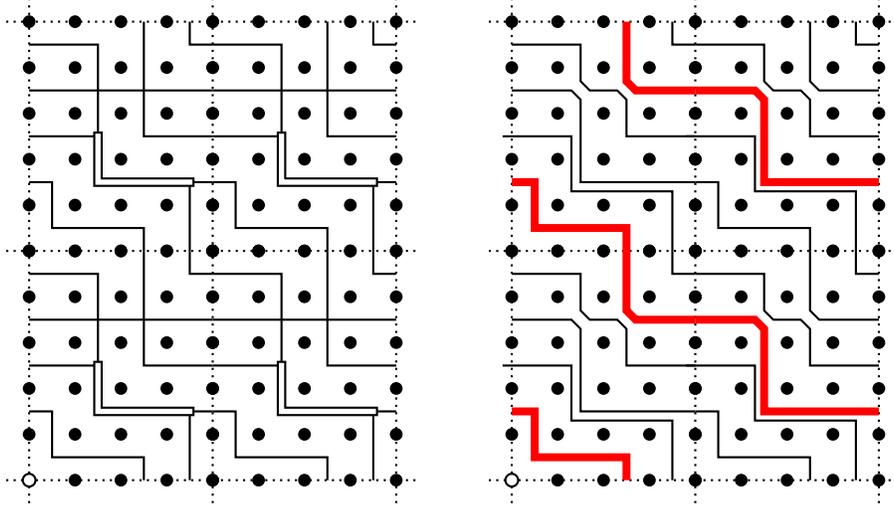
\begin{figure}
\centering
\begin{tikzpicture}[contour,scale=.61,cm={0,1,1,0,(0cm,0cm)}]
   \def\r{-> ++(1,0)}
   \def\hr{-> ++(.5,0)}
   \def\d{-> ++(0,-1)}
   \def\hd{-> ++(0,-.5)}

  \draw[xstep=5,ystep=4,dotted] (-.5,-.5) grid (10.5,8.5);
  \foreach \i in {0,5} {
  \foreach \j in {0,4} {
   \begin{scope}[shift={(\i,\j)}]
   \foreach \x in {0,...,5} {
   \foreach \y in {0,...,4} {
   \node[dot] at (\x,\y) {};  } } ;
   \draw (0,2.5) \hr\d\d\r\hd;
   \draw (0,3.5) \hr\r;
   \draw (1.5,4) \hd;
   \draw (2.5,1.5) \d\hd;
   \draw (2.5,1.5) \r\r\d\hd;
   \draw (2.5,4) \hd\d\r\r\hr;
   \draw (3.5,4) \hd\d\d\d\hd;
   \draw (4.5,4) \hd\hr;
   \draw[style=double, double distance=2pt,line cap=rect] (1.5,3.5) \d\d\r ;
   \end{scope};
  } };
     \node[circ] at (0,0){};
\end{tikzpicture}
\qquad
\begin{tikzpicture}[contour,scale=.61,cm={0,1,1,0,(0cm,0cm)}]
   \def\r{-> ++(1,0)}
   \def\hr{-> ++(.5,0)}
   \def\d{-> ++(0,-1)}
   \def\hd{-> ++(0,-.5)}
   \def\lr{-> ++(.8,0)}
   \def\gr{-> ++(1.2,0)}
   \def\ld{-> ++(0,-.8)}
   \def\gd{-> ++(0,-1.2)}
   \def\dd{-> ++(.2,-.2)}

  \draw[xstep=5,ystep=4,dotted] (-.5,-.5) grid (10.5,8.5);

  \foreach \i in {0,1} {
   \begin{scope}[shift={(\i*5,0)}]
   \foreach \x in {0,...,5} {
   \foreach \y in {0,...,4} {
   \node[dot] at (\x,\y) {};  } } ;
   \draw[hili] (0,2.5) \hr\d\d\r\hd;
   \draw (0,3.5) \hr\lr\d\gd\gr\d\hd;
   \draw (1.5,4) \hd\d\d\r\lr\dd\ld\hd;
   \draw (2.5,4) \hd\d\lr\dd -> ++(0,-.6) \dd\lr\d\hd;
   \draw[hili] (3.5,4) \hd\ld\dd\lr\hr;
   \draw (4.5,4) \hd\hr;
   \end{scope};
   };

  \foreach \i in {0,1} {
   \begin{scope}[shift={(5*\i,4)}]
   \foreach \x in {0,...,5} {
   \foreach \y in {0,...,4} {
   \node[dot] at (\x,\y) {};  } } ;
   \draw (0,2.5) \hr\d\d\r\hd;
   \draw (0,3.5) \hr\lr\d\gd\gr\d\hd;
   \draw[hili] (1.5,4) \hd\d\d\r\lr\dd\ld\hd;
   \draw (2.5,4) \hd\d\lr\dd -> ++(0,-.6) \dd\lr\d\hd;
   \draw (3.5,4) \hd\ld\dd\lr\hr;
   \draw (4.5,4) \hd\hr;
   \end{scope};
   };
     \node[circ] at (0,0){};
\end{tikzpicture}
 \caption{\emph{Left:} Example state diagram of a recurrent binary
processor $\Proc$ with $\lambda = (4,5)$ and
$b=(\frac12,\frac45)$.  (The function is the same as the
one in \cref{bar-graph} (left)).  A dot with coordinates
$\xx=(x_1,x_2)$ represents the state of the processor after
it has received input $\xx$. (The initial state $(0,0)$ is
an unfilled circle.) Each solid contour line between two
adjacent dots indicates that a letter is emitted when
making that transition. \emph{Right:} The highlighted
contours form the state diagram of the corresponding meager
processor $\Proc_3$, obtained by keeping every fourth
contour (starting from the first) of the left picture. The
vertical period is still $5$, although the horizontal
period has increased.} \label{f.meager}
\end{figure}

%\begin{lemma}
%\label{l.fully}
%Any nondegenerate recurrent $k$-ary processor can be emulated by a network of $M-1$ splitters and $M$ fully meager
%$k$-ary processors, where $M = \mathop{\mathrm{lcm}} (\lambda_1 b_1, \ldots, \lambda_k b_k)$.
%\end{lemma}

%\begin{proof}
%As in the proof of Lemma~\ref{l.meager}, let $\Proc_j$ compute $f_j = \floor{(f+j)/M}$ for $j=0,1,\ldots,M-1$.
%Then $\Proc$ is emulated by an $M$-splitter that feeds into $\Proc_0,\ldots,\Proc_{M-1}$.
%It remains to check that each $\Proc_j$ is fully meager. Writing $\lambda'_i = M/b_i$
%=(\frac{M}{\lambda_i b_i}) \lambda_i$
%we have
%\[ f(\xx + \lambda'_i \basis_i) = f(\xx) + \left(\frac{M}{\lambda_i b_i}\right)(\lambda_i b_i) \]
%so for all $i=1,\ldots,k$ we have
%\[ f_j(\xx + \lambda'_i \basis_i)
%%= \floor{\frac{f(\xx) + (\frac{M}{\lambda_i b_i})(\lambda_i b_i) + j}{M}}
%= f_j(\xx)+1. \qedhere \]
%\end{proof}
	
\subsection{Reducing the alphabet size}

Now we come to the main reduction.
\begin{lemma}
\label{l.mainreduction} Let $\Proc$ be a meager recurrent $k$-ary processor
satisfying $f_\Proc(\xx+\lambda_k \basis_k) = f_\Proc(\xx)+1$. Then $\Proc$
can be emulated by a network of a recurrent $(k{-}1)$-ary processor, a
$\lambda_k$-toppler, and an adder.
\end{lemma}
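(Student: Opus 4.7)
The strategy is to strip off the $k$th coordinate. Feed $x_1,\ldots,x_{k-1}$ directly into the $k-1$ input ports of the $(k{-}1)$-ary processor $\ProcR$; use the adder to combine $x_k$ with the single output of $\ProcR$; send the adder's sum into the $\lambda_k$-toppler; take the toppler's output as the network's output. With the toppler primed by some $q\in\{0,\ldots,\lambda_k-1\}$, the network computes
\[
\bigl\lfloor(\ProcR(\xx')+x_k+q)/\lambda_k\bigr\rfloor, \qquad \xx':=(x_1,\ldots,x_{k-1}).
\]
The task is to choose $\ProcR$ and $q$ to make this equal to $f:=f_\Proc$.

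Write $f(\xx)=\bb\cdot\xx+P(\xx)$ with $P$ periodic; the meager condition forces $b_k=1/\lambda_k$. Define
\[
R(\xx') := \max_{0\le r<\lambda_k}\bigl(\lambda_k f(\xx',r)-r\bigr)\in\N.
\]
The key \emph{width estimate} is that for each $\xx'$ the $\lambda_k$ integers $\lambda_k f(\xx',r)-r$ span a range strictly less than $\lambda_k$: since $f$ is weakly increasing in $x_k$, each successive difference $P(\xx',r{+}1)-P(\xx',r)$ is at least $-1/\lambda_k$, and by meagerness these differences sum to $P(\xx',\lambda_k)-P(\xx',0)=0$, so any partial sum lies in $[-(\lambda_k{-}1)/\lambda_k,\,(\lambda_k{-}1)/\lambda_k]$, giving $\max_r P(\xx',r)-\min_r P(\xx',r)<1$. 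By construction $R(\xx')\ge \lambda_k f(\xx',r)-r$ for every $r$, and the width estimate upgrades this to $0\le R(\xx')+r-\lambda_k f(\xx',r)<\lambda_k$ for $r\in\{0,\ldots,\lambda_k{-}1\}$. Consequently $\lfloor(R(\xx')+r)/\lambda_k\rfloor=f(\xx',r)$, and a short meagerness computation (writing $x_k=\lambda_k q'+r$) extends this to
\[
\bigl\lfloor(R(\xx')+x_k)/\lambda_k\bigr\rfloor=f(\xx)\qquad\text{for every }x_k\in\N.
\]

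It remains to realize $R$. Rewriting $R(\xx')=\lambda_k\sum_{i<k}b_i x_i+\lambda_k\max_{r}P(\xx',r)$ exhibits $R$ as the sum of a linear function and a periodic function of $\xx'$; it is weakly increasing (pointwise max of such) and non-negative. Setting $q:=R(\zero')$, the shifted function $R-q$ is ZILP, so by \cref{t.recurrent} it is computed by some recurrent $(k{-}1)$-ary processor $\ProcR$. Meagerness and monotonicity force $f(\zero',r)\in\{0,1\}$ for $r\in\{0,\ldots,\lambda_k\}$ with $f(\zero',0)=0$, from which a quick case check yields $q\in\{0,\ldots,\lambda_k-1\}$, which is a valid priming. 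With this $\ProcR$ and the $q$-primed $\lambda_k$-toppler wired as above, the network's output is
$\lfloor(\ProcR(\xx')+x_k+q)/\lambda_k\rfloor=\lfloor(R(\xx')+x_k)/\lambda_k\rfloor=f(\xx)$, as required.

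The main obstacle is the width estimate: without it, a single integer $R(\xx')$ independent of $x_k\bmod\lambda_k$ could not encode all $\lambda_k$ distinct values $f(\xx',r)$ via a single division by $\lambda_k$. The estimate uses both monotonicity and meagerness in an essential way, and is precisely what makes the $k$th coordinate peelable using only one adder, one toppler, and a smaller recurrent processor.
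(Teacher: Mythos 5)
Your proof is correct and essentially matches the paper's: the network topology (a $(k{-}1)$-ary recurrent processor fed the first $k-1$ inputs, an adder combining its output with $x_k$, then a primed $\lambda_k$-toppler) is identical, and your function $R-q$ coincides with the paper's $g(\xx') = -c - \min\{x_k \in \Z : f(\xx', x_k) \geq 0\}$. The paper's threshold-minimum formulation makes the floor identity $f = \lfloor(g + x_k + c)/\lambda_k\rfloor$ immediate without the explicit width estimate you derive via partial sums, but otherwise the arguments are the same.
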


\begin{proof}
Let $\Proc$ compute $f$. By Theorem~\ref{t.recurrent}, $f$ is ZILP.  Its
representation as a linear plus a periodic function makes sense as a function
on all of $\Z^k$.  Now consider the increasing function
	\[  g(x_1,\ldots,x_{k-1}) = -c - \min\{x_k \in \Z \,:\, f(x_1,\ldots,x_k)
\geq 0 \}. \]
%	\[  g(x_2,\ldots,x_k) = \lambda - \min\{x_1 \in \Z \,:\, f(x_1,\ldots,x_n) \geq 1 \}. \]
% this is correct for 1 period, need to fix it to make it periodic
%Note that if $x_2, \ldots, x_k \geq 0$ then $g(x_2,\ldots,x_k) \geq 0$ since $f(0,x_2,\ldots,x_k) \geq 0$.
where $c = -\min \{x_k \in \Z \,:\, f(0,\ldots,0,x_k) \geq 0\}$. Note that
$g$ is an increasing function of $(x_1,\ldots,x_{k-1}) \in \N^{k-1}$, and
$g(\zero) = 0$.

If $\ld \in  \lambda_1 \Z \times \cdots \times \lambda_{k-1}
\Z\times\{0\} $, then
	\begin{align*} g(\xx+\ld) &= -c -\min \{x_k \in \Z \,:\,
f(x_1,\ldots,x_k)
+ \bb \cdot \ld \geq 0\} \\
		&= -c -\min \{x_k \in \Z \,:\, f( x_1,\ldots,x_{k-1},x_k + \lambda_k (\bb \cdot \ld)) \geq 0\} \\
		&= g(\xx) + \lambda_k (\bb \cdot \ld)
	\end{align*}
where the second equality holds because $\Proc$ is meager. Hence $g$ is ZILP.
Let $\ProcQ$ be the $(k-1)$-ary processor that computes $g$ (which exists by \cref{t.recurrent}).

Note that for any integer $j$ we have that $f(x_1,\ldots,x_k) \geq j$ if and
only if $f(\xx-j\lambda_k \basis_k) \geq 0$, which in turn happens if and
only if $g(x_1,\ldots,x_{k-1})+x_k +c \geq j\lambda_k$. Hence
	\[ f(x_1,\ldots,x_k) = \floor{
\frac{g(x_1,\ldots,x_{k-1})+x_k+c}{\lambda_k}}.
	%= \floor{1 + \frac{x_1 - \min\{ x \,:\, f(x,x_2,\ldots,x_k) \geq 1 \} }{\lambda_1}}.
	\]
 The definition of $c$ gives that $0\leq c <\lambda_k$, since $f(\zero)=0$
 and $f(-\lambda_k \basis_k)=-1$.
So $\Proc$ is emulated by the network that feeds the last input letter $a_k$
into a $\lambda_k$-toppler $\ProcT$ primed with $c$, and $a_1,\ldots,a_{k-1}$
into $\ProcQ$ which feeds into $\ProcT$.
\end{proof}

%\begin{figure}
%\centering
%\begin{tikzpicture}[abelian]
%  \node[block,minimum height=2cm,minimum width=1cm] (Q) {$\ProcQ$};
%  \draw[<-] (Q.130) -- ++(-1,0) node[left] (a2) {\makebox[1em]{$a_{2}$}};
%  \draw[<-] (Q.230) -- ++(-1,0) node[left] (an) {\makebox[1em]{$a_{k}$}};
%  \path (Q.180)  -- ++(-1,0) node[left] {\raisebox{1ex}{\makebox[1.2em]{$\vdots$}}};
%  \node (a1) [above= 1cm of a2] {\makebox[1em]{$a_{1}$}};
%  \node[adder] (p) [right=3cm of a2] {};
%  \node[toppler] (b) [right=.5cm of p] {$\lambda_k$};
%  \draw (a1) edge[->, bend left=20] (p)
%        (Q.east) edge[->, bend right=10] (p)
%        (p) edge[->] (b)
%        (b.east) edge[->] ++(1,0);
%\end{tikzpicture}
%\caption{Emulating a meager recurrent $k$-ary processor via a recurrent $(k{-}1)$-ary processor.}\label{fig:induce}
%\end{figure}

\begin{figure}
\centering
\begin{tikzpicture}[abelian]
  \node[block,minimum height=2cm,minimum width=1cm] (Q) {$\ProcQ$};
  \draw[<-] (Q.130) -- ++(-1,0) node[left] (a1) {\makebox[1.3em]{$x_{1}$}};
  \draw[<-] (Q.230) -- ++(-1,0) node[left] (ak1) {\makebox[1.3em]{$x_{k-1}$}};
  \path (Q.180)  -- ++(-1,0) node[left] {\raisebox{1ex}{\makebox[1.4em]{$\vdots$}}};
  \node (ak) [below= 1cm of ak1] {\makebox[1em]{$x_{k}$}};
  \node[adder] (p) [right=3cm of ak1] {};
  \node[toppler] (b) [right=.5cm of p,prime=$c$] {$\lambda_k$};
  \draw (ak) edge[->, bend right=20] (p)
        (Q.east) edge[->, bend left=10] (p)
        (p) edge[->] (b)
        (b.east) edge[->] ++(1,0);
\end{tikzpicture}
\caption{Emulating a meager recurrent $k$-ary processor via a recurrent $(k{-}1)$-ary processor.}
\label{fig:induce}
\end{figure}
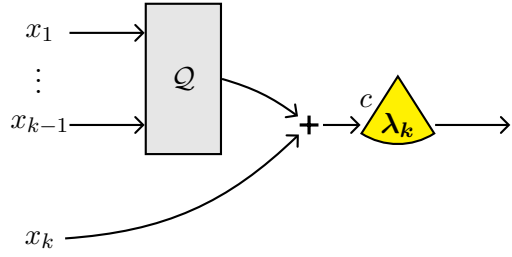

Now we can prove the main result in the recurrent case.
\begin{proof}[Proof of \cref{t.recurrent.intro}]
Let $\Proc$ be a recurrent abelian processor to be emulated.  By
\cref{l.unaryoutput} we can assume that it has unary output.  We proceed by
induction on the number of inputs $k$.  The base case $k=1$ is
\cref{l.unaryoutput}.  For $k>1$, we first use \cref{l.meager} to emulate the
processor by a network of meager $k$-ary processors.  Then we replace each of
these with a network of $(k-1)$-ary processors, by \cref{l.mainreduction},
and then apply the inductive hypothesis to each of these.
\end{proof}

\pagebreak[4]
\subsection{The number of gates}

How many gates do our networks use?  For simplicity, consider the case of a
recurrent $k$-ary abelian processor with $\lambda_i=2$ and $b_i=1/2$ for all
$i$.  It is not difficult to check that our construction uses $O(c^k)$ gates
as $k\to\infty$ for some $c$.  In fact, a counting argument shows that
exponential growth with $k$ is unavoidable, as follows.  Consider networks of
only adders, splitters, and $2$-topplers, but suppose that we allow feedback
(so that a $\lambda$-toppler can be replaced with $O(\log \lambda)$ gates, by
\cref{p.feedback}).  The number of networks with at most $n$ gates is at most
$n^{c' n}$ for some $c'$ (we choose the type of each gate, together with the
matching of inputs to outputs).  On the other hand, the number of different
ZILP functions $f$ that can be computed by a processor of the above-mentioned
form is at least $2^{\binom{k}{\lfloor k/2\rfloor}}$, since we may choose an
arbitrary value $f(\xx)\in\{0,1\}$ for each of the $\binom{k}{\lfloor
k/2\rfloor}$ elements $\xx$ of the middle layer $\{\xx\in\{0,1\}^k: \sum_i
x_i =\lfloor k/2\rfloor\}$ of the hypercube.  If all $k$-ary processors can
be emulated with at most $n$ gates then $n^{c'n}>2^{\binom{k}{\lfloor
k/2\rfloor}}$. It follows easily that some such processor requires at least
$C^k$ gates, for some fixed $C>1$.

If we consider the dependence on the quantities $\lambda_i$ and $b_i$ as well
as $k$, our construction apparently leaves more room for improvement
%% More precsiely, we use $O(\prod_i \lambda_i b_i)^k)$
in terms of the number of gates, since repeated meagerization tends to
increase the periods $\lambda_i$.  One might also investigate whether there
is an interesting theory of $k$-ary functions that can be computed with only
\emph{polynomially} many gates as a function of $k$.

Our construction of networks emulating transient processors
(\cref{s.transient}) will be much less efficient than the recurrent case,
since the induction will rely on a ZILP function of a potentially large
number of arguments (\cref{p.blackbox}) that is emulated by appeal to
Theorem~\ref{t.recurrent.intro}.  It would be of interest to reduce the
number of gates here.

\section{The bounded case}
\label{s.bounded}

In this section we prove \cref{t.bounded.intro}.  Moreover,
we identify the class of functions computable without
topplers.

\begin{lemma}
\label{l.01} Let $f : \N^k \to \{0,1\}$ be increasing with $f(\zero)=0$.
There is a directed acyclic network of adders, splitters, presinks and
delayers that computes $f$.
\end{lemma}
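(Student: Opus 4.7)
The plan is to identify $f$ with the indicator function of an up-set of $\N^k$ and realize it as a feed-forward Boolean circuit. Since $f$ is $\{0,1\}$-valued and weakly increasing with $f(\zero)=0$, the set $S := f^{-1}(1)$ is an up-set of $\N^k$ not containing $\zero$. By Dickson's lemma, the minimal elements of $S$ form a finite antichain $\mm_1,\ldots,\mm_r$, and hence
\[
f(\xx) \;=\; \bigvee_{j=1}^{r}\; \bigwedge_{i:\, m_{j,i}\geq 1} \ind[x_i \geq m_{j,i}].
\]
(If $r=0$ then $f\equiv 0$, which is computed for example by feeding all inputs into an adder tree, then a presink, then a delayer.)

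Next I verify that each piece of the above formula can be produced by an appropriate gadget. For an integer $m\geq 1$, a chain of $m-1$ delayers in series followed by one presink computes $x\mapsto\min\bigl((x-(m-1))^+,1\bigr)=\ind[x\geq m]$. For $\{0,1\}$-valued signals $a_1,\dots,a_s$, the identity $a_1\wedge\cdots\wedge a_s=(a_1+\cdots+a_s-(s-1))^+$ yields an AND gadget: sum the $a_i$ via an adder tree, then apply $s-1$ delayers in series. For $\{0,1\}$-valued $b_1,\dots,b_r$, the identity $b_1\vee\cdots\vee b_r=\min(b_1+\cdots+b_r,1)$ yields an OR gadget: adder tree followed by a single presink. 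Multi-way adders and splitters are expanded into binary trees as in \cref{s.basic}.

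To assemble the network, split each input $x_i$ with a splitter tree into as many copies as are needed to drive the relevant threshold gadgets (any surplus copy is trashed by routing it into a presink whose output feeds a trash edge). Route the copies through threshold gadgets to produce $\ind[x_i\geq m_{j,i}]$ for each relevant pair $(i,j)$; for each $j$, combine these indicators with an AND gadget to obtain $I_j(\xx)=\ind[\xx\geq\mm_j]$; finally combine $I_1,\dots,I_r$ with an OR gadget and feed the result into the network's output wire. The entire construction is a feed-forward composition of adders, splitters, delayers and presinks, hence directed acyclic and halts on all inputs.

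The only nontrivial ingredient is the finiteness of the minimal antichain, which is exactly Dickson's lemma; the remaining verifications are routine arithmetic identities on $\{0,1\}$-valued signals, so there is no serious obstacle.
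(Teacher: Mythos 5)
Your proof is correct and follows essentially the same route as the paper's: Dickson's lemma for finiteness of the minimal antichain, the DNF-style formula $f=\bigvee_j\bigwedge_i\ind[x_i\geq m_{j,i}]$, threshold gadgets via delayer chains and a presink, and AND/OR gadgets via adders followed by delayers/presinks. Your multi-way AND/OR gadgets (summing all inputs, then $s-1$ delayers or one presink) are a harmless variant of the paper's pairwise version, and your explicit restriction to indices with $m_{j,i}\geq 1$ and handling of the $r=0$ case are small points of extra care.
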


\begin{proof}
Let $M$ be the set of $\mm \in \N^k$ that are minimal (in the coordinate
partial order) in $f^{-1}(1)$.  By Dickson's Lemma \cite{D13}, $M$ is finite;
and $f(\xx)=1$ if and only if $\xx \ge \mm$ for some $\mm \in M$. Thus
\[
f(\xx) = \bigvee_{\mm \in M} \bigwedge_{i \in A} \one[x_i \ge m_i].
\]
The function $\one[x_i \ge m_i]$ is computed by $m_i-1$ delayers in series
followed by a presink.  The minimum ($\wedge$) or maximum ($\vee$) of a pair
of boolean ($\{0,1\}$-valued) inputs is computed by adding the inputs and
feeding the result into a delayer or a presink respectively.  The minimum or
maximum of any finite set of boolean inputs is computed by repeated pairwise
operations. See \cref{fig:boolean}.  The lemma follows.
\end{proof}

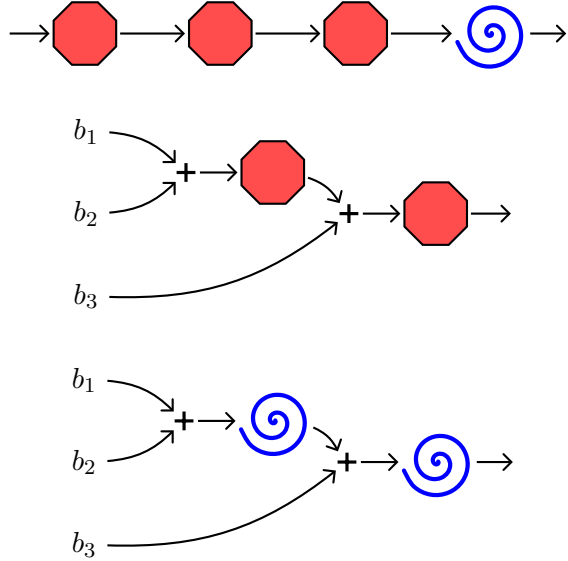
\begin{figure}
\centering
\begin{tikzpicture}[abelian]
\node[delayer] (d1) {};
\node[delayer,right of=d1] (d2) {};
\node[delayer,right of=d2] (d3) {};
\node[presink,right of=d3] (s) {};
\draw[->] (d1) -- (d2);
\draw[->] (d2) -- (d3);
\draw[->] (d3) -- (s);
\draw[->] (s) -- ++(1,0);
\draw[<-] (d1) -- ++(-1,0);
\end{tikzpicture}
\\[0.5cm]

\begin{tikzpicture}[abelian]
\node (b1) {$b_{1}$};
\node[below=1mm of b1] (x1) {};
\node[below=1mm of x1] (b2) {$b_{2}$};
\node[below=1mm of b2] (x2) {};
\node[below=1mm of x2] (b3) {$b_{3}$};
\node[adder,right=1cm of x1] (a1) {};
\node[adder,right=3cm of b2] (a2) {};
\node[delayer,right=.5cm of a1] (d1) {};
\node[delayer,right=.5cm of a2] (d2) {};
\draw (b1) edge[->,bend left=20] (a1);
\draw (b2) edge[->,bend right=20] (a1);
\draw (b3) edge[->,bend right=20] (a2);
\draw (a1) edge[->] (d1);
\draw (d1) edge[->,bend left=20] (a2);
\draw (a2) edge[->] (d2);
\draw (d2) edge[->] ++(1,0);
\end{tikzpicture}
\\[0.5cm]

\begin{tikzpicture}[abelian]
\node (b1) {$b_{1}$};
\node[below=1mm of b1] (x1){};
\node[below=1mm of x1] (b2) {$b_{2}$};
\node[below=1mm of b2] (x2){};
\node[below=1mm of x2] (b3) {$b_{3}$};
\node[adder,right=1cm of x1] (a1){};
\node[adder,right=3cm of b2] (a2){};
\node[presink,right=.5cm of a1] (d1){};
\node[presink,right=.5cm of a2] (d2){};
\draw (b1) edge[->,bend left=20] (a1);
\draw (b2) edge[->,bend right=20] (a1);
\draw (b3) edge[->,bend right=20] (a2);
\draw (a1) edge[->] (d1);
\draw (d1) edge[->,bend left=20] (a2);
\draw (a2) edge[->] (d2);
\draw (d2) edge[->] ++(1,0);
\end{tikzpicture}

\caption{Networks computing $\one[x \geq 4]$, and
$\min(b_1,b_2,b_3)$ and $\max(b_1,b_2,b_3)$ for boolean
inputs $b_i\in\{0,1\}$.}\label{fig:boolean}
\end{figure}

\begin{lemma}
\label{l.bounded}
 Suppose $f : \N^k \to \N$ is increasing and bounded
with $f(\zero)=0$. Then there is a directed acyclic network of adders,
splitters, presinks and delayers that computes $f$.
\end{lemma}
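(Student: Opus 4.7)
The plan is to reduce to the $\{0,1\}$-valued case already handled by Lemma~\ref{l.01}. Since $f$ is bounded and increasing with $f(\zero)=0$, let $N = \max_{\xx \in \N^k} f(\xx) < \infty$. Then I would decompose $f$ as a telescoping sum of indicator functions:
\[
f(\xx) \;=\; \sum_{j=1}^{N} \one[f(\xx) \geq j].
\]
For each $j \in \{1,\ldots,N\}$, set $g_j(\xx) := \one[f(\xx) \geq j]$. Since $f$ is increasing, so is $g_j$; since $f(\zero) = 0 < j$, we have $g_j(\zero) = 0$; and $g_j$ takes values in $\{0,1\}$.

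By Lemma~\ref{l.01}, each $g_j$ is computed by a directed acyclic network $\Net_j$ of adders, splitters, presinks, and delayers. To combine them into a network computing $f$, I would first feed each input wire $x_i$ ($i = 1, \ldots, k$) through an $N$-splitter (which is itself a tree of binary splitters, as noted in Section~\ref{s.basic}), so that each of the $N$ subnetworks $\Net_1, \ldots, \Net_N$ receives a complete copy $(x_1, \ldots, x_k)$ of the input. Then I would combine the $N$ outputs via an $N$-adder (again a binary tree of adders) to produce $\sum_j g_j(\xx) = f(\xx)$.

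The resulting network uses only adders, splitters, presinks, and delayers. It is directed acyclic because the splitter trees feed into the acyclic subnetworks $\Net_j$, whose outputs feed into the adder tree, with no backward edges anywhere. There is no real obstacle beyond the boundedness hypothesis, which is exactly what is needed to make $N$ finite and thereby ensure the resulting network is finite; all the work was already done in Lemma~\ref{l.01}.
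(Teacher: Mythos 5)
Your decomposition $f(\xx)=\sum_{j=1}^N\one[f(\xx)\ge j]$ is exactly the paper's telescoping sum (written there as $\sum_{j=0}^{J-1}\one[f(\xx)>j]$), and you invoke Lemma~\ref{l.01} and combine with splitters and adders just as the paper does. This matches the paper's proof; you have merely spelled out the splitter/adder plumbing a bit more explicitly.
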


\begin{proof}
By Lemma~\ref{l.01}, for each $j \in \N$ there is a  network of the desired
type that computes $\xx \mapsto \ind [ f(\xx) > j ]$. If $f$ is bounded by
$J$, then $f(\xx) = \sum_{j=0}^{J-1} \ind [f(\xx)>j]$, so we add the outputs
of these $J$ networks.
\end{proof}

\begin{proof}[Proof of \cref{t.bounded.intro}]
Let $\Proc$ be a bounded abelian processor.  By \cref{l.unaryoutput} we can
assume that it has unary output.  The function that it computes is increasing
and bounded, and maps $\zero$ to $0$.  Therefore, apply \cref{l.bounded}.
\end{proof}

What is the class of all functions computable by a network of adders,
splitters presinks and delayers? Let us call a function $P : \N^k \to
\N^\ell$ \textbf{eventually constant} if it is eventually periodic with all
periods $1$; that is, there exist $r_1, \ldots, r_k \in \N$ such that $P(\xx)
= P(\xx+\basis_i)$ whenever $\xx_i \geq r_i$.
(Note the relatively weak
meaning of this term: Such a function may admit multiple limits as some
arguments tend to infinity while the others are held constant,
%[
as in our remark following the definition of \textbf{eventually periodic}.)
%]

\begin{theorem}
\label{t.immutable} Let $k \geq 1$. A function $f : \N^k \to \N$ can be
computed by a finite, directed acyclic network of adders, splitters, presinks
and delayers if and only if it satisfies all of the following.
	\begin{enumerate}[{\em (i)}]
	\item $f(\zero)=0$.
	\item $f$ is increasing.
	\item $f = L+P$ for a linear function $L$ and an eventually constant function $P$.
	\end{enumerate}
\end{theorem}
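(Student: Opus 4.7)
The plan is to prove the two implications separately.

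\textbf{Sufficiency.} Given $f = L + P$ with $L(\xx) = \sum_i a_i x_i$, $a_i \in \N$, and $P$ eventually constant with thresholds $\rr = (r_1, \ldots, r_k)$ as in \cref{ep-def}, my plan is to use the explicit decomposition
\[ f(\xx) = g(\xx) + \sum_{i=1}^k a_i (x_i - r_i)^+, \]
where $g(\xx) := f(\min(x_1, r_1), \ldots, \min(x_k, r_k))$. I will verify the identity by telescoping $f$ along a path that, for each saturated coordinate (one with $x_i > r_i$), increases $x_i$ one unit at a time from $r_i$ up to $x_i$: at each such step $P$ is stable in coordinate $i$ by eventual constancy, so $f$ increases by exactly $a_i$. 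The residual $g$ is manifestly $\N$-valued, increasing (truncation and $f$ are both monotone), bounded (its argument lies in the finite box $\prod_i [0, r_i]$), and satisfies $g(\zero) = 0$, so \cref{l.bounded} produces a network for $g$. The other summand is computed using $r_i$ delayers in series to produce $(x_i - r_i)^+$, followed by an $a_i$-splitter feeding an $a_i$-adder (to multiply by $a_i$), with a final adder combining everything.

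\textbf{Necessity.} Conditions (i) and (ii) are immediate from general abelian network properties, so the content is (iii). I will induct on the number $m$ of mutable gates in $\Net$, with base $m=0$ handled by \cref{l.linear}. In the inductive step I will pick a topologically minimal mutable gate $\Proc$ (possible since $\Net$ is directed acyclic): by minimality the input to $\Proc$ comes through an immutable splitter-adder subnetwork and so equals $\aa \cdot \xx$ for some $\aa \in \N^k$; $\Proc$ emits $h(\aa \cdot \xx)$ where $h(s) = (s-1)^+$ for a delayer or $h(s) = \ind[s \geq 1]$ for a presink. Removing $\Proc$ and treating its output as an additional external input leaves a network with $m-1$ mutable gates, whose computed function by induction has the form $F(\xx, t) = \bb \cdot \xx + ct + P_0(\xx, t)$ with $P_0$ eventually constant. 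Substituting $t = h(\aa \cdot \xx)$ and using $(s-1)^+ = s - \ind[s \geq 1]$ will let me absorb the linear-in-$s$ contribution $c\, \aa \cdot \xx$ into the $\N$-linear part, while $\pm c \cdot \ind[\aa \cdot \xx \geq 1]$ and $P_0(\xx, h(\aa \cdot \xx))$ are each eventually constant in $\xx$; for the latter, when $a_i \geq 1$ and $x_i$ is large, $h(\aa \cdot \xx)$ either pins at $1$ (presink) or exceeds any fixed $t$-threshold (delayer), placing $P_0$ in its stable regime.

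The main obstacle should be the sufficiency direction. The naive attempt of computing $L$ and $P$ separately and adding fails because the eventually constant function $P$ may take negative values and need not be monotone, so neither summand is by itself an increasing $\N^k \to \N$ function that our gates can produce. The remedy is to extract as the linear piece only the growth of $f$ occurring beyond the thresholds $\rr$; what remains is the composition of $f$ with the coordinatewise truncation at $\rr$, which inherits monotonicity and nonnegativity from $f$ and has bounded range by construction, hence is accessible via \cref{l.bounded}.
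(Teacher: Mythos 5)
Your sufficiency argument is essentially the paper's: both decompose $f(\xx) = g(\xx) + \sum_i b_i(x_i - r_i)^+$ and apply Lemma~\ref{l.bounded} to the bounded increasing residual $g$, then compute the over-threshold linear part via delayer chains and splitters. (Your formulation of $g$ as $f$ precomposed with coordinatewise truncation at $\rr$ is algebraically identical to the paper's $g(\xx)=f(\xx)-\sum_i b_i(x_i-r_i)^+$, and has the virtue of making its monotonicity, nonnegativity and boundedness more transparent.) Your necessity argument, however, takes a genuinely different route. The paper observes that a presink or delayer changes state at most once, so the transition map $t_i$ of the emulated processor satisfies $t_i^r=t_i^{r+1}$ where $r$ is the number of transient gates; Lemma~\ref{l.parallelogram} then gives $f(\xx+\basis_i)-f(\xx)=b_i\in\N$ as soon as $x_i\geq r$, from which the decomposition drops out immediately. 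You instead induct on the number of mutable gates, peeling off a topologically minimal transient gate whose input must be $\N$-linear (by Lemma~\ref{l.linear}), and then substituting $(s-1)^+=s-\ind[s\geq 1]$ or $\ind[s\geq 1]$ into the inductive $L+P$ form of the residual $(k{+}1)$-input network. Both arguments are correct; the paper's is shorter and exploits the bounded-mutation property of transient gates directly, while yours is stylistically parallel to the inductive proof of Lemma~\ref{l.acyclic} for the recurrent case, and makes explicit how the $\N$-linear slope is preserved (absorption of $c\aa$ into $\bb$ in the delayer case).
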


\begin{proof}
Let $\Net$ be such a network, and let it compute $f$. Since
adders and splitters are immutable, and presinks and
delayers become immutable after receiving one input, the
internal state of $\Net$ can change only a bounded number
of times. In fact, for each $i=1,\ldots,k$ we have $t_i^{r}
= t_i^{r+1}$ where $r$ is the total number of presinks and
delayers in $\Net$.  Letting $b_i := f((r+1)\basis_i) -
f(r\basis_i)$, it follows from Lemma~\ref{l.parallelogram}
that
	\begin{equation} \label{e.eventuallylinear} f(\xx+\basis_i) - f(\xx) = b_i \end{equation}
whenever $x_i \geq r$. Note that $b_i \in \N$. Letting
	$ P(\xx) := f(\xx) - \bb \cdot \xx $,
we find that $P(\xx+\basis_i) = P(\xx)$ whenever $x_i \geq r$, so $P$ is eventually constant.

Conversely, suppose that $f$ satisfies (i)-(iii). Write $f=L+P$ for a linear
function $L(\xx) = \bb \cdot \xx$ with $\bb \in \N^k$, and an eventually
constant function $P$. Then there exist $r_1, \ldots, r_k$ such that
\eqref{e.eventuallylinear} holds for all $i=1,\ldots,k$ and all $\xx \in
\N^k$ such that $x_i \geq r_i$.  In particular, the function
	\[ g(\xx) := f(\xx) - \sum_{i=1}^k b_i (x_i -r_i)^+ \] is ZILP and
bounded. By Theorem~\ref{t.bounded.intro} there is a network $\Net$ of
adders, splitters, presinks and delayers that computes $g$. To compute $f$,
feed each input $x_i$ into a splitter which feeds into $\Net$ and into an
$r_i$-delayer followed by a $b_i$-splitter.
\end{proof}

\section{The general case}
\label{s.transient}

In this section we prove \cref{t.eventually.intro}.  As in
the recurrent case, the proof will be by induction on the
number of inputs, $k$, of the abelian processor. We
identify $\N^k$ with $\N^{k-1} \times \N$, and write
$(\yy,z) = \yy + z\basis_k$.  Meagerization will again play
a crucial role. A major new ingredient is ``interleaving of
layers''.

\subsection{The unary case}

As before, we first prove the case of unary input, although
an alternative would be to start the induction with the
trivial zero-input processor.

\begin{lemma}\label{unary-trans}
Any abelian processor with unary input and output can be
emulated by a directed acyclic network of adders,
splitters, topplers, presinks and delayers.
\end{lemma}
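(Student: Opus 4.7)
The plan is to reduce to two cases already handled, namely the recurrent unary case (\cref{unary-recurrent}) and the bounded case (\cref{l.bounded}), glued together by a chain of delayers that implements a shift. By \cref{t.eventually}, the function $f:\N\to\N$ computed by the processor has the form $f(x)=cx+P(x)$ with $c\in\Q_{\geq 0}$ and $P$ eventually periodic, so I would fix $r\geq 0$ and $\lambda\geq 1$ with $P(x+\lambda)=P(x)$ for all $x\geq r$.

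The central observation, to be verified by splitting into the cases $x\leq r$ and $x\geq r$, is the identity
\[ f(x) \;=\; h(x) + g\bigl((x-r)^+\bigr), \qquad x\in\N, \]
where $h(x):=f(\min(x,r))$ and $g(x):=f(x+r)-f(r)$. I would next check that each summand lies in an already-emulated class. The function $g$ is increasing with $g(0)=0$ and equals $cx+(P(x+r)-P(r))$, whose periodic part genuinely has period $\lambda$ on all of $\N$ since the shift by $r$ moves the argument into the periodic region; thus $g$ is ZILP. The function $h$ is increasing, bounded by $f(r)$, and maps $0$ to $0$. Hence by \cref{unary-recurrent}, $g$ is computed by a directed acyclic network of adders, splitters and (possibly primed) $\lambda$-topplers, and by \cref{l.bounded}, $h$ is computed by a directed acyclic network of adders, splitters, presinks and delayers.

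Finally I would assemble the full network: split the input into two copies, route one through $r$ delayers in series (which compute $(x-r)^+$) and then into the $g$-network, route the other into the $h$-network, and combine the two outputs with an adder. The result is directed acyclic and uses only the five permitted gate types. There is no real obstacle: the hard work is done by \cref{unary-recurrent} and \cref{l.bounded}, and the unary setting sidesteps the ``interleaving of layers'' issue that the introduction flags as the main difficulty of the multivariable argument.
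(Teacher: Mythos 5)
Your proposal is correct and is essentially the paper's own argument. Your identity $f(x)=h(x)+g\bigl((x-r)^+\bigr)$ with $h(x)=f(\min(x,r))$ and $g(x)=f(x+r)-f(r)$ is the same decomposition the paper uses (the paper just writes the bounded summand explicitly as $\sum_{i=0}^{R-1}\bigl[F(i{+}1)-F(i)\bigr]\,\ind[x>i]$ and wires it up by hand, whereas you more modularly invoke \cref{l.bounded}); the verification that $g$ is ZILP and the use of $r$ delayers in series to produce $(x-r)^+$ match the paper exactly.
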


\begin{proof}
Let the processor $\Proc$ compute $F:\N\to\N$.  Since $F$
is ZILEP, it is linear plus periodic when the argument is
sufficiently large; thus, there exists $R\in\N$ such that
the function $G$ given by
$$G(x):=F(x+R)-F(R),\qquad x\in\N$$
is ZILP. We have
$$F(x)=G\bigl((x-R)^+\bigr)+\sum_{i=0}^{R-1} \bigl[F(i+1)-F(i)\bigr]\,\ind[x> i]$$
as is easily checked by considering two cases: when $x\leq R$ the first term
vanishes and the second telescopes; when $x\geq R$, the second term is $F(R)$
and we use the definition of $G$.

By \cref{unary-recurrent}, the function $G$ can be computed  by a network of
adders, splitters and topplers. Now to compute $F$, we feed the input $x$
into $R$ delayers in series.  For each $0\leq i<R$, the output after $i$ of
them is also split off and fed to a delayer, to give $\ind[x>i]$ (as in the
proof of \cref{l.bounded}); this is split into $F(i+1)-F(i)$ copies, while
the output $(x-R)^+$ of the last delayer is fed into a network emulating
$\ProcG$, and all the results are added. See \cref{f-unary-trans} for an
example.
\end{proof}

\begin{figure}
\centering
\begin{tikzpicture}[abelian]
  \node[splitter] (s1) {};
  \node[delayer,right of=s1] (d1) {};
  \node[splitter,right of=d1] (s2){};
  \node[delayer,right of=s2] (d2){};
  \node[right=5mm of d2] (dd) {$\cdots$};
  \node[splitter,right=5mm of dd] (sr){};
  \node[delayer,right of=sr] (dr){};
  \node[presink,below of=s1] (p1){};
  \node[presink,below of=s2] (p2){};
  \node[presink,below of=sr] (pr){};
  \node[block,below of=dr,minimum width=1cm,minimum height=1cm] (g){$\ProcG$};
  \node[adder,below of= g](a){};
  \node[splitter,below of=p2] (ss2){};
  \draw[<-] (s1)--++(-1,0);
  \draw[->] (s1)--(d1);
  \draw[->] (d1)--(s2);
  \draw[->] (s2)--(d2);
  \draw[->] (sr)--(dr);
  \draw[->] (s1)--(p1);
  \draw[->] (s2)--(p2);
  \draw[->] (sr)--(pr);
  \draw[->] (dr)--(g);
  \draw[->] (g)--(a);
  \draw[->] (pr)--(a);
  \draw[->] (p2)--(ss2);
  \draw[->] (ss2) edge[bend right=15] (a);
  \draw[->] (ss2) edge[bend left=15] (a);
  \draw[->] (ss2) edge (a);
  \draw[->] (a)--++(1,0);
\end{tikzpicture}
\caption{Emulating a transient unary processor. (In this example, the difference
$F(i+1)-F(i)$ takes values $0,3,\ldots,1$ for $i=0,1,\ldots,R-1$).
}\label{f-unary-trans}
\end{figure}
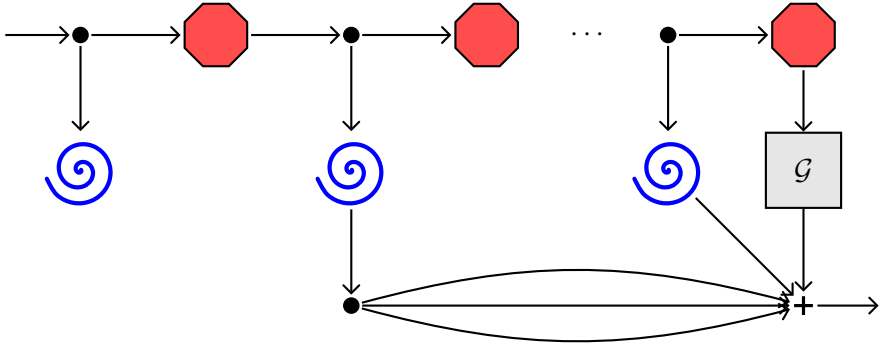

\subsection{Two-layer functions}

We now proceed with a simple case of the inductive step, which
provides a prototype for the main argument, and which will also be used as a
step in the main argument.

\begin{samepage}
\begin{lemma}\label{baby-case}
Let $\Proc$ be a $k$-ary abelian processor that computes a function $F$, and
suppose that
\begin{equation}\label{constant-after-1}
F(\yy,z)=F(\yy,z') \quad\text{if }z,z'\geq 1.
\end{equation}
Then $\Proc$ can be emulated by a network of topplers, presinks, and
$(k-1)$-ary abelian processors.
\end{lemma}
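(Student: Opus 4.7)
The plan is to express $F$ as a minimum of two ZILEP functions on $\N^{k-1}$ that differ by a bounded amount, and emulate this ``close minimum'' via the recurrent machinery. First, feed $z$ through a presink to obtain $s := \min(z,1) \in \{0,1\}$, so that $F(\yy,z) = F(\yy,s)$ by the hypothesis. Set $F_j(\yy) := F(\yy,j)$ and $c_0 := F(\zero,1)$. Since $F$ is constant in $z$ on $\{z \ge 1\}$, the $z$-coefficient of the linear part in any decomposition $F = L + P$ must vanish (else $P$ would be unbounded in $z$, contradicting that eventually periodic functions are bounded). Hence $F_0$ and $F_1$ share the same $\yy$-linear part and $D(\yy) := F_1 - F_0$ is non-negative and bounded by some $M \in \N$.

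The function $F_0$ is ZILEP and is computed by the $(k{-}1)$-ary restriction of $\Proc$ starting from $q^0$; call it $\Proc_0$. Since $F_1(\zero) = c_0$ may be nonzero, $F_1$ itself is not directly ZILEP, but $G_1 := F_1 - c_0 \ge 0$ is ZILEP and is computed by the analogous restriction of $\Proc$ starting from $\tilde q := t_z(q^0)$; call it $\Proc_1$. The key identity is
\[
F(\yy,s) \;=\; \min\bigl(F_0(\yy) + Ms,\ F_1^*(\yy,s)\bigr),
\]
where $F_1^*(\yy,s) := G_1(\yy) + c_0 \cdot \min(y_1 + \cdots + y_{k-1} + s,\ 1)$ is a ZILEP surrogate for $F_1$ that agrees with $F_1$ except at the zero input (where it equals $0$). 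A direct check of the four cases $\yy \in \{\zero, \neq \zero\}$, $s \in \{0,1\}$ verifies the identity, and in all of them the two arguments of the minimum lie within $M$ of one another.

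The technical heart is to exhibit a ZILP function $\mu : \N^2 \to \N$ satisfying $\mu(a,b) = \min(a,b)$ whenever $|a-b| \le M$. Take $\mu(a,b) := \tfrac12(a+b) + P^*(a-b)$, where $P^*$ is the triangle wave of period $2M+2$ agreeing with $-|q|/2$ on $|q| \le M$: the slopes $\pm 1/2$ of $P^*$ ensure $\partial_a\mu, \partial_b\mu \in \{0,1\}$ (monotonicity), and a parity check makes $\mu$ integer-valued. Being ZILP and recurrent, $\mu$ is emulated by a finite directed acyclic network of adders, splitters, and topplers via \cref{t.recurrent.intro}.

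The assembly is now routine: splitters duplicate $\yy$ into $\Proc_0$ and $\Proc_1$; a presink on $y_1 + \cdots + y_{k-1} + s$ followed by a $c_0$-splitter produces the correction in $F_1^*$; an $M$-splitter on $s$ provides $Ms$ letters to be added to $F_0(\yy)$; and the $\mu$-network applied to the pair $(F_0 + Ms,\ F_1^*)$ outputs $F(\yy,z)$. The main obstacle is the construction of $\mu$: it is the simplest instance of the ``ZILP minimum of close arguments'' tool that, as the article's outline notes, drives the general proof of \cref{t.eventually.intro}.
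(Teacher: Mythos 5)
Your proof is correct, but it takes a genuinely different route from the paper's. The paper first \emph{meagerizes}: it writes $F=\sum_{j=0}^{W-1}\lfloor(F+j)/W\rfloor$ (where $W$ is what you call $M$), thereby reducing to the case of roughness $1$, and then uses the elementary identity $\lfloor(a+b)/2\rfloor=\min(a,b)$ when $|a-b|\leq 1$, realized by a single primed $2$-toppler fed by the presink and the two $(k{-}1)$-ary processors. You instead skip meagerization entirely and build a two-argument ZILP pseudo-minimum $\mu(a,b)=\tfrac12(a+b)+P^*(a-b)$ valid for $|a-b|\leq M$, with $P^*$ a slope-$\pm\tfrac12$ triangle wave; you then offset $F_1$ to a ZILEP surrogate and apply $\mu$ to the close pair $(F_0+Ms,\,F_1^*)$. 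Your four-case verification of the identity and of the closeness bound is correct (using $0\le D(\yy)\le M$ and $0\le c_0\le M$), and your $\mu$ is a valid alternative to the $n=2$ case of \cref{p.blackbox}, generalized from slack $1$ to slack $M$. The trade-off: the paper's meagerized argument is more self-contained at this point (a single primed $2$-toppler, no appeal to \cref{t.recurrent.intro}), whereas your construction front-loads the pseudo-minimum idea, giving a cleaner conceptual bridge to the later inductive step, at the cost of invoking the full recurrent-emulation machinery to realize $\mu$ as a toppler network.
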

\end{samepage}

\begin{proof}
Define
$$W:=\sup_{\yy\in \N^{k-1}} F(\yy,1)-F(\yy,0),$$
and note that $W<\infty$, because the difference inside the supremum is an
eventually periodic function of $\yy$, and is thus bounded.  If $W=0$, then
$F$ is constant in $z$, and therefore $\Proc$ can be emulated by a single
$(k-1)$-ary processor.

Suppose $W\geq 1$.  We reduce to the case $W=1$ by the meagerization identity
\eqref{e.meagerization}.  Specifically, we express $F$ as $\sum_{i=0}^{W-1}
F_i$, where $F_i:=\lfloor(F+i)/W\rfloor$.  Each $F_i$ is ZILEP (this can be
checked directly, or by \cref{backwards}, since $F_i$ is computed by feeding
the output of $F$ into a toppler). Each $F_i$ satisfies the condition
\eqref{constant-after-1}, but now has $F_i(\yy,1)-F_i(\yy,0)\leq 1$ for all
$\yy$, as promised.  If we can find a network to compute each $F_i$ then the
results can be fed to an adder to compute $F$.

Now we assume that $W=1$.  Define the two $(k-1)$-ary functions (``layers''):
\begin{align*}
f_0(\yy)&:=F(\yy,0),\\
f_1(\yy)&:=F(\yy,1)-u,\qquad \text{where }u:=F(\zero,1).
\end{align*}
Each of $f_0,f_1$ is ZILEP.  Therefore, by \cref{t.eventually}, they are computed by suitable
$(k-1)$-ary abelian processors $\Proc_0,\Proc_1$.  Note that since $W=1$, we
have $u\in\{0,1\}$. We now claim that
\begin{equation}\label{2-layer}
F(\yy,z)=\biggl\lfloor
\frac{f_0(\yy)+f_1(\yy)+u+\ind[z>0]}{2}\biggr\rfloor.
\end{equation}
Once this is proved, the lemma follows: we split $\yy$ and feed it to both
$\Proc_0$ and $\Proc_1$, while feeding $z$ into a presink.  The three outputs
are added and fed into a primed $2$-toppler in initial state $u$.  See
\cref{f.2-layer}.
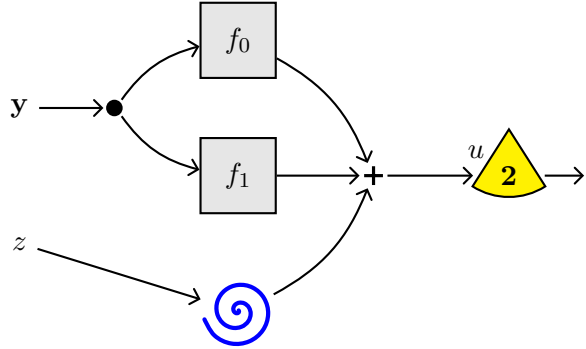
\begin{figure}
\centering
\begin{tikzpicture}[abelian]
  \node[block,minimum height=1cm,minimum width=1cm] (p0) {${f_0}$};
  \node[below of=p0,block,minimum height=1cm,minimum width=1cm] (p1) {${f_1}$};
  \node[below of=p1,presink] (o){};
  \path (p0) -- (p1) node[pos=.5,left=1.5cm,splitter] (s){};
  \node[right of=p1,adder] (a){};
  \node[right of=a,toppler,prime=$u$] (t){$2$};
  \draw[<-] (s) -- ++(-1,0) node[left] (y){$\yy$};
  \node[below of=y] (z){$z$};
  \draw[<-] (o) -- (z);
  \draw[->] (s) to[bend left=20] (p0);
  \draw[->] (s) to[bend right=20] (p1);
  \draw[->] (p0) to[bend left=20] (a);
  \draw[->] (o) to[bend right=20] (a);
  \draw[->] (p1) to (a);
  \draw[->] (a) to (t);
  \draw[->] (t) -- ++(1,0);
\end{tikzpicture}
\caption{Inductive step for emulating a two-layer function.
 (Here the solid disk represents $k-1$ parallel splitters that
 split each of the $k-1$ entries of the vector $\yy$ into two.)
}\label{f.2-layer}
\end{figure}

It suffices to check \eqref{2-layer} for $z=0$ and $z=1$, since both sides
are constant in $z\geq 1$.  Write $\Delta(\yy)=F(\yy,1)-F(\yy,0)$, so that
$\Delta(\yy)\in\{0,1\}$ for each $\yy$.  For $z=0$, the right side of
\eqref{2-layer} is
$$\biggl\lfloor
\frac{2F(\yy,0)+\Delta(\yy)}{2}
\bigg\rfloor
=F(\yy,0).
$$
On the other hand, for $z=1$ we obtain
$$\biggl\lfloor
\frac{2F(\yy,1)+(1-\Delta(\yy))}{2}
\bigg\rfloor
=F(\yy,1),
$$
as required.
\end{proof}

\subsection{A pseudo-minimum and interleaving}

The proof of \cref{t.eventually.intro} follows similar lines to the
proof above, but is considerably more intricate.  Again we will start by
using the meagerization identity to reduce to a simpler case.  The last step
of the above proof can be interpreted as relying on the fact that $\lfloor
(a+b)/2\rfloor=\min\{a,b\}$ if $a$ and $b$ are integers with $|a-b|\leq 1$.
We need a generalization of this fact involving the minimum of $n$ arguments.
The minimum function $(x_1,\ldots,x_n) \mapsto \min\{x_1,\ldots,x_n\}$ itself
is increasing but only \emph{piecewise} linear. Since it has unbounded
difference with any linear function, it cannot be expressed as the sum of a
linear and an eventually periodic function, and thus cannot be computed by a
finite abelian processor.  The next proposition states, however, that there
exists a ZILP function that agrees with $\min$ near the diagonal.  For the
proof, it will be convenient to extend the domain of the function from $\N^n$
to $\Z^n$. \cref{t.recurrent.intro} implies that the restriction of  such a
function to $\N^n$ can be computed by a recurrent abelian network of gates.

\begin{samepage}
\begin{prop}[Pseudo-minimum]
\label{p.blackbox} Fix $n\geq 1$.  There exists an increasing
function $M:\Z^n\to\Z$ with the following properties:
\begin{enumerate}[{\rm (i)}]
\item $M(\xx+n^2 \basis_j)= M(\xx)+n$, for all
    $\xx\in\Z^n$ and $1\leq j\leq n$; \label{cond1}
\item if $\xx\in\Z^n$ is such that $\max_j x_j-\min_j
    x_j\leq n-1$ then $M(\xx)=\min_j x_j$. \label{cond2}
\end{enumerate}
\end{prop}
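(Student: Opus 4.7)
The plan is to construct $M$ explicitly in the form
\[
M(\xx) \;=\; \left\lfloor \sigma(\xx)/n \right\rfloor \;-\; C(\xx),
\]
where $\sigma(\xx) := x_1 + \cdots + x_n$ and $C:\Z^n\to\{0,1,\ldots,n-2\}$ is a correction function, periodic of period $n^2$ in each coordinate. With $C$ so periodic, property \emph{(i)} holds automatically, since $\lfloor \sigma/n\rfloor$ increases by exactly $n$ when any one coordinate increases by $n^2$. For property \emph{(ii)}, the observation is that on the tube $T:=\{\xx:\max_j x_j -\min_j x_j\leq n-1\}$ one has $\sigma(\xx)\in[n\min_j x_j,\;n\min_j x_j+n(n-1)]$, so $\lfloor \sigma(\xx)/n\rfloor-\min_j x_j$ is an integer in $\{0,1,\ldots,n-2\}$; I would set $C(\xx):=\lfloor\sigma(\xx)/n\rfloor-\min_j x_j$ on $T$, which forces $M=\min$ there.

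Next I would verify that the tube prescription is consistent with the periodicity requirement. If $\xx,\xx'\in T$ differ by an element of $(n^2\Z)^n$, then the coordinatewise range of their difference is at most $2(n-1)<n^2$, so in fact $\xx'-\xx = Kn^2\mathbf{1}$ for some $K\in\Z$; such a diagonal shift preserves $\lfloor\sigma/n\rfloor-\min$. Hence $C$ descends to a well-defined function on each coset of $(n^2\Z)^n$ that meets $T$.

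The main obstacle is extending $C$ to the remaining cosets so that $M$ is globally increasing; equivalently, so that $C(\xx+\basis_j)-C(\xx)\leq\lfloor(\sigma(\xx)+1)/n\rfloor-\lfloor\sigma(\xx)/n\rfloor\in\{0,1\}$ for every $\xx$ and every $j$. On tube-to-tube steps this is immediate because $\min$ changes by $0$ or $1$; the subtle step is to propagate $C$ outward through non-tube cosets without ever violating the inequality at a tube-to-non-tube interface. I would complete the extension by an inductive outward sweep from $T$, at each new coset choosing $C$ to be consistent with the values already assigned to neighbours; the bound $C\leq n-2$ provides the slack needed to absorb the propagated constraints. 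Alternatively, one may induct on $n$, building $M_n$ from the pseudo-minimum $M_{n-1}$ on $n-1$ arguments using the two-variable identity $\min(a,b)=\lfloor(a+b)/2\rfloor$ (valid when $|a-b|\leq 1$) in the spirit of \cref{baby-case}. Either way, the global monotonicity verification is the hardest part of the proof; once it is completed, properties \emph{(i)} and \emph{(ii)} are immediate from the construction.
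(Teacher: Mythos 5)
Your ansatz $M(\xx)=\lfloor\sigma(\xx)/n\rfloor-C(\xx)$ with $C$ valued in $\{0,\dots,n-2\}$ and $n^2$-periodic in each coordinate is sound, and it agrees with what the paper's construction does on the tube $T=\{\max_j x_j-\min_j x_j\le n-1\}$. Your consistency check (that two points of $T$ in the same $(n^2\Z)^n$-coset must differ by a multiple of $n^2\11$) is also correct. But the heart of the proposition is exactly the part you flag and do not do: proving that an \emph{increasing} extension exists. Neither of your two suggested routes resolves it. The ``inductive outward sweep'' asserts that feasibility of the constraint system $C(\xx+\basis_j)-C(\xx)\le\lfloor(\sigma(\xx)+1)/n\rfloor-\lfloor\sigma(\xx)/n\rfloor$ can be maintained coset by coset, but this is a global constraint on the discrete torus $(\Z/n^2\Z)^n$ (a potential/flow feasibility problem, with an additional cocycle condition coming from periodicity), and nothing in the proposal rules out painting yourself into a corner. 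Worse, even the step you call ``immediate'' is not: there are adjacent pairs $\xx,\xx+\basis_j$ lying in two \emph{different} translates $T+n^2\uu$, $T+n^2\vv$ with $\uu\ne\vv$, and the compatibility of the assigned values across such a boundary is precisely where the work lies — this is the paper's key claim, proved by an inequality argument that exploits $n^2$ being large compared with the tube width $n-1$. The second alternative, inducting on $n$ via $\min(a,b)=\lfloor(a+b)/2\rfloor$, fails outright: that identity requires $|a-b|\le1$, whereas the target tube for $M_n$ has width $n-1$, so a nested pairwise construction cannot produce the exact $\min$ on the required region (already for $n=3$: $\lfloor(\lfloor(a+b)/2\rfloor+c)/2\rfloor\ne\min(a,b,c)$ at $(a,b,c)=(0,2,2)$).

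The paper closes the gap differently: it defines a partial function $\pre$ only on the translates $\K+n^2\uu+s\11$ (a strict subset of $\Z^n$), proves directly that $\pre$ is increasing where defined (the nontrivial estimate mentioned above), and then extends to all of $\Z^n$ by the canonical down-set supremum $M(\xx)=\sup\{\pre(\zz):\zz\le\xx,\ \pre(\zz)\ne\un\}$. This extension is automatically increasing, inherits the periodicity, and matches $\pre$ where $\pre$ is defined — so it sidesteps the need to design $C$ on the non-tube cosets. If you want to salvage your approach, the cleanest fix is to adopt the same two-step structure: prove monotonicity of your partially defined $M$ on $T+(n^2\Z)^n$ (this is the real content), then take the down-set sup rather than trying to construct $C$ everywhere by hand.
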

\end{samepage}

The case $n=1$ of the above result is trivial, since we can take $M$ to be
the identity.  When $n=2$ we can take $M(\xx)=\lfloor(x_1+x_2)/2\rfloor$
(which satisfies the stronger periodicity condition $M(\xx+2 \basis_j)=
M(\xx)+1$ than (i)). The result is much less obvious for $n\geq 3$. Our proof
is essentially by brute force.  Our $M$ will in addition be symmetric in the
coordinates.
%^%
The period $n^2$ appearing in (i) is relatively unimportant. We do not know
whether it can be reduced to order $n$.  Any period suffices for our
application.

\begin{proof}[Proof of Proposition~\ref{p.blackbox}]
We start by defining a function $\pre$ that satisfies
the given conditions but is not defined everywhere.  Then
we will fill in the missing values. Let
$$\K:=\bigl\{x\in\Z^n: \textstyle\min_j x_j=0,\; \max_j x_j\leq n{-}1\bigr\}
=[0,n{-}1]^n\setminus [1,n{-}1]^n.$$ (This is the set on which
(ii) requires $M$ to be $0$.) Write
$\11=(1,\ldots,1)\in\Z^n$. Let $\pre:\Z^n\to\Z\cup\{\un\}$
be given by
\begin{equation}\label{partial-f}
\pre(\xx)=\begin{cases}
n\sum_ju_j +s &\text{if } \xx\in
\K+n^2 \uu + s\11
 \quad\text{for some }\uu\in\Z^n,\; s\in\Z,\\
 \un&\text{otherwise.}
\end{cases}
\end{equation}
Here the symbol $\un$ means ``undefined''.  See \cref{mhat} for an
illustration.
\begin{figure}
\centering
\begin{tikzpicture}[scale=.55]
  \clip (-.1,-.1) rectangle (7.1,7.1);
  \fill[green] (0,0)--(2,0)--(2,1)--(1,1)--(1,2)--(0,2)--cycle;
  \draw (0,0) grid (7,7);
  \foreach \x in {0,...,6} {
  \node at (\x+.5,\x+.5) {\x};
  \node at (\x+1.5,\x+.5) {\x};
  \node at (\x+.5,\x+1.5) {\x};
  \node at (\x+2.5,\x-1.5) {\x};
  \node at (\x+3.5,\x-1.5) {\x};
  \node at (\x+2.5,\x-.5) {\x};
  \node at (\x-1.5,\x+2.5) {\x};
  \node at (\x-1.5,\x+3.5) {\x};
  \node at (\x-.5,\x+2.5) {\x};
  }
\end{tikzpicture}
\caption{Part of the function $\pre$ when $n=2$.
The origin is at the bottom left, and the region $\K$ is shaded.}
\label{mhat}
\end{figure}

We first check that the above definition is
self-consistent.  Suppose that $\xx\in \K+n^2\uu+s\11$ and
$\xx\in \K+ n^2\vv+t\11$; we need to check that the
assigned values agree. First suppose that $\uu{-}\vv$ does
not have all coordinates equal.  Then
$$\bigl\|(n^2\uu+s\11)-(n^2\vv+t\11)\bigr\|_\infty=
\bigl\|n^2(\uu{-}\vv)+(s{-}t)\11\bigr\|_\infty \geq \frac
{n^2}{2},$$ since two coordinates of $n^2(\uu{-}\vv)$ differ
by at least $n^2$, and the same quantity $s{-}t$ is added to
each.  This gives a contradiction since $\K$ has
$\|\cdot\|_\infty$-diameter $n{-}1<n^2/2$.  Therefore,
$\uu{-}\vv$ has all coordinates equal, i.e.\ $\uu{-}\vv=w\11$
for some $w\in\Z$.  Since $\K+a\11$ and $\K+b\11$ are
disjoint for $a\neq b$, we must have
$n^2\uu+s\11=n^2\vv+t\11$, so $n^2 w=t{-}s$.  But then the
two values assigned to $\pre(\xx)$ by \eqref{partial-f} are
$n\sum_j u_j+s$ and $n(\sum_j u_j - nw)+t$, which are
equal.

Next observe that $\pre$ satisfies an analogue of (i).
Specifically,
\begin{equation}\label{f-per}
  \pre(\xx)\neq \un \quad\text{implies}\quad \pre(\xx+n^2\vv)=\pre(\xx)+n\sum_jv_j.
\end{equation}
This is immediate from \eqref{partial-f}.  Note also that
$\pre$ satisfies (ii), i.e.\
\begin{equation}\label{f-diag}
\max_j x_j-\min_j x_j\leq n{-}1 \quad\text{implies}\quad
\pre(\xx)=\min_j x_j,
\end{equation}
since the assumption is equivalent to $\xx\in \K+(\min_j
x_j)\11$.

The key claim is that $\pre$ is increasing where it is
defined:
\begin{equation}\label{f-mono}
\xx\leq\yy\text{ and } \pre(\xx)\neq\un\neq \pre(\yy)
\quad\text{imply}\quad
\pre(\xx)\leq \pre(\yy).
\end{equation}
  To prove this, suppose that $\xx\in
\K+n^2\uu+s\11$ and $\yy\in \K+ n^2\vv+t\11$ satisfy
$\xx\leq \yy$.  If $\uu{-}\vv$ has all coordinates equal then
we again write $\uu{-}\vv=w\11$, so $\yy\in
\K+n^2\uu+(t-n^2w)\11$.  For $a<b$, no element of $\K+a\11$
is $\geq$ any element of $\K+b\11$.  Therefore $\xx\leq\yy$
implies $s\leq t-n^2 w$, which yields $\pre(\xx)\leq
\pre(\yy)$ in this case. Now suppose $\ww:=\uu{-}\vv$ does
not have all coordinates equal, and write
$\overline{w}=n^{-1} \sum_j w_j$.  Since $\00\leq \K\leq
(n{-}1)\11\leq n\11$,
$$n^2\uu+s\11 \leq \xx\leq\yy \leq n^2\vv+t\11+n\11,$$
which gives $n^2\ww \leq (t-s+n)\11$. Suppose for a
contradiction that $\pre(\xx)>\pre(\yy)$, which is to say
$n\sum_j u_j+s>n\sum_j v_j+t$, i.e.\ $t-s<n^2\overline{w}$.
Combined with the previous inequality, this gives
$n^2\ww<(n+n^2\overline{w})\11$ (where $<$ denotes strict
inequality in all coordinates). That is
$$w_j-\overline{w}<\frac{1}{n},\qquad 1\leq j\leq n$$
which is impossible by the assumption on $\ww$.  Thus \eqref{f-mono} is proved.

Now we fill in the gaps: define $M$ by
$$M(\xx):=
 \sup\bigl\{\pre(\zz):\zz\leq \xx\text{ and }\pre(\zz)\neq\un\bigr\},
$$
where the supremum is $-\infty$ if the set is empty and
$+\infty$ if it is unbounded above.  (But these
possibilities will in fact be ruled out below).

If $\xx\leq\yy$ then the set in the definition of
$M(\xx)$ is contained in that for $M(\yy)$.  So
$M$ is increasing. If $\pre(\xx)\neq\un$ then taking
$\zz=\xx$ and using \eqref{f-mono} gives
$M(\xx)=\pre(\xx)$. In particular $M$ satisfies (ii)
by \eqref{f-diag}.  It is easily seen that for every $\xx$
there exist $\uu\leq\xx\leq\vv$ such that
$\pre(\uu)\neq\un\neq \pre(\vv)$.  Therefore monotonicity
of $M$ shows that $M(\xx)$ is finite. Finally, the
definition of $M$ and \eqref{f-per} immediately imply
that $F$ satisfies the same equality as $\pre$ in
\eqref{f-per} (now for all $\xx$), which is (i).
\end{proof}

The above result will be applied as follows.

\begin{lemma}
\moniker{Interleaving} \label{l.stronginterleaving} Fix $n,k \geq 1$. Let $F : \N^{k-1} \times \N
\to \N$ be an increasing function satisfying
	\begin{equation} \label{e.benign} F(\yy,z) \leq F(\yy,z{+}1) \leq F(\yy,z)+1. \end{equation}
for all $\yy \in \N^{k-1}$ and all $z \in \N$.  Then
	\[ F(\yy,z) = M \bigl( F(\yy,z_0), \ldots, F(\yy,z_{n-1}) \bigr) \]
where $M$ is the function from \cref{p.blackbox}, and
		\[ z_i =z_i(z):= n \floor{\frac{z{+}n{-}i{-}1}{n}} + i. \]
\end{lemma}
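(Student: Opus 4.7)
The plan is to show that the arguments $F(\yy,z_0),\dots,F(\yy,z_{n-1})$ lie in a short interval so that property (ii) of \cref{p.blackbox} applies, reducing the pseudo-minimum to an actual minimum that is easily identified as $F(\yy,z)$.

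First I would analyze the $z_i$ combinatorially. Writing $z = nq + r$ with $0 \leq r < n$, a direct case split on whether $i < r$ or $i \geq r$ gives
\[
z_i = \begin{cases} n(q+1)+i & \text{if } i < r, \\ nq+i & \text{if } i \geq r. \end{cases}
\]
From this one reads off that $\{z_0,\dots,z_{n-1}\} = \{z, z+1, \dots, z+n-1\}$ as a set (the map $i\mapsto z_i$ is a permutation of these $n$ consecutive integers), and in particular $\min_i z_i = z$ and $\max_i z_i = z+n-1$.

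Next, telescoping the benign bound \eqref{e.benign} gives $F(\yy,z+n-1) - F(\yy,z) \leq n-1$. Since $F$ is increasing in $z$, this implies that the values $F(\yy,z_0),\dots,F(\yy,z_{n-1})$ all lie in the interval $[F(\yy,z),\, F(\yy,z)+n-1]$, and in particular
\[
\max_i F(\yy,z_i) - \min_i F(\yy,z_i) \leq n-1.
\]
Hence property (ii) of \cref{p.blackbox} applies to the tuple $(F(\yy,z_0),\dots,F(\yy,z_{n-1}))$, yielding
\[
M\bigl( F(\yy,z_0), \dots, F(\yy,z_{n-1}) \bigr) = \min_i F(\yy,z_i).
\]

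Finally, monotonicity of $F$ in the last coordinate together with $\min_i z_i = z$ gives $\min_i F(\yy,z_i) = F(\yy, \min_i z_i) = F(\yy,z)$, completing the proof. There is no real obstacle here; the only thing to be careful about is the bookkeeping identifying the set $\{z_i\}$ with $\{z,\dots,z+n-1\}$, which is what guarantees simultaneously that the arguments of $M$ are within $n-1$ of each other and that the minimum is attained at $z_i = z$.
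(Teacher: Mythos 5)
Your proof is correct and takes essentially the same approach as the paper: observe that $(z_i)_{i=0}^{n-1}$ is a permutation of $\{z,z+1,\ldots,z+n-1\}$, invoke the roughness bound \eqref{e.benign} to place the values $F(\yy,z_i)$ within a window of width $n-1$ so that property (ii) of \cref{p.blackbox} reduces $M$ to $\min$, and then identify the minimum as $F(\yy,z)$ by monotonicity. The paper states the permutation fact without the case-split verification, but otherwise the arguments coincide.
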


\begin{figure}
\centering
\begin{tikzpicture}[thick,scale=.45]
  \draw[magenta] (0,0)--(1,0)--++(0,4)--++(4,0)--++(0,4)--++(4,0) node[right](z0){$z_0$};
  \draw[orange] (0,1)--(2,1)--++(0,4)--++(4,0)--++(0,4)--++(3,0) node[right](z1){$z_1$};
  \draw[green!70!black] (0,2)--(3,2)--++(0,4)--++(4,0)--++(0,4)--++(2,0) node[right](z2){$z_2$};
  \draw[blue] (0,3)--(4,3)--++(0,4)--++(4,0)--++(0,4)--++(1,0) node[right](z3){$z_3$};
  \draw[black,->] (.5,-.2)--(10.5,-.2) node[right](z){$z$};
  \draw[black,<-] (.5,11.5)--(.5,-.2);
\end{tikzpicture}
\caption{The interleaving functions $z_i(z)$ for $n=4$.}\label{fig:interleave}
\end{figure}
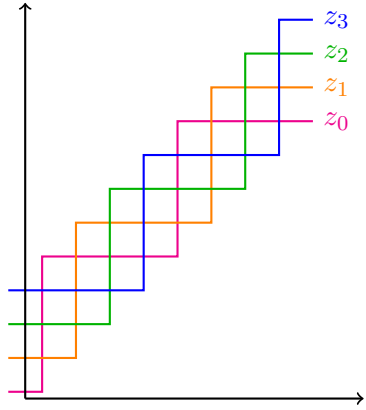

The idea is that the function $(\yy,z)\mapsto
F(\yy,z_i(z))$ appearing in \cref{l.stronginterleaving}
picks out every $n$th layer of $F$, starting from the $i$th
(with each such layer repeated $n$ times after an
appropriate initial offset). \cref{fig:interleave}
illustrates the functions $z_i$. The lemma says that we can
recover $F$ from these functions by ``interleaving'' their
layers, thus reducing the computation of $F$ to potentially
simpler functions. Note that the functions $(\yy,z)\mapsto
F(\yy,z_i(z))$ do not necessarily map $\zero$ to $0$ (even
if $F$ does), and so cannot themselves be computed by
abelian processors. We will address this issue with
appropriate adjustments (akin to the use of the quantity
$u$ in the proof of \cref{baby-case}) when we apply the
lemma in the proof of \cref{t.eventually.intro}.

\begin{proof}[Proof of \cref{l.stronginterleaving}]
As $i$ ranges from $0$ to $n{-}1$, note that $z_i$ takes on each of the
values $z,z{+}1,\ldots,z{+}n{-}1$ exactly once. Thus, the increasing
rearrangement of $(F(\yy,z_i))_{i=0}^{n-1}$ is $(F(\yy,z{+}j))_{j=0}^{n-1}$.
By \eqref{e.benign} it follows that
	\[ M((F(\yy,z_i))_{i=0}^{n-1}) = \min (F(\yy,z_i))_{i=0}^{n-1} = F(\yy,z). \qedhere \]
\end{proof}

\subsection{Proof of main result}

\begin{proof}[Proof of \cref{t.eventually.intro}]
By Lemma~\ref{l.unaryoutput} we may assume that the
processor $\Proc$ to be emulated has unary output.  Suppose
that $\Proc$ computes $F: \N^k \to \N$, and recall from
\cref{t.eventually} that $F$ is ZILEP.  We will use
induction on $k$, with Lemma~\ref{unary-trans} providing
the base case.  We therefore suppose that $k\geq 2$ and
focus on the $k$th coordinate. Suppose that
\begin{equation}\label{def-slr}
F(\yy,z+L)=F(\yy,z)+SL\quad\text{for all }\yy\in\N^{k-1}\text{ and }z\geq
R.
\end{equation}
We call $L$ the \df{period}, $S$ the \df{slope}, and $R$ the \df{margin} (the
width of the non-periodic part) of $F$ with respect to the $k$th coordinate.
%^%
(In the notation of \cref{s.recurrent}, one choice is to take $L=\lambda_k$, $S=b_k$ and
$R=r_k$. Note that $L$ is defined only up to positive integer multiples, and $R$ can always be increased. On the other hand, $S$ is uniquely defined.)
 Motivated by the proof of \cref{baby-case}, we also consider the parameter
\begin{equation}\label{def-w}
W:=\sup_{(\yy,z)\in \N^{k}} F(\yy,z+1)-F(\yy,z),
\end{equation}
which we call the \df{roughness} of $F$.  Since the difference inside the
supremum is an eventually periodic function of $(\yy,z)$, we have $W<\infty$.

If $W=0$ then $F$ does not depend on the $k$th coordinate, so we are done by
induction.  Assuming now that $W>0$, we will first reduce to functions satisfying
\eqref{def-slr} and \eqref{def-w} with parameters satisfying one of the
following:

\begin{center}
\begin{tabular}{llll}
	\textbf{Case 0:} & $W=1$,& $L=R=n$, & $S=0$; \\
	\textbf{Case 1:} & $W=1$,& $L=R=n$, & $S=1/n$,
\end{tabular}
\end{center}
where in both cases, $n$ is a positive integer.
\medskip

\paragraph{\bf Reduction to Case 0.}  Suppose that the original function $F$ has slope
$S=0$.  We use the meagerization identity \eqref{e.meagerization} to express
$F$ as the sum $\sum_{j=0}^{W-1} F_j$ where $F_j= \floor{(F+j)/W}$. Each
$F_j$ is ZILEP by \cref{backwards} or \cref{t.eventually}.  We will emulate
each $F_j$ separately and use an adder.  Note that $F_j$ has roughness $1$.
Moreover, since $S=0$ we have
	\begin{equation} \label{e.poopsout} F_j(\yy,z)= F_j(\yy,R) \quad\text{for
all }z\geq R.\end{equation}
Therefore we can set $n=R$,
and $F_j$ satisfies \eqref{def-slr} and \eqref{def-w} with the claimed
parameters for case 0.

\medskip
\paragraph{\bf Reduction to Case 1.} Suppose on the other hand that $F$ has slope $S>0$.
%^%
Take $R$ larger if necessary so that $R\geq WL$. We
use the meagerization identity \eqref{e.meagerization} to express $F$ as
$\sum_{j=0}^{Sn-1} F_j$ where
$$n := \Bigl\lfloor\frac{R}{WL}\Bigr\rfloor WL; \qquad
F_j:=\Bigl\lfloor
\frac{F+j}{Sn}\Bigr\rfloor.$$
%^%
Note that $n \geq 1$.  We will again emulate each $F_j$ separately and add
them. Note that since $Sn \geq \max(R,SL)$, each $F_j$ has roughness $1$.
Next we check that $n$ can be taken as both the period and the margin for
each $F_j$. Since $L$ divides $n$ and $n \geq R$, we have
	\[ F_j (\yy,z{+}n) = \floor{\frac{F(\yy,z) + Sn + j}{Sn}} = F_j(\yy,z) +1
\] for all $\yy \in \N^{k-1}$ and all $z \geq n$.  Finally, $F_j$ has slope
$1/n$ as desired.

\medskip
\paragraph{\bf Inductive step.}
Assume now that $F$ satisfies \eqref{def-slr} and \eqref{def-w} with
parameters as given in case 0 or case 1 in the above table.  Also assume
that the result of the theorem holds for all processors with $k-1$ inputs.
We will apply the interleaving result, \cref{l.stronginterleaving}, rewritten in terms of functions
that map $\zero$ to $0$.  To this end, define for $i=0,\ldots, n-1$,
\begin{align*}
  \zeta_i(z)&:=\Bigl\lfloor\frac{z+n-i-1}{n}\Bigr\rfloor, &z\in\N;\\
  u_i&:=F(\zero,i);& \\
  G_i(\yy,\zeta)&:=F(\yy,n\zeta+i)-u_i, &\yy\in \N^{k-1},\;\zeta\in\N;
\intertext{and}
  M_\uu(\vv)&:=M(\vv+\uu), & \vv\in\N^n,
\end{align*}
where $\uu:=(u_0,\ldots,u_{n-1})\in \Z^n$ and $M$ is the function from
\cref{p.blackbox}. Then we have
\begin{equation}\label{central-equation}
 F(\yy,z)=M_\uu\Bigl(
G_0\bigl(\yy,\zeta_0(z)\bigr),\ldots,G_{n-1}\bigl(\yy,\zeta_{n-1}(z)\bigr)\Bigr).
\end{equation}
This follows
 from \cref{l.stronginterleaving}: the condition \eqref{e.benign} is satisfied because
 $W=1$.

Note that $u_0=0$ and $u_{i+1}-u_{i}\leq 1$ (also because
$W=1$), so by \cref{p.blackbox} we have
$M_\uu(\zero)=M(\uu)=0$.  Moreover, the function $M_\uu$ is
increasing and periodic, so by \cref{t.recurrent} there is
an recurrent abelian processor $\ProcM_\uu$ that computes
it, and by \cref{t.recurrent.intro} there exists a network
(of topplers, adders and splitters) that emulates
$\ProcM_\uu$.  Also note that the function $z\mapsto
\zeta_i(z)$ is computed by a primed toppler.

For each $i$, the function $G_i$ is increasing, and can be expressed as a
linear plus an eventually periodic function (by \cref{t.eventually}). And we have
$$G_i(\zero,0)=F(\zero,i)-u_i=0.$$
So our task is reduced to finding a network to compute $G_i$.  For all $\yy$
and $\zeta\geq 1$ we have
$$G_i(\yy,\zeta+1)-G_i(\yy,\zeta)=F(\yy,n\zeta+i+n)-F(\yy,n\zeta+i)
=\begin{cases}
  0&\text{in case 0}\\
  1&\text{in case 1}.
\end{cases}
$$

Thus, in case 0, $G_i$ is ZILEP and satisfies the condition
\eqref{constant-after-1}, so by \cref{baby-case} it can be computed by a
network of gates and $(k-1)$-ary processors.  By the inductive hypothesis,
each $(k-1)$-ary processor can be replaced with a network of gates that
emulates it.

On the other hand, in case 1 we can write
$$G_i(\yy,\zeta)=H_i(\yy,\zeta)+(\zeta-1)^+,$$
for a function $H_i$ (which can be written
$H_i(\yy,\zeta):=G_i(\yy,\ind[\zeta>0])$), that is ZILEP
and satisfies \eqref{constant-after-1}.  By
\cref{baby-case} and the inductive hypothesis, $H_i$ can be
computed by a network of gates.  Thus, we can compute $G_i$
by feeding $\zeta$ into a splitter, sending one output to a
delayer and the other to a processor $\ProcH_i$ that
computes $H_i$, and adding the results.  See \cref{case1}.
\begin{figure}
\centering
\setlength{\tabcolsep}{8pt}
\begin{tabular}{ccc}
\begin{tikzpicture}[abelian,baseline=-1mm]
  \node[block,minimum height=1cm,minimum width=1cm] (h) {$\ProcG_i$};
\end{tikzpicture}
&$\Longleftrightarrow$&
\begin{tikzpicture}[abelian,baseline=-1cm]
  \node[block,minimum height=1cm,minimum width=1cm] (h) {$\ProcH_i$};
  \node[delayer,below of=h](d){};
  \path (h) -- (d) node[pos=.5,right=1.5cm,adder] (a){};
  \node[splitter,left of=d](s){};
  \draw[<-] (s) -- ++(-1,0) node[left](z){$\zeta$};
  \node[above of=z] (y){$\yy$};
  \draw[->] (s) to (h);
  \draw[->] (y) to (h);
  \draw[->] (s) to (d);
  \draw[->] (h) to[bend left=20] (a);
  \draw[->] (d) to[bend right=20] (a);
  \draw[->] (a)--++(1,0);
\end{tikzpicture}
\end{tabular}
\caption{The additional reduction in case 1.}\label{case1}
\end{figure}
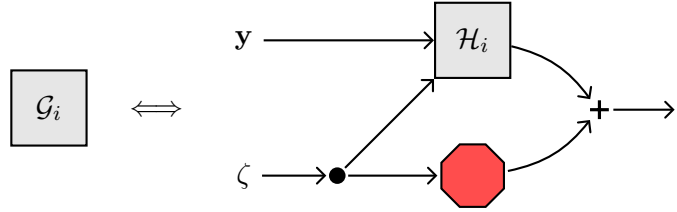

Finally, \eqref{central-equation} and \cref{fig:induct}
show how to complete the emulation of $F$ in either case:
$z$ is split and fed into various primed topplers that
compute the functions $\zeta_i(z)$, which are combined with
$\yy$ and fed into networks emulating the $\ProcG_i$. The
results are combined using $\ProcM_\uu$.
\begin{figure}
\centering
\begin{tikzpicture}[abelian]
  \node[block,minimum height=1cm,minimum width=1cm] (p0) {$\ProcG_0$};
  \node[toppler,below left of=p0,xshift=-3mm,inner sep=0pt,prime=$n-1$] (t0){$n$};
  \node[block,below of=p0,minimum height=1cm,minimum width=1cm] (p1) {$\ProcG_1$};
  \node[toppler,below left of=p1,xshift=-3mm,inner sep=0pt,prime=$n-2$] (t1){$n$};
  \node[below=2mm of p1] (ddd){$\vdots$};
  \node[block,below= 5mm of ddd,minimum height=1cm,minimum width=1cm] (pn) {$\ProcG_n$};
  \node[toppler,below left of=pn,xshift=-3mm,prime=$0$] (tn){$n$};
  \node[block,right=1.3cm of p1,minimum height=2cm,minimum width=1cm](m){$\ProcM_\uu$};
  \node[splitter,left=4cm of p1] (sy){};
  \node[splitter,left=4cm of pn] (sz){};
  \draw[<-] (sy)--++(-1,0)node[left] (y){$\yy$};
  \draw[<-] (sz)--++(-1,0)node[left] (z){$z$};
  \draw[->] (m)--++(1,0);
  \draw[->] (sy)to[bend left=30](p0);
  \draw[->] (sy)to(p1);
  \draw[->] (sy)to[bend right=30](pn);
  \draw[->] (sz)to[bend left=30](t0.west);
  \draw[->] (sz)to(t1.west);
  \draw[->] (sz)to[bend right=30](tn.west);
  \draw[->] (p0)to(m);
  \draw[->] (p1)--(m);
  \draw[->] (pn)to(m);
  \draw[->] (t0)--(p0);
  \draw[->] (t1)--(p1);
  \draw[->] (tn)--(pn);
\end{tikzpicture}
\caption{The main inductive step.}\label{fig:induct}
\end{figure}
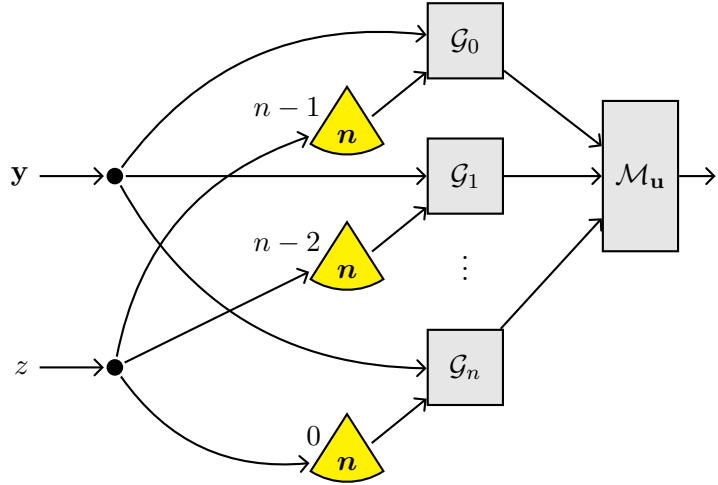
\end{proof}

\section{Necessity of all gates}
\label{s.lowerbounds}

In this section we study the classes of functions computable by various
subsets of the abelian logic gates in Table~\ref{fig:gates}. The following
observation will be useful: A function on $\N^k$ can be decomposed in at most
one way as the sum of a linear and an eventually periodic function. Indeed,
the difference of two linear functions is either zero or unbounded on $\N^k$,
so if
	\[ L_1 + P_1 = L_2 + P_2 \] for some linear functions $L_1, L_2$ and some
eventually periodic functions $P_1, P_2$, then $L_1 - L_2$ is bounded on
$\N^k$ and hence $L_1 = L_2$, which in turn implies $P_1 = P_2$.

\subsection{Necessity of infinitely many component types}

We have seen that $2$-topplers, splitters and adders suffice to emulate any
finite recurrent abelian processor if feedback is permitted. The goal of this
section is to show that no finite list of components will suffice to emulate
all finite recurrent processors by a \emph{directed acyclic} network.

The \textbf{exponent} of a recurrent abelian processor is the smallest
positive integer $m$ such that inputting $m$ copies of any letter acts as the
identity: $t_i^m(q)=q$ for all recurrent states $q$ and all input letters
$i$.

\begin{lemma}
\label{l.exponent1} Let $\Net$ be a finite, directed acyclic network of
recurrent abelian components. The exponent of $\Net$ divides the product of
the exponents of its components.
\end{lemma}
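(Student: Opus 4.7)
The plan is to proceed by induction on the number of components in $\Net$, with the trivial case of zero components as the base (where $M = 1$ is the empty product and $\Net$ has a single state, hence exponent $1$).

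For the inductive step, since $\Net$ is a finite directed acyclic graph, it has a source component $v$: one with no incoming edges from other components (so every incoming edge to $v$ is an input wire of $\Net$). Let $m := m_v$, write $M' := \prod_{u \neq v} m_u$, and let $\Net'$ denote the subnetwork obtained by deleting $v$; every edge from $v$ that previously went to a component of $\Net'$ becomes an input wire of $\Net'$. Then $\Net'$ is itself a finite directed acyclic network of recurrent components, so by the inductive hypothesis its exponent divides $M'$. Set $M := m M' = \prod_u m_u$. Since each component $u$ is a recurrent processor, every state of $u$ is recurrent, so $t_j^{m_u}$ is the identity on every state of $u$ for each input letter $j$.

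The aim is to show that $t_j^M(q) = q$ for every state $q$ of $\Net$ and every input letter $j$ of $\Net$. I split into two cases based on whether the input wire carrying $j$ feeds into $v$ or into a component of $\Net'$. In the second case, $v$ is untouched, while $\Net'$ receives $M$ copies of $j$ on an input wire; since the exponent of $\Net'$ divides $M' \mid M$, the subnetwork $\Net'$ returns to its prior state. In the first case, $v$ receives $M$ copies of $j$ and, since $m \mid M$, returns to state $q_v$; moreover, since $v$ cycles through identical outputs every $m$ inputs starting from $q_v$, the total output of $v$ on each outgoing edge $e$ during this phase is exactly $M' \cdot N_e$, where $N_e$ is the number of letters $v$ emits on $e$ after $m$ inputs from $q_v$. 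These outputs become inputs to $\Net'$, each arriving on a single input wire of $\Net'$ as a multiple of $M'$. Because the exponent of $\Net'$ divides $M'$ and $\Net'$ is abelian, feeding a multiple of $M'$ on each such input wire (processing the wires one at a time) successively returns $\Net'$ to its prior state. Thus $\Net$ returns to $q$ in both cases, completing the induction.

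The main delicate point is the last step: moving from ``$\Net'$ resets when a multiple of $M'$ letters are fed on a single wire'' to ``$\Net'$ resets when multiples of $M'$ are fed on several wires simultaneously.'' This is enabled by the abelian property, which lets us process the wires sequentially without changing the outcome, each individually resetting $\Net'$. A small bookkeeping matter is verifying that $\Net' = \Net \setminus v$ remains a finite directed acyclic network of recurrent components, which it does since deletion preserves these properties.
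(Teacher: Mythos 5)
Your proof is correct and follows essentially the same inductive strategy as the paper: peel off a source component, use its exponent $m$ to show that after $M = mM'$ inputs it resets and emits multiples of $M'$ on each outgoing edge, then invoke the inductive hypothesis on the remaining network. You spell out two points the paper leaves implicit — the trivial case where the input letter $j$ bypasses $v$ entirely, and the use of commutativity to handle multiples of $M'$ arriving on several input wires of $\Net'$ at once — but the underlying argument is the same.
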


\begin{proof}
Induct on the number of components. Since $\Net$ is directed acyclic, it has
at least one component $\Proc$ such that no other component feeds into
$\Proc$. Let $m$ be the exponent of $\Proc$, and let $M$ be the product of
the exponents of all components of $\Net$. For any letter $i$ in the input
alphabet of $\Proc$, if we input $M$ letters $i$ to $\Proc$, then $\Proc$
returns to its initial recurrent state and outputs a nonnegative integer
multiple of $M/m$ letters of each type. By induction, the exponent of the
remaining network $\Net - \Proc$ is divisible by $M/m$, so all other
processors also return to their initial recurrent states.
\end{proof}

\old{ %omit?
\begin{remark}
The product of exponents in the above lemma can be improved slightly to the quantity
	\[ M = \mathrm{lcm} \{ m_\gamma | \gamma : I \to O \}, \] the least
common multiple over all paths $\gamma$ from an input node to an output node,
of the product $m_\gamma$ of the exponents of the components along $\gamma$.
\end{remark}

\begin{remark}
Lemma~\ref{l.exponent1}
remains true even if we do not require the components to be recurrent.
The proof in this case uses that any recurrent state of $\Net$ is also
locally recurrent \cite[Lemma 3.9]{BL15c}. %even if we allow input only at designated input nodes?
\end{remark}
}

\begin{lemma}
\label{l.exponent2} Let $\Net$ be a finite, directed acyclic network of
recurrent abelian components that emulates a $\lambda$-toppler. Then
$\lambda$ divides the exponent of $\Net$.
\end{lemma}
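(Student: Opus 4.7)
The plan is to turn the statement about the exponent of $\Net$ into a functional equation for $f_\Net$ via the parallelogram lemma, and then exploit the specific form of the floor function to extract the divisibility.

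First I would note that $\Net$, viewed as a single abelian processor via \cref{backwards}, is recurrent by \cref{l.acyclic} since its components are recurrent and the graph is directed acyclic. In particular, the initial state $q^\start$ of $\Net$ is a recurrent state, so the defining property of the exponent $m$ applies at $q^\start$: feeding $m$ input letters to $\Net$ returns it to $q^\start$. Since $\Net$ emulates a $\lambda$-toppler, it is unary-input, so we may apply \cref{l.parallelogram} with $\yy = m$ and $\yy' = 0$ (both leave $\Net$ in state $q^\start$) to obtain
\[ f_\Net(x + m) - f_\Net(m) = f_\Net(x) - f_\Net(0) = f_\Net(x) \qquad \text{for all } x \in \N. \]

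Next, using the assumption that $\Net$ emulates a $\lambda$-toppler, namely $f_\Net(x) = \lfloor x/\lambda \rfloor$, the above identity becomes
\[ \Bigl\lfloor \frac{x + m}{\lambda} \Bigr\rfloor = \Bigl\lfloor \frac{x}{\lambda} \Bigr\rfloor + \Bigl\lfloor \frac{m}{\lambda} \Bigr\rfloor \qquad \text{for all } x \in \N. \]

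Finally, suppose for contradiction that $\lambda \nmid m$, and write $m = q\lambda + r$ with $1 \leq r \leq \lambda - 1$. Substituting $x = \lambda - r \in \N$, the left side is $\lfloor (q+1)\lambda / \lambda \rfloor = q + 1$, while the right side is $0 + q = q$, a contradiction. Hence $\lambda \mid m$.

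There is essentially no obstacle here; the argument is a direct application of \cref{l.parallelogram}. The only point requiring care is verifying that the exponent hypothesis can be invoked at $q^\start$, which is why I would cite \cref{l.acyclic} explicitly to ensure $q^\start$ is a recurrent state of $\Net$.
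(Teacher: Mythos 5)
Your proof is correct, and it takes a somewhat different route from the paper's. Both arguments start from the same key observation: since $m$ is the exponent of $\Net$, feeding $m$ input letters returns $\Net$ to its initial recurrent state. You extract the resulting Cauchy-like relation $f_\Net(x+m)=f_\Net(x)+f_\Net(m)$ via Lemma~\ref{l.parallelogram}, then substitute the explicit formula $f_\Net(x)=\lfloor x/\lambda\rfloor$ and obtain a contradiction by choosing $x=\lambda-r$ when $m\equiv r\not\equiv 0\pmod\lambda$. The paper instead observes that $x\mapsto F_\Net(mx)$ must be a \emph{linear} function, invokes the uniqueness of the $L+P$ decomposition (stated at the start of Section~\ref{s.lowerbounds}) to identify its slope as $m/\lambda$, and concludes $F_\Net(m)=m/\lambda\in\N$. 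The paper's route is slicker and reuses the structural $L+P$ machinery already developed; yours is more elementary and self-contained, relying only on Lemma~\ref{l.parallelogram} and an explicit computation with floors. One small caveat: Lemma~\ref{l.acyclic} as stated only says that $\Net$ \emph{emulates} a recurrent processor, i.e., that $f_\Net$ is computable by one; it does not by itself assert that $\Net$, viewed as a processor via Proposition~\ref{backwards}, is recurrent (so that $q^\start$ is a recurrent state). That stronger fact is indeed true for a directed acyclic network of recurrent components (and is what the proof of Lemma~\ref{l.exponent1} effectively establishes, and what the paper tacitly assumes in speaking of ``the exponent of $\Net$''), but it deserves a word more than the bare citation gives.
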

% recurrent is not needed here, just makes the proof wording simpler.
% It really is product, not lcm! For example, two 2-topplers in series emulate a 4-toppler.
% But, we could replace the product by the lcm, over all directed paths from input to output, of the product of the exponents along the path.

\begin{proof}
If $m$ is the exponent of $\Net$, then $x \mapsto F_{\Net}(mx)$ is a linear
function. Equating the linear parts of the $L+P$ decomposition of $\Net$ and
the $\lambda$-toppler, we obtain
	\[ \frac{F_{\Net}(mx)}{m} = \frac{x}{\lambda} \]
for all $x \in \N$. Setting $x=1$ gives $\lambda$ divides $m$.
\end{proof}

Lemmas~\ref{l.exponent1} and~\ref{l.exponent2} immediately imply the following.

\begin{corollary}
Let $\mathcal{L}$ be any finite list of finite recurrent abelian processors.
There exists $p \in \N$ such that a finite, directed acyclic network of
components from $\mathcal{L}$ cannot emulate a $p$-toppler.
\end{corollary}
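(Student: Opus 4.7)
The plan is to combine the two lemmas immediately preceding the corollary by tracking which primes can appear in the exponent of a DAG network of components drawn from $\mathcal{L}$. Since $\mathcal{L}$ is finite, let $m_1, \dots, m_r$ be the exponents of its processors, and let $P$ denote the (finite) set of prime divisors of $m_1 m_2 \cdots m_r$. I claim that choosing any prime $p \notin P$ will suffice.

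First I would invoke Lemma~\ref{l.exponent1}: for any finite directed acyclic network $\Net$ built from components in $\mathcal{L}$, the exponent of $\Net$ divides the product of the exponents of its components. Each component exponent equals some $m_i$, and any product of numbers drawn from $\{m_1, \dots, m_r\}$ (with repetition) has prime factors only in $P$. Hence the exponent of $\Net$, being a divisor of such a product, has prime factors only in $P$.

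Next I would apply Lemma~\ref{l.exponent2}: if $\Net$ emulates a $\lambda$-toppler, then $\lambda$ divides the exponent of $\Net$. Combining with the previous step, every prime divisor of $\lambda$ lies in $P$. Therefore, for any prime $p \notin P$ (which exists since $P$ is finite), no DAG network of components from $\mathcal{L}$ can emulate a $p$-toppler. This concludes the proof.

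There is no real obstacle here; the work has all been done in Lemmas~\ref{l.exponent1} and~\ref{l.exponent2}, and the corollary is just an observation that the set of ``reachable'' values of $\lambda$ is constrained to integers supported on the finite prime set $P$, whereas there are infinitely many primes.
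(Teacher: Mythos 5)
Your proof is correct and takes essentially the same approach as the paper's: the paper simply says ``Let $p$ be a prime that does not divide the exponent of any member of $\mathcal{L}$,'' leaving the combination of Lemmas~\ref{l.exponent1} and~\ref{l.exponent2} implicit. You have merely spelled out that implicit argument in full.
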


\begin{proof}
Let $p$ be a prime that does not divide the exponent of any member of $\mathcal{L}$.
\end{proof}

\subsection{Necessity of primed topplers in the recurrent case.}

A directed acyclic network of adders, splitters and unprimed topplers
computes a function $L+P$ with $L$ linear and $P$ periodic with $P \leq 0$.
% And conversely?
The inequality follows from converting each toppler $\floor{x/\lambda}$ into
its linear part $x/\lambda$.  Recall however that we can do away with primed
topplers if we allow presinks (\cref{l.unpriming}).

\subsection{Necessity of delayers and presinks.}

Lemma~\ref{l.acyclic} implies that a directed acyclic network of recurrent
components is itself recurrent, so at least one transient gate is needed in
order to emulate an arbitrary finite abelian processor. But why do we have
\emph{two} transient gates, the delayer and the presink?  In this section we
will show that neither can be used along with recurrent components to emulate
the other.

\begin{lemma}
\label{l.zilpandbounded}
If $G : \N \to \N$ is both ZILP and bounded, then $G \equiv 0$.
\end{lemma}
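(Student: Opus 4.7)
The plan is to exploit the uniqueness of the linear-plus-periodic decomposition (as established at the start of \cref{s.lowerbounds}) together with the ZILP defining properties to force both summands to vanish.

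First I would write $G = L + P$ with $L : \N \to \Q$ linear and $P : \N \to \Q$ periodic, using the ZILP hypothesis. Since $P$ is periodic on $\N$ it is automatically bounded, so boundedness of $G$ forces $L$ to be bounded. But a linear function $L(x) = cx$ on $\N$ is bounded only when $c = 0$, so $L \equiv 0$ and therefore $G = P$.

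Next I would combine periodicity with monotonicity. Since $G = P$ is periodic, there exists $\lambda \geq 1$ with $G(x+\lambda) = G(x)$ for all $x \in \N$. On the other hand, $G$ is increasing, so $G(0) \leq G(1) \leq \cdots \leq G(\lambda) = G(0)$, forcing $G$ to be constant on $\{0,1,\ldots,\lambda\}$ and hence, by periodicity, on all of $\N$. Finally, the ZILP condition $G(\zero) = 0$ pins down this constant value as $0$, giving $G \equiv 0$.

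There is no serious obstacle here; the only minor point to check is that the uniqueness of the $L+P$ decomposition (already noted at the opening of \cref{s.lowerbounds}) justifies treating the linear and periodic parts separately, but in this one-variable bounded setting the argument above proceeds directly without needing to invoke uniqueness explicitly.
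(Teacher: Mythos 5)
Your argument is correct and follows the same route as the paper's: decompose $G = L + P$, note that boundedness forces $L \equiv 0$, and then conclude from monotonicity and periodicity that $G$ is constant, hence zero since $G(0)=0$. The only difference is that you spell out the final step (the chain $G(0) \leq \cdots \leq G(\lambda) = G(0)$) that the paper simply asserts; this is a harmless and arguably welcome bit of extra detail.
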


\begin{proof}
Write $G = L+P$ for $L$ linear and $P$ periodic. In particular, $P$ is
bounded, so if $G$ is bounded then $L$ is both bounded and linear, hence
zero. But then $G=P$, and the only increasing periodic function is the zero
function.
\end{proof}

\begin{prop}
Let $\Net$ be a finite directed acyclic network of recurrent components and
delayers. Then $\Net$ cannot emulate a presink.
\end{prop}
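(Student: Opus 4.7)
I would argue by contradiction. The presink function $x\mapsto\min(x,1)$ is bounded but not identically zero. Since the decomposition $F=L+P$ from \cref{t.eventually} into a linear plus an eventually periodic function is unique (as noted at the start of \cref{s.lowerbounds}), any bounded $F$ has $L\equiv 0$. So it suffices to prove the following claim: \emph{if $\Net$ is any finite, directed acyclic network of recurrent components and delayers with $F_\Net=L+P$ and $L\equiv 0$, then $F_\Net\equiv 0$.} Applied to the presink we would get the contradiction $0=F_\Net(1)=\min(1,1)=1$.

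To prove the claim, I would assign to each edge $e$ of $\Net$ a \emph{rate vector} $\rr_e\in \Q_{\ge 0}^k$ equal to the coefficient vector of $\xx$ in the linear part of the function $F_e$ computed on the edge $e$. Input edges receive $\rr=\basis_i$; a delayer (whose function $(x-1)^+$ has linear part $x$) passes its rate through unchanged; and a recurrent processor $\Proc$ with outputs $e'$ and associated linear coefficients $\cc^{(e')}$ (from the decomposition $f_\Proc^{(e')}=\cc^{(e')}\cdot\yy+Q^{(e')}(\yy)$ granted by \cref{t.recurrent}) produces $\rr_{e'}=\sum_\ell c_\ell^{(e')}\rr_{y_\ell}$. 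Splitters, adders and topplers (primed or otherwise) all fall under the recurrent-processor case. Monotonicity of $f_\Proc$ combined with periodicity of $Q^{(e')}$ forces each $c_\ell^{(e')}\ge 0$, so all rates remain in $\Q_{\ge 0}^k$, and at the output edge one has $\rr_o=\bb$ where $\bb$ is the coefficient vector of the linear part of $F_\Net$.

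I would then prove by induction on the topological depth of $e$ that $\rr_e=\zero$ forces $F_e\equiv 0$. The delayer case is immediate: zero output rate forces zero input rate, the inductive hypothesis gives the input function identically zero, and $(0-1)^+=0$. The main obstacle is the recurrent-processor case: $\rr_{e'}=\zero$ and nonnegativity force $c_\ell^{(e')}\rr_{y_\ell}=0$ for every $\ell$. If $c_\ell^{(e')}>0$, then $\rr_{y_\ell}=\zero$ and induction gives $y_\ell\equiv 0$. If $c_\ell^{(e')}=0$, I would argue that $f_\Proc^{(e')}$ does not depend on $y_\ell$ at all: fixing all other arguments, the resulting one-variable function is increasing (from monotonicity of $f_\Proc$) and equals a constant plus the genuinely periodic function $Q^{(e')}(\ldots,y_\ell)$, hence must be constant. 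Combining the two subcases with $f_\Proc^{(e')}(\zero)=0$ yields $F_{e'}\equiv 0$. Applied at the output edge, this gives $F_\Net\equiv 0$ and completes the proof. The subtle step is the $c_\ell^{(e')}=0$ case: it relies essentially on genuine periodicity of $Q^{(e')}$, not merely eventual periodicity, which is exactly what the recurrence hypothesis provides and is precisely the feature that would be lost if $\Net$ were allowed to contain presinks.
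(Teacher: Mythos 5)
Your proof is correct, and it takes a genuinely different route from the paper's. The paper's argument is a short, global one: regarding the entire alphabet $A$ of $\Net$ as inputs, it feeds one letter to each delayer via $\11_D$, observing that this converts every delayer into a wire and produces no output, so the shifted function $\tilde F(\xx):=F(\xx+\11_D)$ is ZILP by \cref{l.acyclic} and \cref{t.recurrent}. Restricting to the ray $n\11_a$ (where $a$ is the presink's designated input edge) gives a bounded ZILP function of one variable, which is identically zero by \cref{l.zilpandbounded}; but monotonicity gives $\tilde F(\11_a)\geq F(\11_a)=1$, a contradiction. Your argument instead works edge by edge, tracking the linear coefficient vector $\rr_e$ of the unique $L+P$ decomposition of $F_e$, and showing by induction on topological depth that $\rr_e=\zero$ forces $F_e\equiv 0$. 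Both proofs ultimately exploit the same two facts --- a bounded ZILP function vanishes, and a delayer leaves the linear part intact --- but the paper's version hinges on one well-chosen global input that reduces the transient network to a recurrent one, whereas your local bookkeeping makes transparent exactly why the presink is indispensable: it is the only gate that can annihilate a nonzero linear part, precisely because its output function $\min(x,1)$ is only \emph{eventually} rather than genuinely periodic. One step worth spelling out in your write-up: the rate recursion $\rr_{e'}=\sum_\ell c_\ell^{(e')}\rr_{y_\ell}$ needs the uniqueness of the $L+P$ decomposition (noted at the start of \cref{s.lowerbounds}), since the composed term $Q^{(e')}\bigl((F_{y_\ell})_\ell\bigr)$ is merely bounded and not obviously eventually periodic as a function of $\xx$; one deduces the recursion by noting that $F_{e'}$ is ZILEP by \cref{backwards} and that a linear function differing from $\sum_\ell c_\ell^{(e')} L_{y_\ell}$ by a bounded amount must equal it. This is not a gap, just a point deserving one extra sentence.
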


\begin{proof}
Let $A$ be the total alphabet of $\Net$, and let $F = F_\Net : \N^A \to \N$.
Let $D \subset A$ the set of incoming edges to the delayers. Note that
inputting $\11_D$ converts all delayers to wires, and has no other effect (in
particular, no output is produced: $F(\11_D)=0$). The resulting network with
delayers converted to wires is recurrent by Lemma~\ref{l.acyclic}, so the
function
	\[  \til F (\xx) := F(\xx + \11_D) \]
is ZILP by Theorem~\ref{t.recurrent}.

Now suppose for a contradiction that $\Net$ emulates a presink; that is, for
some letter $a \in A$ we have $F(n \11_a) = \one \{n > 0\}$. Then the
function
	\[ G(n) := F(n \11_a + \11_D) \] is bounded (by $1+ \max_q
F_{\Net,q}(\11_D)$, where the maximum is over the finitely many states $q$ of
$\Net$). Since $G$ is the restriction of the ZILP function $\til F$ to a
coordinate ray, $G$ is ZILP, which implies $G \equiv 0$ by
Lemma~\ref{l.zilpandbounded}.  But $G(1) \geq F(\11_a) = 1$, which gives the
required contradiction.
\end{proof}

The proof shows a bit more: If $\Net$ is a directed acyclic network of
recurrent components and delayers, then $F_\Net$ is either zero or unbounded
along any coordinate ray.

\begin{lemma}
\label{l.zilpequalslinear} If $G : \N \to \N$ is ZILP, say $G=L+P$ with $L$
linear and $P$ periodic, then $G(x) = L(x)$ for infinitely many $x$.
\end{lemma}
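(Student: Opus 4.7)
The plan is to exploit the boundary condition $G(\zero) = 0$ together with the periodicity of $P$. Since $L$ is linear we have $L(0) = 0$, and combining this with $G(0) = L(0) + P(0) = 0$ forces $P(0) = 0$. This is the only input needed from the ZILP hypothesis beyond the decomposition itself.

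Next, since $P : \N \to \Q$ is periodic, there is some period $\lambda \geq 1$ with $P(x + \lambda) = P(x)$ for all $x \in \N$. Iterating gives $P(m\lambda) = P(0) = 0$ for every $m \in \N$, so
\[
G(m\lambda) = L(m\lambda) + P(m\lambda) = L(m\lambda)
\]
for all $m \in \N$. The set $\{m\lambda : m \in \N\}$ is infinite, which yields the desired conclusion.

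There is essentially no obstacle here: the only subtlety is that the statement is phrased in terms of the specific decomposition $G = L + P$, but by the uniqueness remark at the start of \cref{s.lowerbounds} (difference of two linear functions is zero or unbounded), this decomposition is unique when it exists, so no ambiguity arises. No case analysis or approximation is needed.
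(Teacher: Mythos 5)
Your proof is correct and matches the paper's own argument almost verbatim: observe $P(0)=G(0)-L(0)=0$, then use periodicity to conclude $P$ vanishes on an arithmetic progression. The closing remark about uniqueness of the $L+P$ decomposition is not needed for the lemma as stated, but it does no harm.
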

% actually ZLP is enough

\begin{proof}
Since $G(0)=L(0)=0$ we have $P(0)=0$. Since $P$ is periodic, $P(x)=0$ for infinitely many $x$.
\end{proof}

\begin{prop}
Let $\Net$ be a finite, directed acyclic network of
recurrent components and presinks. Then $\Net$ cannot
emulate a delayer.
\end{prop}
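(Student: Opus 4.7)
The plan is to compare $\Net$ with the modified network $\Net_0$ obtained by replacing every presink in $\Net$ with a trivial one-state ``sink'' processor that absorbs its input and emits nothing. The sink is (trivially) recurrent, so $\Net_0$ is a directed acyclic network of recurrent processors, and by \cref{l.acyclic} it emulates a recurrent processor. By \cref{t.recurrent}, the function $F_{\Net_0}:\N\to\N$ is then ZILP, say $F_{\Net_0}(x)=\alpha x+P(x)$ with $\alpha\in\Q_{\geq 0}$ and $P$ periodic with $P(0)=0$.

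The main step is to show that $H(x):=F_\Net(x)-F_{\Net_0}(x)$ is nonnegative and bounded. I would prove this by induction on a topological ordering of the edges of the DAG, establishing the identity that the computation of $\Net$ on external input $x\11_a$ produces the same letter counts on every edge as the computation of $\Net_0$ on the augmented input $x\11_a+\yy^*(x)$; here $\yy^*(x)$ is a $\{0,1\}$-vector supported on the outgoing edges of presinks, whose entry at $e_{S,\mathrm{out}}$ equals the indicator that the presink $S$ actually fires in $\Net$ on input $x\11_a$. The induction is clean because non-presink components are literally the same processor in both networks, while at each presink node the $\yy^*$ correction exactly reproduces the presink's emission. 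Consequently $F_\Net(x)=F_{\Net_0}(x\11_a+\yy^*(x))$, and since $0\leq\yy^*(x)\leq\11_P$, where $\11_P$ denotes the indicator vector of the outgoing edges of all presinks, monotonicity of $F_{\Net_0}$ (\cref{t.eventually}(ii)) gives $F_{\Net_0}(x\11_a)\leq F_\Net(x)\leq F_{\Net_0}(x\11_a+\11_P)$. By the abelian property, the right-hand difference equals $F_{\Net_0,q(x)}(\11_P)$, where $q(x)$ is the state of $\Net_0$ after processing $x\11_a$; this is bounded in $x$ since $\Net_0$ has only finitely many states.

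Now suppose for contradiction that $F_\Net(x)=(x-1)^+$. For $x\geq 1$, $H(x)=(1-\alpha)x-1-P(x)$. Since both $P$ and $H$ are bounded, $(1-\alpha)x$ must be bounded, forcing $\alpha=1$. Therefore $H(x)=-1-P(x)$ for $x\geq 1$. Choosing $\lambda\geq 1$ to be a period of $P$, we obtain $P(\lambda)=P(0)=0$, so $H(\lambda)=-1<0$, contradicting $H\geq 0$. Thus $\Net$ cannot emulate a delayer.

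The principal technical hurdle is the edge-count identity $F_\Net(x)=F_{\Net_0}(x\11_a+\yy^*(x))$. Conceptually it says that $\Net$ and $\Net_0$ can disagree only through presink emissions, and these emissions can be exactly simulated in $\Net_0$ by injecting one letter on the outgoing edge of each firing presink; making this rigorous amounts to careful bookkeeping of edge counts along a topological order of the DAG, after which the monotonicity and finite-state bounds dispatch everything else.
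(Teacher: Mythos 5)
Your proof is correct, but it follows a genuinely different route from the paper's. The paper never physically replaces the presinks: instead it feeds one letter to every presink input (the vector $\11_S$), observes that the resulting ``burned-out'' network behaves like the sink-network $\ProcR$, exploits recurrence of $\ProcR$ to find an input $\uu$ that returns the \emph{other} components to their original states, and so obtains the one-sided inequality $F(\xx)\geq F(\xx+\uu+\11_S)-F(\uu+\11_S)=:G(\xx)$ entirely by the abelian property; since $G$ is ZILP with linear part $n$, Lemma~\ref{l.zilpequalslinear} gives $G(n)=n$ infinitely often, contradicting $G(n)\leq (n-1)^+<n$. You instead literally swap presinks for sinks to form $\Net_0$, prove the sandwich $F_{\Net_0}(x)\leq F_\Net(x)\leq F_{\Net_0}(x)+C$ via the edge-count identity, and then use boundedness and nonnegativity of $H=F_\Net-F_{\Net_0}$ together with the periodicity of $P$. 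Your version is conceptually more transparent (a direct comparison of two networks) and nicely sidesteps the paper's unproved assertion that the linear part of $G$ equals $n$, but it shifts the burden onto the edge-count identity $F_\Net(x)=F_{\Net_0}(x\11_a+\yy^*(x))$, which you correctly identify as the technical crux but only sketch. To make it fully rigorous you would need (a) the topological-order induction you outline, and (b) a small amount of care in formalizing $F_{\Net_0}$ as a function with some arguments placed on internal edges — e.g., splice a fresh input edge and an adder at each former presink output so that $\Net_0$ is literally an abelian network in the paper's sense with an enlarged input alphabet, to which \cref{l.acyclic} and \cref{t.recurrent} then apply. With those details filled in, the argument stands.
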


\begin{proof}
Let $A$ be the total alphabet of $\Net$, and let $F =
F_\Net : \N^A \to \N$.  Let $S \subset A$ the set of
incoming edges to the presinks. Note that inputting $\11_S$
converts all presinks to sinks. However, unlike the input
$\11_D$ of the previous proposition, the input $\11_S$ may
have other effects: It may change the states of other
components, and may produce a nonzero output $F(\11_S)$.

Denote by $\qq^0$ the initial state of $\Net$ and by
$\qq^1$ the state resulting from input $\11_S$. The
resulting network $\ProcR$ with presinks converted to sinks
is recurrent by Lemma~\ref{l.acyclic}, so the function
	\[  F_{\ProcR, \qq^1} (\xx) = F(\xx + \11_S) - F(\11_S) \]
is ZILP by Theorem~\ref{t.recurrent}.

Now we relate $F_{\ProcR, \qq^1}$ to $F_{\ProcR, \qq^0}$.
Since $\ProcR$ is recurrent, there is an input $\uu \in
\N^A$ such that inputting $\uu$ to $\ProcR$ in state
$\qq^1$ results in state $\qq^0$. Since converting presinks
to sinks \emph{without} changing the states of any other
components cannot increase the output, we have
	\begin{align*} F(\xx) = F_{\Net,\qq^0}(\xx) &\geq F_{\ProcR, \qq^0}(\xx) \\
		&= F_{\ProcR, \qq^1}(\xx+\uu) - F_{\ProcR, \qq^1}(\uu) \\
		&= F(\xx+\uu+\11_S) - F(\uu+\11_S). \end{align*}

Finally, suppose for a contradiction that $\Net$ emulates a
delayer; that is, for some letter $a \in A$ we have $F(n
\11_a) = (n-1)^+$. Then the function
	\[ G(n) := F(n \11_a + \uu+\11_S) - F(\uu+\11_S), \] is
ZILP with linear part $L(n)=n$.  By
Lemma~\ref{l.zilpequalslinear}, $G(n)=n$ for infinitely
many $n$.  This yields the required contradiction, since $n
> F(n \11_a) \geq G(n)$ for all $n \geq 1$.
\end{proof}

\old{ A directed acyclic network of recurrent components
and delayers, if it sends any output at all, must be
nondegenerate: if it computes $f(x) = Lx + P(x)$ then
either $f \equiv 0$ or $L>0$. The proof is by induction on
the number of components: if some component sends output
then it must receive input, so by the inductive hypothesis
it can be made to receive an arbitrarily large input and
hence send an arbitrarily large output.  So a presink
cannot be emulated by such a network. }

\section{Open problems}
\label{s.open}

\subsection{Floor depth}

Let us define the \emph{floor depth} of a ZILP function
%(really ILP, that is, a ZILP function without the requirement that $F(\zero)=\zero$)
as the minimum number of nested floor functions in a formula for it. More
precisely, let $\mathcal{R}_0$ be the set of $\N$-affine functions $\N^k \to
\N$, and for $n \geq 1$ let $\mathcal{R}_n$ be the smallest set of functions
closed under addition and containing all functions of the form
$\floor{f/\lambda}$ for $f \in \mathcal{R}_{n-1}$ and positive integer
$\lambda$.  The floor depth of $f$ is defined as the smallest $n$ such that
$f \in \mathcal{R}_n$.

If $f$ is computed by a directed acyclic network of splitters, adders and
topplers, then the proof of Corollary~\ref{t.functions} in
Section~\ref{s.splitteradder} shows that the floor depth of $f$ is at most
the maximum number of topplers on a directed path in the network. Hence, by
the construction of the emulating network in Section~\ref{s.recurrent}, every
ZILP function $\N^k \to \N$ has floor depth at most $k$.
%finite $k$-input recurrent abelian processor computes a function with floor depth at most $k$.
Is this sharp?

\subsection{Unprimed topplers}

What class of functions $\N^k \to \N$ can be computed by a
directed acyclic network of splitters, adders and unprimed
topplers?

% Such a function must be subadditive. Also, it must be of the form L+P with L linear, P periodic and P \leq 0.

\subsection{Conservative gates}

Call a finite abelian processor \emph{conservative} if, in the matrix of the
linear part of the function it computes, each column sums to $1$.  We can
think of the input and output letters of such a processor as
indistinguishable physical objects (\emph{balls}) that are conserved,
%[
as in a sandpile or rotor-router model with no sources or sinks.
%]
An internal state represents a configuration of (a bounded number of) balls
stored inside the processor. Splitters and topplers are not conservative:
splitters create balls while topplers consume them.
%[
But a sandpile node -- which distributes $k$ particles each time it receives $k$ -- is conservative,
even though we would ordinarily emulate one using a toppler and splitters.
%]
A finite network of
conservative abelian processors with no trash edges emulates a single
conservative processor (provided the network halts). Find a minimal set of
conservative gates that allow any finite conservative abelian processor to be
emulated.
% Candidate set of gates: adder, directed sandpile (i.e. a lambda-toppler followed by a lambda-splitter),
% and presink-delayer (sends first ball left and the rest right)

\subsection{Gates with infinite state space}

Each of the following functions $\N^2 \to \N$
	\begin{align*}
 	(x,y) &\mapsto \min(x,y) \\
	 (x,y) &\mapsto \max(x,y) \\
	 (x,y) &\mapsto xy \end{align*} can be computed by an abelian processor
with an infinite state space. In the case of $\min$ and $\max$ the state
space $\N$ suffices, with transition function $t_{(x,y)}(q) = q+x-y$. The
product $(x,y) \mapsto xy$ requires state space $\N^2$,
% with $t_{(x,y)}(q) = q+(x,y)$.
% (or, alternatively, a network of 2 processors each with state space $\N$)
as well as unbounded output: for example, when it receives input $\basis_1$
in state $(x,y)$ it transitions to state $(x+1,y)$ and outputs $y$ letters.
What class of functions can be computed by an abelian network (with or
without feedback) whose components are finite abelian processors and a
designated subset of the above three? Such functions have an $L+P$
decomposition where the $L$ part is piecewise linear, polynomial or piecewise
polynomial (Table~\ref{table.L+P}). \old{ Integrality becomes an interesting
constraint in this setting. For example, the function $x \mapsto (x^2+x)/2$
is $\N$-valued for $x \in \N$. It can obviously be computed using a
$2$-toppler, adder and multiplier, but can it be computed by a network of
immutable components and multipliers? }

	\begin{table}
	\begin{center}
	\begin{tabular}{|c|c|}
    \hline
	L & P \\
	\hline
	linear & zero \\
	piecewise linear & eventually constant \\
	polynomial & periodic \\
	piecewise polynomial & eventually periodic \\
\hline
	\end{tabular}
	\end{center}
	\medskip
	\caption{Sixteen ($4\times 4$) types of $L+P$ decomposition
for an increasing function $\N^k \to \N^\ell$.}
% linear part could also be zero. In that case, in the right column periodic implies zero and eventually periodic implies eventually constant.
	\label{table.L+P}
	\end{table}

\section*{Acknowledgments}

We thank Ben Bond, Sergey Fomin and Jeffrey Lagarias for
inspiring conversations, and Swee Hong Chan for carefully reading an early draft.

\end{document}